\newtheorem{hyp}{Hypothesis}
\newsavebox\CPU
\sbox\CPU{
	\begin{tikzpicture}[line cap=round,line join=round,scale=1]
		
		\draw[thick,rounded corners,fill=white] (0,0) rectangle ++ (1.2,1.2);
		\draw[thick,rounded corners] (0.15,0.15) rectangle ++ (0.9,0.9);
		\draw[draw,thick] (0.6,0) -- (0.6,-0.2);
		\draw[draw,thick] (0.4,0) -- (0.4,-0.2);
		\draw[draw,thick] (0.8,0) -- (0.8,-0.2);
		
		\draw[draw,thick] (0.6,1.2) -- (0.6,1.4);
		\draw[draw,thick] (0.4,1.2) -- (0.4,1.4);
		\draw[draw,thick] (0.8,1.2) -- (0.8,1.4);
		
		\draw[draw,thick] (0,0.4) -- (-0.2,0.4);
		\draw[draw,thick] (0,0.6) -- (-0.2,0.6);
		\draw[draw,thick] (0,0.8) -- (-0.2,0.8);
		
		\draw[draw,thick] (1.2,0.4) -- (1.4,0.4);
		\draw[draw,thick] (1.2,0.6) -- (1.4,0.6);
		\draw[draw,thick] (1.2,0.8) -- (1.4,0.8);
		
		\node[] at (0.6,0.6) {\small{\textsf{CPU}}};
\end{tikzpicture}}
\tikzstyle{qdev} = [draw, rectangle, 
\tikzstyle{anch} = [fill=none, minimum width=3em, minimum height=0.1cm, node distance=1cm]
\tikzstyle{jam1} = [draw, rectangle, dashed, line width=0.5pt, fill=blue!35,
\tikzstyle{jam2} = [draw, rectangle, dashed, line width=0.5pt, fill=blue!35,
\tikzstyle{jam3} = [draw, rectangle, dashed, line width=0.5pt, fill=blue!35,  minimum width=3em, minimum height=4.5cm, rounded corners]
\tikzstyle{ssfb} = [draw, rectangle, dotted, line width=0.75pt, fill=black!10, minimum width=3em, minimum height=3cm, rounded corners]
\tikzstyle{vqfe} = [draw, rectangle, dashdotted, line width=0.5pt, fill=yellow!75, minimum width=3em, minimum height=1in, rounded corners]
\tikzstyle{vqfe-short} = [draw, rectangle, dashdotted, line width = 0.5pt,    fill=yellow!75, minimum width=3em, minimum height=0.5in, rounded corners]
\tikzstyle{cpu} = [node contents=\usebox{\CPU},scale=0.75,]
\newcommand{\tr}{\ensuremath{\mathrm{tr}}}
\newcommand{\Tr}{\tr}
\newcommand{\id}{\mathds{1}}
\newcommand{\ie}{\emph{ie.}}
\newcommand{\Jam}[1]{\ensuremath{\mathcal{J}(#1)}}
\newcommand{\figref}[1]{Fig.~(\ref{#1})}
\renewcommand{\eqref}[1]{Eq.~(\ref{#1})}
\newcommand{\DOI}[1]{DOI:\href{https://doi.org/#1}{#1}}
\newcommand{\arXiv}[1]{arXiv:\href{https://arXiv.org/abs/#1}{#1}}
\newtheorem{prop}{Proposition}
\newtheorem{definition}{Definition}[section]
\renewcommand{\ket}[1]{\ensuremath{|#1\rangle}}
\renewcommand{\bra}[1]{\ensuremath{\langle#1|}}
\newcommand{\ii}{\ensuremath{\mathrm{i}}}
\newenvironment{proof}[1][Proof]{\noindent\textbf{#1.} }{\hfill 
    \rule{0.5em}{0.5em}}
\definecolor{codegreen}{rgb}{0,0.6,0}
\definecolor{codegray}{rgb}{0.5,0.5,0.5}
\definecolor{codepurple}{rgb}{0.58,0,0.82}
\definecolor{backcolour}{rgb}{0.95,0.95,0.92}
\lstdefinestyle{mystyle}{
    backgroundcolor=\color{backcolour},   
    commentstyle=\color{codegreen},
    keywordstyle=\color{magenta},
    numberstyle=\tiny\color{codegray},
    stringstyle=\color{codepurple},
    basicstyle=\ttfamily\footnotesize,
    breakatwhitespace=false,         
    breaklines=true,                 
    captionpos=b,                    
    keepspaces=true,                 
    numbers=left,                    
    numbersep=5pt,                  
    showspaces=false,                
    showstringspaces=false,
    showtabs=false,                  
    tabsize=2
}
\newcommand{\listnotename}{List of TODOs}
\begin{document}
\singlespace

\frontmatter

\pagenumbering{arabic}

\begin{titlepage}
	\begin{center}
		\hrule
		\hrule
		\vspace{.5em}
		\textsc{\Large Reinforcement learning-assisted quantum architecture search for variational quantum algorithms}
		\vspace{.5em}
		\hrule
		\hrule
		\vspace{1.5em}
		\textsc{\large Doctoral dissertation}\\
		\vspace*{1em}
		{\large mgr Akash \textsc{Kundu}}
		{\centering
			\includegraphics[width=0.9\textwidth]{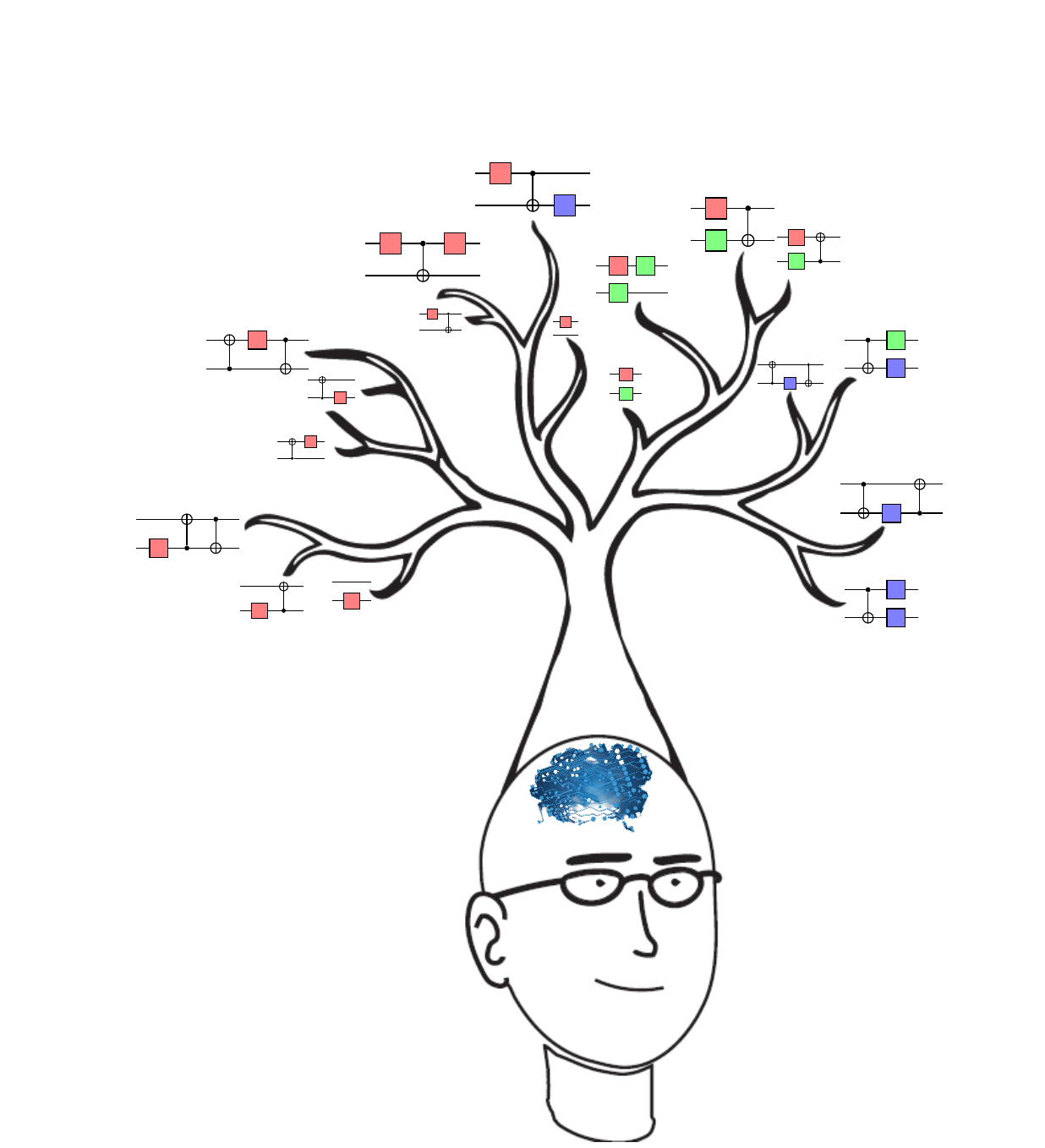}}
		\vfill
		
	\end{center}
{\textbf{Idea of illustration:} Akash Kundu,\\ \textbf{Illustrator:} Ludmila Botelho.}
\end{titlepage}

\begin{titlepage}
	\begin{center}
		\includegraphics[width=0.4\textwidth]{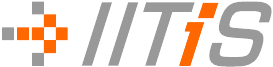}\\
		\vspace{0.5em}
		\textsc{\large Institute of Theoretical and Applied Informatics, Polish 
			Academy of Sciences}
		\vspace*{1in}
		\hrule
		\vspace*{0.5em}
		\textsc{\Large Reinforcement learning-assisted quantum architecture search for variational quantum algorithms}
		\vspace*{0.5em}
		\hrule
		\vspace*{1em}
		\textsc{\large Doctoral dissertation}
		\par
		\vspace{1.5in}
		{\large mgr Akash \textsc{Kundu}}\\
		Supervisor:\\ dr hab. Jaros\l{}aw \textsc{Miszczak}\\
		\vspace{3em}
		\textsc{In partial fulfillment of the requirements for the degree of Doctor of Philosophy}
		\vfill
		{Gliwice, \today}
	\end{center}
\end{titlepage}

\setcounter{page}{4}

\cleardoublepage

\tableofcontents

\chapter*{List of publications}\addcontentsline{toc}{chapter}{List of publications}
\markboth{\MakeUppercase{List of publications}}{\MakeUppercase{List of publications}}

Publications relevant to this dissertation are highlighted using \textbf{bold font}.


\vspace{0.2cm}
\begin{spacing}{0.9}
	\begin{enumerate}
		
		{\bf \item  \textbf{Akash Kundu}, Przemys\l{}aw Bede\l{}ek, Mateusz Ostaszewski, Onur Danaci, Yash J. Patel, Vedran Dunjko, Jaros\l{}aw A. Miszczak; 
			\emph{Enhancing variational quantum state diagonalization using reinforcement learning techniques},
			New Journal of Physics, Vol. 26, pp.~013034 (2024)},
		\arXiv{2306.11086}, 
		\DOI{10.1088/1367-2630/ad1b7f}
		\newline
		{Code}: \url{https://github.com/iitis/RL_for_VQSD_ansatz_optimization}
		
		{\bf \item Yash J. Patel,  \textbf{Akash Kundu}, Mateusz Ostaszewski, Xavier Bonet-Monroig, Vedran Dunjko, Onur Danaci; 
			\emph{Curriculum reinforcement learning for quantum architecture search under hardware errors}; In The Twelfth International Conference on Learning Representations, 2024, \url{https://openreview.net/forum?id=rINBD8jPoP}.} \arXiv{2402.03500}
		\newline
		{Code}: \url{https://anonymous.4open.science/r/CRLQAS}
		
		{\bf \item  \textbf{Akash Kundu}, Jaros{\l}aw A. Miszczak, 
			\emph{Variational certification of quantum devices}; Quantum Science and Technology, Vol.~7, No.~4, pp~045017 (2022)}, \DOI{10.1088/2058-9565/ac8572}, \arXiv{2011.01879}.
		\newline
		{Code}: \url{https://github.com/iitis/variational_channel_fidelity}
		
		\item  \textbf{Akash Kundu}, Ludmila Botelho, Adam Glos, 
		\emph{Hamiltonian-oriented homotopy quantum approximate optimization algorithm}; Physical Review A, Vol. 109, pp.~022611 (2024), arXiv:{2301.13170} \DOI{10.1103/PhysRevA.109.022611}.
		\newline
		{Code}: \url{https://doi.org/10.5281/zenodo.7585691}
		
		\item Ludmila Botelho, Adam Glos,  \textbf{Akash Kundu}, Jaros{\l}aw Adam Miszczak, {\"O}zlem Salehi, Zolt{\'a}n Zimbor{\'a}s, 
		\emph{Error mitigation for variational quantum algorithms through mid-circuit measurements}; Physical Review A, Vol. 105, No. 2, pp.~022441 (2022), arXiv:{2108.10927} \DOI{10.1103/PhysRevA.105.022441}.
		\newline
		{Code}: \url{https://github.com/iitis/method-of-continuation-qaoa}

		\item Krzysztof Domino,  \textbf{Akash Kundu}, {\"O}zlem Salehi, Krzysztof Krawiec,
		\emph{Quadratic and higher-order unconstrained binary optimization of railway rescheduling for quantum computing}; Quantum Information Processing, vol. 21, No. 9, pp. 337 (2022), \DOI{10.1007/s11128-022-03670-y}.
		\newline
		{Code}: \url{https://github.com/iitis/railways_HOBO}
		
		\item Adam Glos,  \textbf{Akash Kundu}, {\"O}zlem Salehi,
		\emph{Optimizing the Production of Test Vehicles Using Hybrid Constrained Quantum Annealing}, SN Computer Science, Vol.~4, 609 (2023), \DOI{10.1007/s42979-023-02071-x}.
		\newline
		{Code}: \url{https://github.com/iitis/bmw-qchallenge}

		\item  \textbf{Akash Kundu}, Jaros\l{}aw A. Miszczak,
		\emph{Transparency and Enhancement in Fast and Slow Light in q-Deformed Optomechanical System}, Annalen der Physik, Vol.~534, No.~8, pp.~2200026 (2022), \DOI{10.1002/andp.202200026}, \arXiv{2205.15800}.
		
		\item Jia-Xin Peng,  \textbf{Akash Kundu}, Zeng-Xing Liu, Atta ur Rahman, Naeem Akhtar, and M. Asjad,
		\emph{Vector photon-magnon-phonon coherence in a polarized microwave driven cavity magnomechanical system
		}, Physical Review B, Vol.~109, pp~064412 (2024), \DOI{10.1103/PhysRevB.109.064412}.
		
		\item  \textbf{Akash Kundu}, Chao Jin, Jia-Xin Peng,
		\emph{Study of the optical response and coherence of a quadratically coupled optomechanical system}, Physica Scripta, Vol.~96, No.~6, pp~065102 (2021), \DOI{10.1088/1402-4896/abee4f}.
		
		\item Hao-Jie Cheng, Shu-Jie Zhou, Jia-Xin Peng,  \textbf{Akash Kundu}, Hong-Xue Li, Lei Jin, Xun-Li Feng,
		\emph{Tripartite entanglement in a Laguerre--Gaussian rotational-cavity system with an yttrium iron garnet sphere},
		Journal of the Optical Society of America B, Vol.~38, No.~2, pp.~285--293 (2021), \DOI{10.1364/JOSAB.405097}.
		
		\item  \textbf{Akash Kundu}, Chao Jin, Jia-Xin Peng,
		\emph{Study of the optical response and coherence of a quadratically coupled optomechanical system}, Physica Scripta, Vol.~96, No.~6, pp~065102 (2021), \DOI{10.1088/1402-4896/abee4f}.
		
		\item  \textbf{Akash Kundu}, Chao Jin, Jia-Xin Peng, 
		\emph{Optical response of a dual membrane active--passive optomechanical cavity}; Annals of Physics, vol. 249, pp.~168465 (2021), \DOI{10.1016/j.aop.2021.168465}.
		
		\item  \textbf{Akash Kundu}, SD Pathak, VK Ojha; 
		\emph{Interacting tachyonic scalar field}, Communications in Theoretical Physics, Vol.~38, No.~2, pp~285--293 (2021), \DOI{10.1088/1572-9494/abcfb1}.
		
		\item Turbasu Chatterjee, Shah Ishmam Mohtashim,  \textbf{Akash Kundu}; 
		\emph{On the variational perspectives to the graph isomorphism problem}, \arXiv{2111.09821 2021}.
		
		\item  \textbf{Akash Kundu}, Tamal Acharya, Aritra Sarkar;
		\emph{A scalable quantum gate-based implementation for causal hypothesis testing}, 2023, \DOI{10.48550/arXiv.2209.02016}.
		\newline
		{Code}: \url{https://github.com/Advanced-Research-Centre/QaCHT}
		
	\end{enumerate}
	
	
\end{spacing}




\newpage

\chapter*{Abstract in Polish}\addcontentsline{toc}{chapter}{Abstract in Polish}
\markboth{\MakeUppercase{Polish abstract}}{\MakeUppercase{Polish abstract}}

\begin{otherlanguage}{polish}
	
	
	Istotną przeszkodą w erze zaszumionych komputerów kwantowych średniej skali (ang. NISQ -- Noisy Intermediate-Scale Quantum) jest konstrukcja obwodów kwantowych, które pozwolą na wykonanie użytecznych algorytmów kwantowych i są zgodne z ograniczeniami narzuconymi przez obecne ograniczenia sprzętu kwantowego. Aby sprostać tym wyzwaniom w obecnie dostępnych urządzeniach kwantowych, opracowano wariacyjne algorytmy kwantowe (ang. VQA -- Variational Quantum Algorithms), które stanowią klasę hybrydowych algorytmów kwantowo-klasycznej dla problemów optymalizacji. Jednakże ogólna wydajność wariacyjnych algorytmów kwantowe zależy od (1) strategii inicjalizacji obwodu wariacyjnego, (2) struktury obwodu (znanej również jako ansatz) oraz (3) konfiguracji funkcji kosztu. Koncentrując się na (2), w tej pracy zaproponowane są metody poprawy wydajność wariacyjnych algorytmów kwantowych poprzez automatyzację wyszukiwania optymalnej struktury obwodów wariacyjnych za pomocą uczenia się ze wzmocnieniem (ang. RL -- Reinforcement Learning).
	W ramach pracy skupiamy się na określeniu optymalności obwodu poprzez ocenę jego głębokości, całkowitą liczbę bramek i parametrów oraz dokładności w rozwiązaniu zadanego problemu.
	Zadanie automatyzacji wyszukiwania optymalnych obwodów kwantowych znane jest jako wyszukiwanie architektury kwantowej (ang. QAS -- Quantum Architecture Search). Większość badań w zakresie wyszukiwania architektury kwantowej koncentruje się na scenariuszu bezszumowym.
	W związku z tym wpływ szumu na proces wyszukiwania architektury pozostaje niewystarczająco zbadany. W tej pracy zajmujemy się tym problemem poprzez wprowadzenie techniki łączącej kodowanie obwodów kwantowych opartego na tensorach, ograniczenie dynamiki środowiska w celu efektywnego badania przestrzeni poszukiwań możliwych obwodów, schemat zatrzymywania epizodów w celu nakierowania agenta na znajdź krótsze obwody, oraz poprzez wykorzystanie podwójnie głęboką sieć Q (DDQN) z polityką $\epsilon$ dla lepszej stabilności. Eksperymenty numeryczne na bezszumowym i zaszumionym sprzęcie kwantowym pokazują, że w radzeniu sobie z wybranymi algorytmami wariacyjnymi, zaproponowana metoda wyszukiwania architektury przewyższa istniejący metody. Dodatkowo metody, które proponujemy w pracy, można zostać łatwo zaadoptowane do szerokiego zakresu innych VQA.
\end{otherlanguage}

\newpage
\chapter*{Abstract in English}\addcontentsline{toc}{chapter}{Abstract in 
	English}
\markboth{\MakeUppercase{English abstract}}{\MakeUppercase{English abstract}}



A significant hurdle in the noisy intermediate-scale quantum (NISQ) era is identifying functional quantum circuits. These circuits must also adhere to the constraints imposed by current quantum hardware limitations. Variational quantum algorithms (VQAs), a class of quantum-classical optimization algorithms, were developed to address these challenges in the currently available quantum devices. However, the overall performance of VQAs depends on the initialization strategy of the variational circuit, the structure of the circuit (also known as ansatz) and the configuration of the cost function. Focusing on the structure of the circuit, in this thesis, we improve the performance of VQAs by automating the search for an optimal structure for the variational circuits using reinforcement learning (RL). 
Within the thesis, the optimality of a circuit is determined by evaluating its depth, the overall count of gates and parameters, and its accuracy in solving the given problem.
The task of automating the search for optimal quantum circuits is known as quantum architecture search (QAS). The majority of research in QAS is primarily focused on a noiseless scenario.
Yet, the impact of noise on the QAS remains inadequately explored. In this thesis, we tackle the issue by introducing a tensor-based quantum circuit encoding, restrictions on environment dynamics to explore the search space of possible circuits efficiently, an episode halting scheme to steer the agent to find shorter circuits, a double deep Q-network (DDQN) with an $\epsilon$-greedy policy for better stability. The numerical experiments on noiseless and noisy quantum hardware show that in dealing with various VQAs, our RL-based QAS outperforms existing QAS. Meanwhile, the methods we propose in the thesis can be readily adapted to address a wide range of other VQAs.

\mainmatter


\setcounter{page}{13}

\chapter*{Abbreviations used in thesis}
\addcontentsline{toc}{chapter}{Abbreviations used in thesis}
\begin{table}[h!]
	\centering
	\scalebox{0.8}{
	\begin{tabular}[scale=0.8]{c|c||c|c}
		Abbreviation    & Full form & Abbreviation & Full form \\
		\hline
		RL  & Reinforcement learning  &   RHS  &  Right hand side  \\
		\hline
		NISQ & Noisy intermediate-scale quantum  &  LHS  &  Left hand side  \\
		\hline
		VQA & Variational quantum algorithm  & BFGS  &  Broyden–Fletcher–Goldfarb–Shanno
		\\
		\hline
		PQC & Parametric quantum circuit  & ADAM  &  Adaptive moment estimation
		\\
		\hline
		QAS  &  Quantum architecture search  & \makecell{DNN\\(in Fig.~\ref{fig:rl_vqsd_diagram})}  &  Deep neural network  \\
		\hline
		DQN  &  Deep Q-network  & QPE  &  Quantum phase estimation  \\
		\hline
		DDQN  &  Double Deep Q-network  & IQFT  &  \makecell{Inverse quantum fourier\\transform}  \\
		\hline
		CNOT/CX  &  Controlled NOT  & $H_2$  &  Hydrogen  \\
		\hline
		CRL  &  Curriculum reinforcement learning  & $LiH$  &  Lithium hydride \\
		\hline
		VQE  &  Variational quantum eigensolver  & $H_2O$  &  Hydrogen oxide  \\
		\hline
		CRLVQE  &  \makecell{Curriculum reinforcement learning \\ for variational quantum eigensolver}  &GPU  &  Graphical processing unit  \\
		\hline
		VQSD  &  \makecell{Variational quantum state\\diagonalization}   &CPU  &  Central processing unit  \\
		\hline
		ML  &  Machine learning  &TBE  &  Tensor based encoding  \\
		\hline
		NN  &  Neural network  &IE  &  Integer encoding  \\
		\hline
		UCC  &  Unitary coupled cluster  &cq/ctrl  & Control  \\
		\hline
		UCCSD  &  \makecell{Unitary coupled cluster\\single and double}  &CPTP  &  \makecell{Completely positive trace\\preserving}  \\
		\hline
		HEA  &  Hardware efficient ansatz  & RH  &  Random halting  \\
		\hline
		LHEA  &  \makecell{Layered hardware efficient \\ansatz}  & wo-RH  &  Without random halting  \\
		\hline
		QAOA  &  \makecell{Quantum approximate optimization\\ algorithm}  & SPSA  &  \makecell{Simultaneous perturbation stochastic\\ approximation
		}  \\
		\hline
		MDP  &  Markov decision process  &VQFE  &  \makecell{Variational quantum fidelity\\estimation}  \\
		\hline
		TD  &  Temporal difference  &PTM  &  Pauli-transfer matrix  \\
		\hline
		FQI  &  Fitted Q-iteration  & SSFB  &  Sub and super fidelity bounds  \\
		\hline
		qPCA  &  \makecell{Quantum principle component\\ analysis}  &tq/targ  &  Target \\
		\hline
		COBYLA  &  \makecell{Constrained Optimization by\\ Linear Approximation}  &FB  &  Fidelity bound  \\
		\hline
		TFB  &  Truncated fidelity bound  & CZ  &  Controlled Z\\
		\hline
		RL-VQE  &  \makecell{Reinforcement learning assisted\\variational quantum eigensolver}  & JIT  & Just in time
	\end{tabular}}
\end{table}

\chapter{Introduction}\label{ch:intro}
Quantum computing leverages the principles of quantum mechanics to gain a distinct advantage in information processing. Ongoing worldwide endeavours are actively striving to materialize a sufficiently large, controllable, and programmable quantum computer. Corporations such as  Google~\cite{castelvecchi2017quantum}, IBM~\cite{IBMcompute}, Rigetti~\cite{Rigetticompute}, and Intel~\cite{Intelcompute} are utilizing superconducting qubits where the quantum processing unit utilizes a superconducting architecture. Whereas, Honeywell~\cite{Honeywellcompute} and IonQ~\cite{IonQcompute} utilize ion traps as quantum processors where charged atoms, i.e. ions, are used as qubits due to the fact that ions can be trapped in one precise location with the help of electric fields. Meanwhile, D-Wave~\cite{DWavecompute} quantum computers are based on quantum annealing~\cite{das2008colloquium}. The qubits are made from tiny superconducting loops.


In the early and late 90s, pure quantum algorithms were introduced, such as Shor's~\cite{shor1994algorithms}, which is used to find the prime factor of integers, and Grover's~\cite{grover1996fast} algorithm, which is used to search a unique input from an unstructured dataset. To reveal the true potential of these algorithms and achieve quantum advantage, we require quantum hardware with thousands to millions of qubits. Unfortunately, the current quantum devices are of small scale with 5-200 qubits, noise prone, and have constrained connectivity between qubits. These devices are called Noisy intermediate-scale quantum (NISQ) computers~\cite{preskill2018quantum}. At the time of this thesis (2021-2023), no quantum devices exist that can execute quantum algorithms demonstrating a provable quantum advantage for real-world use cases.

To deal with these limitations and exploit the NISQ devices, variational quantum~algorithms (VQAs)~\cite{mcclean2016theory} was introduced, where the task of solving a quantum problem is distributed into quantum and classical computers. The VQAs are fundamentally comprised of three essential components: a parametrized quantum circuit (PQC) or ansatz, a cost function that encodes the problem, and a classical optimization procedure responsible for adjusting the PQC's parameters in order to minimize the cost function. Ongoing research efforts are dedicated to exploring and comprehending the potential of each of these building blocks within the realm of VQAs~\cite{cerezo2020variational}.

The conventional approach for creating an ansatz involves predefining its structure before getting underway with the algorithm. Based on the user's ambition, the structure of the ansatz can be driven by physical considerations~\cite{peruzzo2014variational} or hardware constraints~\cite{kandala2017hardware}.

Nonetheless, fixing the ansatz's structure imposes a significant restriction on exploring the cost function landscape and can prevent us from reaching the true solution. Sophisticated methods~\cite{botelho2022error, kundu2023hamiltonian, grimsley2019adaptive, bako2022near, glos2022space} have been introduced to enhance the performance of VQAs. 
Meanwhile, to avoid these limitations, recent attention has been directed towards automating the construction of ansatz~\cite{grimsley2019adaptive, tang2021qubit, tang2021qubit, anastasiou2022tetris, zhang2022differentiable}, known as quantum architecture search (QAS). The QAS eliminates the need for domain-specific expertise, and it has the ability to produce an ansatz tailored for specific VQAs. Given a finite set of quantum gates, the objective of QAS is to discover the optimal arrangement of quantum operators in the form of an ansatz that minimizes the cost function. The readers are encouraged to check the section~\ref{app:overview_of_architecture_search_methods} for an overview of recently proposed methods for QAS.

A solution to address the challenges in QAS involves the application of reinforcement learning (RL) techniques, as proposed in~\cite{ostaszewski2021reinforcement, fosel2021quantum, kuo2021quantum, ye2021quantum}. In the RL-based QAS methods, the cost function is optimized independently using a classical optimizer, providing an intermittent signal contributing to the cumulative reward function. This reward function, in turn, updates a policy that aims at maximizing expected returns. Based on the return, the RL-agent selects an optimal action for subsequent steps.

Meanwhile, to minimize the adverse effects of gate errors, constrained connectivity, and decoherence, it is crucial to design an ansatz that uses as few quantum gates and computational steps as possible and is as shallow as possible in terms of their depth. Gate errors, restricted connectivity, and decoherence are common challenges in NISQ devices that cause inaccuracies and lead to the loss of quantum information. By keeping the circuits gate-frugal (using fewer gates) and shallow (reducing the depth), quantum computations can be more resilient and less susceptible to the negative impacts of these quantum computing issues. This approach enhances the stability and reliability of quantum algorithms and makes them more suitable for practical applications.

In this thesis, we propose and analyze an RL-based QAS method whose agent operates on a double deep-Q network (DDQN) and an $\epsilon$-greedy policy. The proposed QAS method improves the performance of the RL-agent to not only propose a very compact ansatz with fewer gates, depth, and number of parameters but also return a very low error while solving the problem. To achieve these results, we utilize:
\begin{enumerate}
	\item \textbf{A tensor-based encoding scheme} to encode the ansatz as an observable for the RL-agent.
	\item \textbf{A one-hot encoding scheme} to construct the action space. Each action is represented by a parameterized rotation in either X, Y, or Z direction or a controlled-NOT (\texttt{CNOT}) gate.
	\item \textbf{A dense and a sparse reward function} to encode the cost function.
	\item \textbf{An illegal action scheme} to impose restrictions on the environment dynamics to explore the search space of possible ansatz efficiently.
	\item \textbf{A random halting scheme} is an episode halting scheme to encourage the agent to find a shorter gate and depth ansatz.
\end{enumerate}

These components are briefly discussed in upcoming sections. Using these, we automate the search for an optimal ansatz that finds the ground state of molecules using a curriculum reinforcement learning-based variational quantum eigensolver (CRLVQE). We also utilize a reinforcement learning enhanced variational quantum state diagonalization (VQSD) algorithm that finds the optimal structure of an ansatz that diagonalizes an arbitrary quantum state.

Our results correspond to achieving an error $10^{3}$ times lower than that of the previously proposed ansatz using less than half of the gates, indicating that the RL-agent enhances the exploration of the optimization of the cost function landscape leading to the design of smaller ansatz. These developments are summarized in the following hypothesis:
\begin{hyp}\label{hyp:1st_hypothesis}
	Utilizing reinforcement learning techniques enhances the exploration of the optimization landscape and leads to ansatz designs for variational quantum algorithms (VQAs) with minimal gate count, depth, and parameters while consistently achieving a low error level in cost function evaluation.
\end{hyp}

The \textbf{Hypothesis~\ref{hyp:1st_hypothesis}} deals with the performance of our RL-based algorithm under an idealistic scenario in the absence of any kind of device noise. Meanwhile, most algorithms for QAS have been formulated under the assumption of a noiseless scenario, free from physical noise and under consideration of all-to-all qubit connectivity, and there has been very little progress in automating the QAS problem~\cite{du2022quantum}. However, in order to make the algorithm physically realizable, it is important to show that the QAS algorithm can solve the problem and provide us with a compact ansatz in the presence of constraints imposed by current NISQ devices, characterized by limited qubit connectivity and susceptibility to noise.

Noise imposes severe effects on the cost function landscape, and noise can cause the optimization process to get stuck in regions of the landscape where the gradients are vanishing to guide further progress, thereby hindering the overall optimization effort~\cite{wang2021noise}. The phenomena of having a flat or extremely shallow cost function landscape are known as \textit{barren plateau}~\cite{mcclean2018barren}. Meanwhile, non-unitary noisy channels such as amplitude damping, decoherence\footnote{A coherence is a phase damping channel that belongs to a class of doubly stochastic channels~\cite{helm2009quantum}.}, and thermal relaxation transform the cost function landscape by exponentially converting the global minima into local optima. This increases the complexity of the optimization process~\cite{fontana2022non}. Hence, performing QAS in the presence of noise is a critical step toward understanding and ultimately overcoming the challenges posed by real-world quantum hardware and advancing the field of quantum architecture search on NISQ devices.


In this thesis, we utilize a curriculum-based RL method for QAS and show that our algorithm can efficiently solve quantum chemistry problems by finding the ground state of molecules in a realistic noisy scenario with very few gates and lower energy error in estimating the ground state.

These results can be summarized through the following hypothesis:

\begin{hyp}\label{hyp:2nd_hypothesis}
	Combining binary encoding for a quantum circuit, action space pruning, a variable episode length, and curriculum-based reinforcement learning methods can enhance the agent's learning, facilitating the quantum chemistry problem solution irrespective of quantum hardware noise and connectivity constraints.
\end{hyp}

The majority of research in QAS
focuses on a noiseless scenario. Yet, the impact of noise on the QAS
remains inadequately explored. Through \textbf{Hypothesis~\ref{hyp:2nd_hypothesis}}, we stress filling this gap by solving the ground state finding problem of difference molecules in realistic noisy noise scenarios. These realistic scenarios are imported from IBM quantum devices such as \texttt{ibmq\_mumbai} and \texttt{ibmq\_ourense}.

The rest of the thesis is organized as follows. Chapter~\ref{ch:preliminaries} initiates by giving a brief introduction to variational quantum algorithms, including their components (such as ansatz and cost function) and the basics of reinforcement learning with suitable examples. While introducing RL, we primarily focus on model-free learning and algorithms. At the end of this Chapter, we discuss various approaches to QAS and provide an overview of our RL-based QAS approach. In Chapter~\ref{ch:vqsd_using_rl}, we discuss an RL-assisted variational quantum state diagonalization, namely the RL-VQSD algorithm. The main task of RL-VQSD is to find a compact structure of an ansatz that diagonalizes an arbitrary quantum state with very low error in eigenvalue and eigenvector estimation. The Chapter starts by benchmarking the existing quantum state diagonalization algorithm. Afterwards, we show that the RL-VQSD proposed a more compact ansatz containing a smaller gate and lower depth compared to state-of-the-art diagonalization algorithms in a noiseless scenario. The Chapter ends by providing a pointwise summary of the algorithm and results. In Chapter~\ref{ch:crlvqe}, we utilize a curriculum-based reinforcement learning (CRL) approach to find the ground state of molecules under realistic noisy scenarios. We show that the ansatz proposed by the CRL-agent is more compact in terms of the number of gates and depth as well as provides better accuracy compared to existing variational quantum eigensolver. This is followed by a thorough discussion of how different novel schemes, RL-state encoding, and the choice of reward function impact the performance of the CRLVQE algorithm in various realistic noisy scenarios. The Chapter ends by giving a pointwise summary of the algorithm and the results. In the following Chapter~\ref{ch:vqcd_application_rl_vqsd}, we discuss a quantum state diagonalization-based quantum channel certification method. The method primarily depends on the Choi-Jamio\l{}kowski isomorphism and variational quantum fidelity estimation~\cite{cerezo2020variational} algorithm. As the variational quantum state diagonalization is an indispensable part of our quantum channel certification process, we discuss the possible application of RL-VQSD. This includes how it can improve the channel certification method and make it currently available quantum device friendly by deciding the number of gates and depth of diagonalizing quantum circuits. In Chapter~\ref{ch:discussions}, we summarise the thesis. Finally, in Chapter~\ref{ch:conclusions}, we evaluate the merits and drawbacks of our methodology and conclude the results in the thesis. Furthermore, we delve into potential avenues for future exploration and development.




\chapter{Preliminaries}\label{ch:preliminaries}


This chapter introduces the basic concepts and notation used in the thesis. We start the chapter by introducing variational quantum algorithms (VQAs). This introduction includes a detailed discussion of various components of VQAs relevant to understanding the later concepts of the thesis. For example, we briefly define the construction of the \textit{cost function} and its importance in finding the optimal solution for a problem. This discussion is followed by elaborating on various \textit{ansatz} (also known as a variational quantum circuit) constructions. Through this discussion, we will see that the number of two-qubit gates and depth for the problem-inspired ansatz increases exponentially with the number of qubits. Meanwhile, for the problem agnostic ansatz, the constrained connectivity and the elevation in the number of parameters with increasing problem size cause trainability issues.

In the following section, we provide a detailed discussion of the basics of reinforcement learning with proper examples. For the sake of the thesis, we primarily focus on model-free learning methods. The model-free learning discussion includes Q-learning, the deep and double-deep Q-network with proper code blocks for implementation purposes. In the same section, we stress the discussion of the exploration and exploitation scenario.

We conclude the chapter by introducing the basic concepts and infrastructure of our RL-based quantum architecture search algorithm.

We encourage the reader to consult {Appendix~\ref{app:intro-quantum-computing}} for more details on quantum states, gates, and channels.

\section{Variational quantum algorithms}\label{sec:intro_vqa}


Over the past few decades, multiple scientific fields have collaborated in the exploration and advancement of quantum algorithms and their experimental implementation. However, while numerous quantum algorithms were initially proposed, the majority require millions of physical qubits to operate on quantum hardware. Unfortunately, contemporary quantum hardware is only capable of accommodating a few hundred physical qubits, which are classified as noisy intermediate-scale quantum (NISQ)~\cite{bharti2022noisy,preskill2018quantum} devices. 
\begin{figure}
	\includegraphics[width=\linewidth]{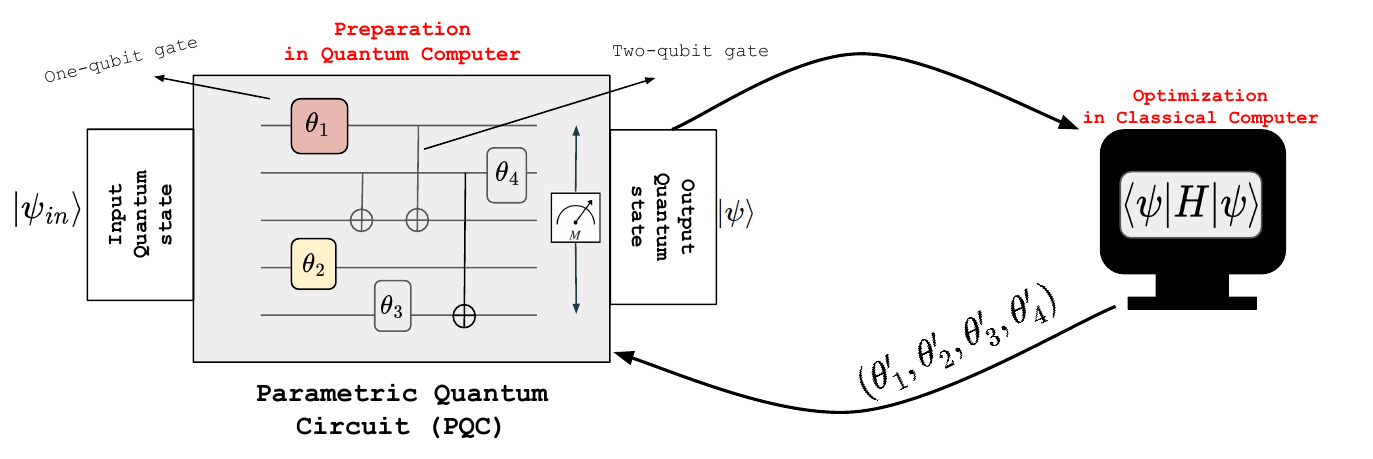}
	\caption{{The workflow of variational quantum algorithms (VQAs). The process starts with an input quantum state, which is processed through a series of one- and two-qubit gates within the parametric quantum circuit (PQC). The output is the evolved quantum state from the PQC. This contrasts with a classical subroutine that optimizes the PQC parameters using classical optimization methods to minimize a cost function encoding a problem. The optimized parameters are then fed back to update the PQC iteratively until the problem is solved}.}
	\label{fig:VQAs}
\end{figure}

In light of this, variational quantum algorithms (VQAs)~\cite{mcclean2016theory,cerezo2021variational}, illustrated in Figure~\ref{fig:VQAs}, represent a specific class of NISQ algorithms that can run on these devices, as they were specifically designed with such limitations in mind. VQAs utilize quantum hardware to run a parametric quantum circuit (PQC). {The PQC consists of a series of quantum gates whose operations depend on a set of tunable parameters, typically involving rotations and entangling gates like controlled-Z (\texttt{CZ}) or controlled-NOT (\texttt{CNOT} or \texttt{CX}) gates}. {The PQC generates a quantum state whose parameters are optimized using a classical optimization method}. If not optimized up to an expected threshold, the parameters are fed back to the PQC. This quantum-classical hybrid process helps us keep the depth of the quantum circuit low and mitigate the noise. The two basic components of a VQA are (1) PQC, which is famously known as an ansatz, and (2) the cost or loss or objective function that encodes the problem.

In the remaining subsection, we briefly describe the two basic aspects of VQA.

In the subsection \ref{sec:cost_function_introduced}, we briefly introduce the cost function by mentioning the main criterion one should consider while choosing a cost function. Then, in section~\ref{sec:ansatz_introduced}, we define the problem-inspired and agnostic structure of ansatzes and their advantages and disadvantages.

\subsection{Cost function}\label{sec:cost_function_introduced}
In classical machine learning (ML) methods, a cost function is introduced to evaluate the model's performance. The main objective of a model, then, is to determine the optimal set of model parameters, which indicates a global minimum of the cost function. Hence, the cost function is sometimes called an objective function. For the optimization procedure, either gradient-based or gradient-free optimization algorithms are frequently used.

VQAs serve as quantum analogues of machine learning techniques. A crucial operational aspect of these algorithms depends on the ability to encode problems effectively into cost functions. 
From a geometric perspective, a cost function is defined by a hyper-surface over the parameters—a cost landscape—whereas the optimizer's role involves traversing this landscape -- as depicted in~\figref{fig:cost_landscape} -- to locate the global minima. The~\figref{fig:cost_landscape} illustrates the optimization process within VQAs, highlighting the quest to navigate this parameterized space to achieve optimal solutions efficiently.

\begin{figure}[h]
	\centering   \includegraphics{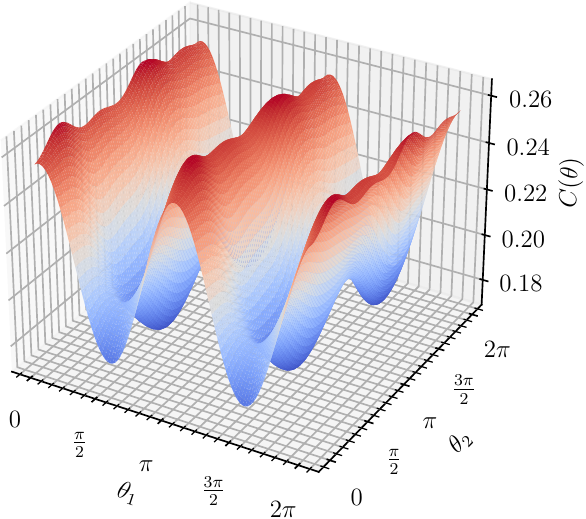}
	\caption{An illustration of the cost function landscape for the variational quantum state diagonalization algorithm~\cite{larose2019variational} with a two-qubit mixed quantum state. As an ansatz, we choose a layer of rotation \texttt{RY} and \texttt{RZ} on both the qubits, followed by a \texttt{CNOT} with control on the first qubit.}
	\label{fig:cost_landscape}
\end{figure}

One can express the cost function in the form
\begin{equation}
	C(\vec{\theta}) = \sum_j c_j\left( \textrm{Tr}\left( O_j\rho_j^\prime(\Vec{\theta}) \right)\right),
	\label{eq:general_cost_function}
\end{equation}
{where $\{c_j\}$ are set of functions. The choice of $\{c_j\}$ solely depends on the task in hand}. $\rho_j^\prime(\Vec{\theta}) = U( \vec{\theta})\rho_jU^\dagger(\vec{\theta})$, where $U(\vec{\theta})$ is the PQC with discrete or continuous vector of parameters $\vec{\theta}$. $\{\rho_j\}$ {represent the input quantum states, initialized based on the specific problem. For the variational quantum eigensolver (VQE), $\rho_j=|0\rangle^{\otimes n}$, where $n$ is the number of qubits. In contrast, for variational quantum state diagonalization (VQSD), $\rho_j$ is the state that needs to be diagonalized.} {$\{O_j\}$ is the set of observable,  which, for example, can be defined by a chemical Hamiltonian when using the VQE to find the ground state of a molecule}. An ideal cost function should follow a list of criteria discussed below.

\paragraph{Faithfulness}

If we have a problem under consideration, we need to formulate a cost function to solve it using a variational method. The \textit{faithfulness} of the cost function implies that its minimum must correspond to the solution of the problem. If we consider a problem $p$ and we define the cost function corresponds to the problem $C(\vec{\theta})$ then the solution to the problem ($p^*$) is given by
\begin{equation}
	p^* = \min_{\theta_j} C(\theta_j),
\end{equation}
if $C(\vec{\theta})$ is faithful cost function.
\paragraph{Efficiency}
From the term efficiency of a cost function, we indicate that one must be able to estimate it by performing measurements on a quantum device.
To maintain the validity of the quantum advantage, it is important to devise a cost function that proves computationally challenging for classical computers to compute.


For an example, for a quantum state, $\rho$, $\textrm{Tr}(\rho^2)$ defines the \textit{purity} of the state. Minimization of purity is a very useful primitive method to solve a range of problems that are relevant to quantum physics problems such as quantum state rank estimation~\cite{o2016efficient}, quantum state learning~\cite{lee2018learning, chen2021variational}, quantum device certification~\cite{kundu2022variational} and many more. It is also well-known that a quantum computer can find the purity of an $n$-qubit state with complexity scaling linearly in $n$, which gives exponential speed-up over classical computers~\cite{buhrman2001quantum,gottesman2001quantum}. So, a cost function of the form 
\begin{equation}
	C = \sum_jc_j(\textrm{Tr}(\rho_j^2)),
\end{equation}
is efficiently computable and provides a quantum advantage.

\paragraph{Trainability} Through trainability, we emphasize the fact that the cost function must be trainable, i.e. it should be possible to efficiently optimize the parameters of the cost function $\vec{\theta}$. One of the main hindrances of trainability is the occurrence of barren plateaus with increasing depth and number of qubits. Technically, a barren plateau is defined by the exponentially vanishing average partial derivatives of the cost function with the size of a quantum system~\cite{mcclean2018barren}. 
\begin{figure}[h]
	\centering
	\includegraphics[scale=0.8]{ 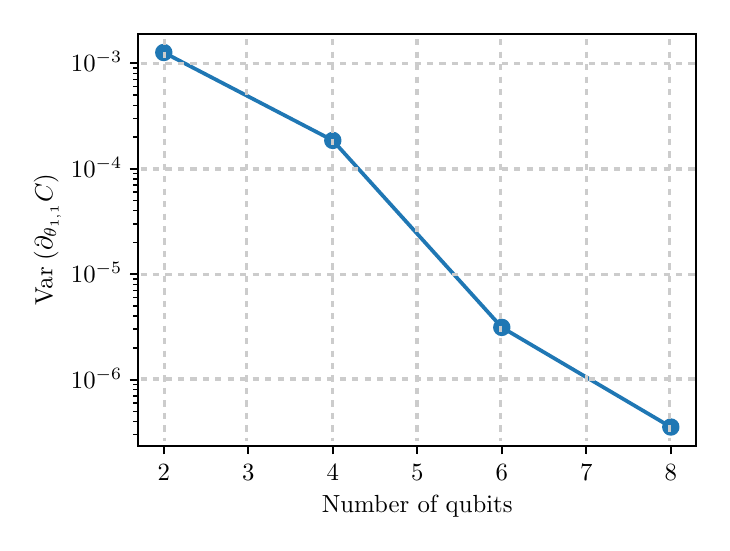}
	\caption{Exponential decay of the variance of the cost function gradient for quantum state diagonalization problem for three-qubit of depth 2 ansatz.}
	\label{fig:var-grad-diagonalization}
\end{figure}
This makes the landscape of the cost function essentially flat, so it requires exponentially enhanced precision to tackle the finite sampling noise and to determine the direction of the global minima. This is irrespective of the fact that one uses a gradient-based~\cite{cerezo2021higher} or gradient-free optimization method~\cite{arrasmith2021effect}.
In a recent work~\cite{cerezo2021cost}, the authors claim that the issue of barren plateau could be tackled by carefully reconstructing the cost function and by making it local. They propose two variants of cost functions in~\eqref{eq:general_cost_function} as
\begin{equation}
	C^\textrm{global}(\vec{\theta}) = \sum_j c_j \textrm{Tr}\left( O_j^\textrm{global}\rho_j^\prime(\Vec{\theta}) \right), \; \textrm{where} \;\; O_j^\textrm{global} =  \id_j - \ket{0}\bra{0}_j,
	\label{eq:general_local_cost}
\end{equation}
or as
\begin{equation}
	C^\textrm{local}(\vec{\theta}) = \sum_j c_j \textrm{Tr}\left( O_j^\textrm{local}\rho_j^\prime(\Vec{\theta}) \right), \; \textrm{where} \;\; O_j^\textrm{local} =  \id - \dfrac{1}{n}\sum_{j^\prime=1}^n \left(|0\rangle\langle0|_j\otimes\id_{\Bar{j^\prime}} \right).
	\label{eq:general_global_cost}
\end{equation}
Here, $\id$ is the identity operation and $\id_{\bar{j^\prime}}$ defines $\id$ over all qubits except $j^\prime$. Irrespective of the PQC, the variance of the gradient for the local cost function in~\eqref{eq:general_local_cost} at worst vanishes polynomially, which makes it trainable up to a depth of order $\mathcal{O}(\textrm{log}n)$, which is not the case for the global cost function~\eqref{eq:general_global_cost}, whose variance of gradient decays exponentially.

\subsection{Ansätze}\label{sec:ansatz_introduced}
In the previous section, we stress the point that VQAs are analogous to classical ML methods. One of the fundamental aspects of the ML model is the neural network (NN). The quantum version of NN is a PQC that is famously known as an ansatz.
Inspired by the success of classical NNs, some ansatz architectures such as quantum convolutional NN~\cite{cong2019quantum}, recurrent quantum NN~\cite{bausch2020recurrent}, quantum long short-term memory~\cite{chen2022quantum}, and quantum graph NN~\cite{verdon2019quantum} has been introduced.

As shown in~\eqref{eq:general_cost_function}, the ansatz $U(\Vec{\theta})$ contains trainable parameters $\Vec{\theta}$, these can be trained to minimize the cost function. Now, the $U(\Vec{\theta})$ does not have a specific structure, but it depends on the problem under consideration, for example, unitary coupled cluster (UCC), quantum alternating operator, variational Hamiltonian ansatz. These types of ansatz fall under the group \textit{problem inspired ansatz}. One can generically express $U(\vec{\theta})$ as a product of $L$ succeeding unitaries where one part of the unitary is parametrized by parameters $\vec{\theta}_j$ and another part is non-parametrized i.e.
\begin{equation}
	U(\Vec{\theta}) = \prod_{l=1}^L V(\Vec{\theta}_l)W_l,
	\label{eq:ansatz-main-equation}
\end{equation}
where $V(\Vec{\theta}_l)$ is the parametrized part of the ansatz and is of the form $\exp(-i\Vec{\theta}_lH_l)$ and $H_l$ is a Hermitian operator and $W_l$ is the non-parametrized unitary. Each layer $l$ contains a vector of parameters $\Vec{\theta}$. Depending on the problem under consideration, the parametrized and the non-parametrized part takes different forms. This leads to a class of sophisticated structures of ansatz that we briefly discuss in the following.

\paragraph{Unitary coupled cluster (UCC)} The Unitary coupled cluster is a \textit{problem-inspired ansatz} that is widely utilized in quantum chemistry problems. In this case, the problem statement is to find the ground state energy of a molecule represented through a fermionic Hamiltonian $H$. 

According to the Born-Oppenheimer approximation~\cite{born1927oppen}, one can describe the interaction of a system of electrons with its nucleus in a second quantized form where single-particle orbitals can either be filled or empty. And any interaction between electrons can be represented using annihilation ($\hat{a}$) and creation ($\hat{a}^\dagger$) following an anti-commutation relationship. Hence, a non-relativistic molecule Hamiltonian can be written in the form
\begin{equation}
	H_\textrm{mol} = H_\textrm{nuc} + \underbrace{\sum_{pq}h_{pq}\hat{a}_p^\dagger\hat{a}_q}_{\textrm{single excitations}} + \underbrace{\frac{1}{2}\sum_{pqrs}h_{pqrs}\hat{a}_p^\dagger\hat{a}_q^\dagger\hat{a}_r\hat{a}_s}_{\textrm{double excitations}}+\ldots
	\label{eq:fermionic_hamiltonian}
\end{equation}

To obtain UCC ansatz, we replace the traditional Hamiltonian cluster operator in~\eqref{eq:fermionic_hamiltonian} in terms of coupled cluster theory with its anti-Hermitian as follows~\cite{taube2006new,peruzzo2014variational}
\begin{align}
	&\ket{\psi}_\textrm{UCC} = U_\textrm{cc}\ket{\psi} = e^{T_1 + T_2+\ldots}\ket{\psi},\label{eq:ucc_ansatz}\nonumber\\
	&T_1 = \sum_{\substack{v\in \textrm{vacant},\\ o\in\textrm{occupied}}}\theta_{vo}\left(\hat{a}_v^\dagger\hat{a}_o - \hat{a}_o^\dagger\hat{a}_v\right)\\
	&T_2 = \sum_{\substack{v,v'\in \textrm{vacant},\\ o,o'\in\textrm{occupied}}}\theta_{vv'oo'}\left(\hat{a}_v^\dagger\hat{a}_{v'}^\dagger\hat{a}_o\hat{a}_{o'} - \hat{a}_o^\dagger\hat{a}_{o'}^\dagger\hat{a}_v\hat{a}_{v'}\right),\nonumber\\
	&\vdots\nonumber
\end{align}
where $\ket{\psi}$ is an uncorrelated reference state, usually a Hartree-Fock state. The ansatz shown in~\eqref{eq:ucc_ansatz} is the UCC ansatz. One can obtain a popular variant of UCC -- which is UCCSD, where SD stands for single and double  -- just by considering $T_1$ and $T_2$ in the coupled cluster representation. To implement the ansatz in a quantum computer, one can use the fermion to spin mappings such as Jordan-Wigner, Parity, and Bravyi-Kiteav transformations~\cite{tranter2018comparison,steudtner2018fermion}.
\begin{figure}[H]
	\centering  \includegraphics[scale=0.8]{ 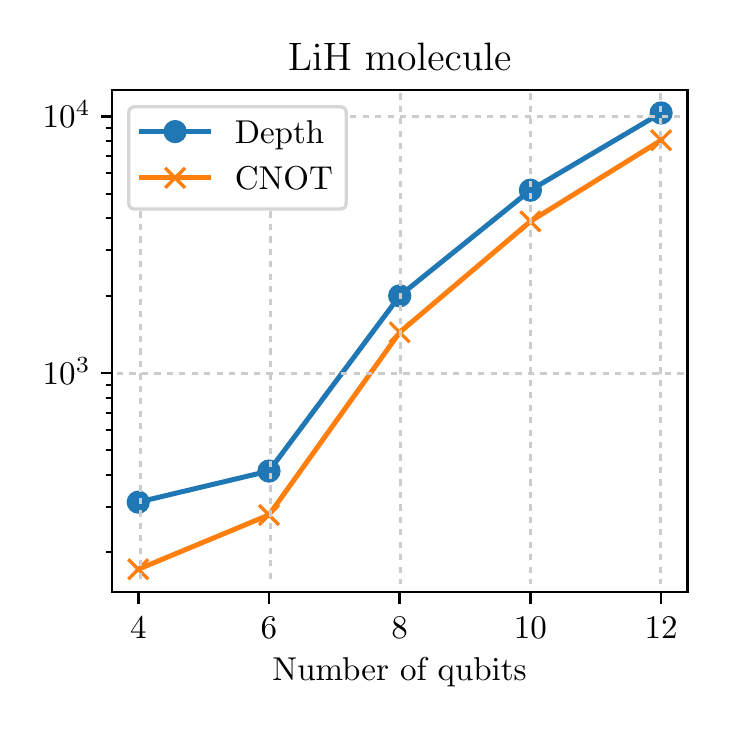}
	\caption{Illustration of the variation of the number of \texttt{CNOT} gates and \textit{depth of the circuit} with increasing spin orbitals in LiH molecule with UCCSD ansatz.}\label{fig:LiH_resource_requirement}
\end{figure}

Even though the unitarity of UCCSD suggests ease of implementation on quantum hardware, the current gate-based quantum computing asks for a decomposition in terms of one and two-qubit gates. However, the number of one and two-qubit gates and the depth of the circuit proliferates with the number of qubits as shown in~\figref{fig:LiH_resource_requirement}.

One can utilize the Suzuki-Trotter approximation of the $T_i$ operators to deal with the issue, keeping in mind the correct operator ordering of trotterized UCC ansatz~\cite{grimsley2019trotterized}.

\begin{figure}[ht!]
	\centering
	\includegraphics{ 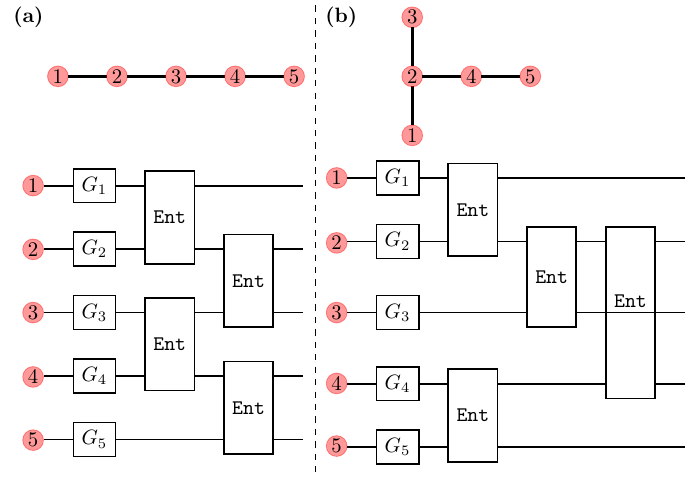}
	\caption{An example of HEA depending on the topology of \texttt{IBM} quantum devices. In (a), we present HEA that follows the topology of \texttt{ibmq\_manila} whereas in (b) the HEA follows the topology of \texttt{ibm\_quito}, \texttt{ibm\_belem} and \texttt{ibm\_lima}. It should be noted that the $G_i$ are parametrized, and \texttt{Ent} is the entangling unitary. The quantum wire through the last \texttt{Ent} gate means the gate does not apply on that qubit.}
	\label{fig:hea}
\end{figure}

\paragraph{Hardware efficient} 
As the name suggests, the hardware efficient ansatz (HEA) aims to reduce the circuit depth and gate count in $U(\Vec{\theta})$ and is efficiently implementable in
currently available quantum hardware. The generic form of HEA follows the form
\begin{equation}
	U_\textrm{HEA} = \prod_{q=1}^N\prod_{d=D}^1 \left(G^{q,d}(\Vec{\theta})\times U_\textrm{Ent}\right).
	\label{eq:hea-equation}
\end{equation}
Here $q$ is the number of qubits up to $N$ and $d$ is the depth of the ansatz up to $D$. The product on the number of qubits is over parameterized gates $G(\Vec{\theta})$. $U_\textrm{Ent}$ defines entangled gates.

One determines the $U_\textrm{Ent}$ and $G(\Vec{\theta})$ from a predefined quantum gate set, and the placement of the gates is determined by the topology of real quantum hardware. This, in turn, helps avoid circuit depth overhead while translating an arbitrary ansatz into a sequence of gates, as shown in~\figref{fig:hea}. This makes HEA very applicable to Hamiltonian, which has interactions similar to the quantum hardware~\cite{kokail2019self}. One of the primary advantages of HEA is that while implemented, it can incorporate encoding symmetries~\cite{gard2020efficient,otten2019noise} and depth reduction~\cite{tkachenko2021correlation}. 
\begin{figure}[ht!]
	\centering
	\includegraphics[width = 0.5\linewidth]{ 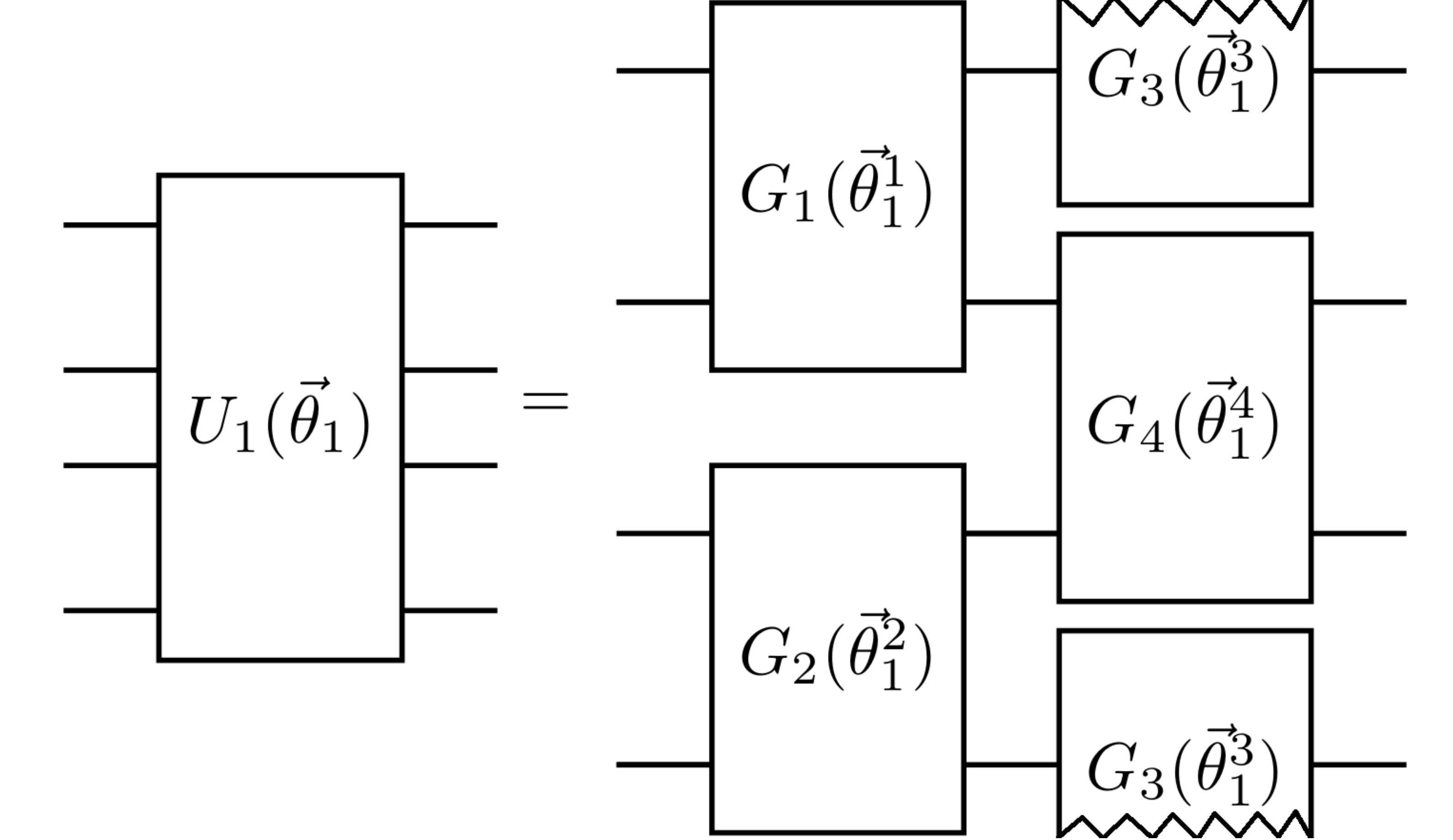}
	\caption{Structure of a layered ansatz, where the ansatz $U_l(\vec{\theta})$ is decomposed into layer-wise unitaries $U_l(\vec{\theta}_l)$ for $l = 1,2,\ldots, l$. Each $U_l(\vec{\theta}_l)$ is further decomposed into two-qubit rotations for $\vec{\theta}_i^j$ the $i$ denotes the layer and $j$ is the parameter count.}
	\label{fig:layered-ansatz}
\end{figure}
A well-known variant of HEA is layered HEA (LHEA), where gates act on alternating pairs of qubits in a brick-like structure as shown in~\figref{fig:layered-ansatz}. But this has trainability issues. In~\cite{leone2022practical}, it has been shown that the trainability of HEA depends on the entanglement in input states, i.e. HEAs are trainable if input states follow the area law of entanglement~\cite{eisert2010colloquium} whereas it becomes untrainable if the input follows volume law of entanglement~\cite{bianchi2022volume}.

\paragraph{Quantum approximate operator}
The quantum approximate optimization algorithm (QAOA) ansatz was first introduced in~\cite{farhi2014quantum} to obtain solutions for combinatorial optimization problems. The ansatz provides an approximation to a Hamiltonian $H$ by constructing a specific variational ansatz through first-order Suzuki-Trotter decomposition approximating the adiabatic evolution. The operators $\exp\left(\ii \beta_j H_\textrm{mix}\right)$ and $\exp\left(\ii \gamma_j H_\textrm{obj}\right)$ are applied in alternating manner, resulting in the ansatz
\begin{equation}
	U_\textrm{QAOA} = \prod_{j=1}^l\exp\left(\ii \beta_j H_\textrm{mix}\right)\exp\left(\ii \gamma_j H_\textrm{obj}\right),
	\label{eq:QAOA-ansatz-equation}
\end{equation}
where $H_\textrm{mix} = -\sum_i\sigma_x^i$, $\sigma_x$ is the Pauli $X$ operator. The computational power and reachability of the QAOA ansatz are rigorously discussed in~\cite{morales2020universality,lloyd2018quantum,hastings2019classical}.

While it's true that breaking down~\eqref{eq:QAOA-ansatz-equation} into native gates may result in a long circuit, this is often due to the presence of many-body terms in $H$ and limited device connectivity. However, one notable advantage of this ansatz is that for certain problems, the feasible subspace is smaller than the full Hilbert space. This restriction can possibly lead to better algorithmic performance, making this approach highly effective in certain scenarios.

\paragraph{Variable structure}
Although the constancy in the structure of the ansatz while solving a problem enables control over the overall ansatz complexity, it fails to harness the refinement that is attained by optimizing the circuit. The refinement can be realized through the addition or removal of unnecessary quantum operators. This novel approach was first introduced in ADAPT-VQE~\cite{grimsley2019adaptive}, which adaptively inserts a quantum fermionic operator to provide a desired level of accuracy while maintaining a minimal number of operators. The operator is chosen in such a way that it affects the minimization of energy the most. The operators are chosen from a pool of fermionic operators, and it has been shown that the ADAPT-VQE substantially outperforms UCCSD in terms of both the number of variational parameters and accuracy. Following this trail in~\cite{tang2021qubit}, the authors introduce a hardware-efficient variant of ADAPT-VQE, i.e. qubit ADAPT-VQE, which drastically reduces the operator pool. Other more efficient variants of ADAPT-VQE are introduced in~\cite{yordanov2021qubit}.

Another way of variable ansatz construction is introduced in~\cite{larose2019variational}, where the ansatz is allowed to grow. And if the algorithm cannot minimize the cost function for a specified number of iterations, then one adds an identity gate spanned by new variational parameters that are randomly added to the ansatz.


\section{Basics of reinforcement learning}\label{sec:reinforcement_learning_introdced}
One of the most prominent sub-fields of machine learning is reinforcement learning (RL), which is concerned with how an \textit{agent} can learn to make decisions in an \textit{environment} to maximize a cumulative \textit{reward function}. The primary goal of the \textit{agent} is to learn a \textit{policy}, which a mapping from the states of the environment to a decision (i.e. an \textit{action}), that in turn maximizes cumulative reward.
\begin{figure}[ht]
	\centering
	\includegraphics[width = 0.8\textwidth]{ 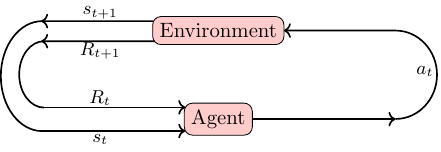}
	\caption{Illustration of the agent and the environment interaction.}
	\label{fig:agent-environment-interaction}
\end{figure}

RL is intrinsically distinct from \textit{supervised learning}~\cite{cunningham2008supervised,hastie2009overview,learned2014introduction} because, in supervised learning, the learning process is conducted from a training set of labeled exemplary data, which is in turn provided by a knowledgeable external supervisor. Each exemplary data corresponds to a situation, and the label denotes the action the system should take in that situation. In RL, the agent's learning process is mediated through interaction with the environment. 

Meanwhile, RL differs from \textit{unsupervised learning}~\cite{barlow1989unsupervised,dayan1999unsupervised,ghahramani2003unsupervised} in the sense that in USL, the main ambition is to find structure hidden in connections of unlabeled data. Whereas in RL all one tries to achieve is the maximization of the reward function signal instead of trying to find a hidden structure. In the following, we will briefly specify RL problems in terms of optimal control of the Markov decision process (MDP) and the challenges in RL.

The summary of notations and definitions used in this thesis in the context of reinforcement learning is provided in Table~\ref{tab:rl-table-of-notations}.


\begin{table}
	\begin{tabular}{c|c|c}
		Name   & Notation & Definition \\
		\hline
		State  & $s$    & \makecell{Refers to the current situation of an agent \\ in an environment, which includes all the \\ relevant information necessary to make \\ decisions about what action to take next} \\
		\hline
		Action & $a$    & \makecell{Refers to the choices an agent can make \\ based on its current state to interact \\ with the environment. The agent \\ chooses the action by using a policy.}  \\
		\hline
		Policy & $\pi$  & \makecell{ Refers to a mapping from perceived states \\ of the environment to actions to be taken \\ when in those states. }  \\
		\hline
		Reward & $R$    &  \makecell{Refers to the goal in a problem. On each \\ time step, the environment sends to the \\ agent a single number, a reward. The \\ agent’s sole objective is to maximize the \\ total reward it receives over the long run.}  \\
		\hline
		Value Function  &  $v$  & \makecell{It refers to the value of a state as the total \\ amount of reward an agent can expect to \\ accumulate over the future, starting from \\ that state. }  \\
		\hline
		Transition probability & $p[s'|s,a]$ & \makecell{It refers to the probability of transition \\ to state $s'$ if action $a$ is taken on \\ the state of the environment $s$.}
	\end{tabular}
	\caption{The necessary notations and their definitions utilized to provide an introduction to RL and in the upcoming sections. Additionally we define $\mathcal{S}$ for the set of states, and $\mathcal{A}$ for the set of actions.}
	\label{tab:rl-table-of-notations}
\end{table}

\subsection{Finite Markov decision process}

In this section, we discuss the mathematical form of reinforcement learning problems, which includes key elements of the problems' structure, such as the value function and Bellman equation.

\paragraph{Environment-agent interaction}
In RL, the \textit{agent}, which is the learner and decision-maker, interacts with everything outside it called the \textit{environment}. 

The whole RL process can be characterized by time steps $t = 0,1,\ldots, T$ where at each time step, the agent receives the state of the environment $s_t\in \mathcal{S}$ where $\mathcal{S}$ is the set of all possible states. After interacting with the environment, the agent gives out an action $a_t\in \mathcal{A}(s_t)$, where $\mathcal{A}(s_t)$ defines the space of all actions available in the state $s_t$. In the next time step, the environment gives out (1) a new state: $s_t\rightarrow s_{t+1}$ and (2) a reward $R_{t+1}\in \mathcal{R}\in \mathbb{R}$, which is depicted in the~\figref{fig:agent-environment-interaction}.

At each time step $t$, the agent forms a mapping between the states to the probabilities of selecting each possible action. This mapping is termed as \textit{Policy} ($\pi_t$), where $\pi_t(a|s)$ tells us the probability of selecting the action $a_t = a$ if the environment is at state $s_t=s$.

The above-mentioned framework can be used for many different problems. As a fundamental principle in the study of reinforcement learning, we adhere to the notion that an agent's environment comprises all elements that are beyond its arbitrary control, forming a context in which the agent operates. The boundary between the agent and the environment the limit of the agent's \textit{absolute control}, not its knowledge. For example, the agent might know almost everything about the interacting environment, but still, it faces difficulty in solving the task just as we know how Rubik's cube works and still might not be able to solve it.

At its core, the RL framework a powerful abstraction of the challenge of learning to achieve goals through interaction with an environment. Any problem with learning goal-directed behaviour can be efficiently reduced to three sequences of processes and repeating back and forth. (1) the choice, that is made by the agent, which we call \textit{action} (2) the basis upon which choices are determined i.e. the \textit{states} and (3) the definition of the agent's goal in the form of \textit{reward}. In the next, we briefly describe the \textit{rewards} with a simple example.

\paragraph{Rewards and returns}
At each time step $t$, the reward is defined by a real number $R_t\in \mathbb{R}$ which the agent receives from the environment as a signal. The main purpose of the agent is to maximize the cumulative reward in the long run. Formulation of a goal corresponding to a problem in terms of reward is a hard problem. Based on a problem one can define either a (1) sparse or (2) dense reward, where the sparse reward is easier to formulate than the dense one. 

For example in the case of a robot in a maze problem, the goal is to teach the robot to escape from the maze. If we formulate the goal in terms of a dense reward where for each time step $t$, for each movement of the robot in the maze, the agent receives a reward signal $R_t=-1$. This encourages the robot to find a solution to escape the maze quickly as each step of the robot is penalized ($-1$) and the robot will focus on minimizing the penalty.

However, the same problem can be formulated using a sparse reward where the agent receives a reward when it escapes the maze, but it slows down learning because the agent needs to take many actions before getting any reward. In problems like chess or backgammon, the only way to formulate the reward is by giving the player a big reward if wins. This is known as the \textit{credit assignment problem}.

If we consider that after each time step the agent receives a reward $R_{t+1}, R_{t+2},\ldots$ then we seek to maximize the expected return $G_t$. The $G_t$ is defined by
\begin{equation}
	G_t = R_{t+1}+R_{t+2}+R_{t+3}+\ldots+R_T,
	\label{eq:return}
\end{equation}
where $T$ is the final time step. Considering the notion of the final time step, the agent–
environment interaction can be broken into subsequences, which we call \textit{episodes}. An episode might be represented by a game of chess or a complete trip in the maze by a robot. Meanwhile, in many problems, it might not be possible to break the interaction between the agent and the environment, which can not be broken into identifiable episodes but goes on continually without limit. For these tasks, the definition of return in~\eqref{eq:return} is not valid, and it can be infinite.

An important additional concept to consider while formulating a reward function is called \textit{discounting}. With this approach, the agent aims to choose actions that maximize the total of discounted rewards it will receive in the future as follows
\begin{equation}
	G_t = R_{t+1}+\gamma R_{t+2}+\gamma^2 R_{t+3}+\ldots = \sum_{j = 0}^\infty \gamma^j R_{t+j+1},
	\label{eq:discounted_return}
\end{equation}
where $\gamma$ is the discount rate parameter that ranges as $0\leq \gamma \leq 1$. It determines the present value of the future rewards, i.e. a reward received at $j$-th time in the future is worth only $\gamma^{j-1}$ times what it would be
worth if it were received immediately.

If $\gamma<1$, the infinite sum has a finite
value as long as the reward sequence $\{R_j\}$ is bounded. If $\gamma$ is close to 0, the agent is \textit{myopic} i.e. the agent is concerned only with maximizing immediate rewards. If $\gamma$ is close to 1, the agent is \textit{far-sighted}. In the following, we provide an elaborate example of how a \textit{myopic} and \textit{far-sighted} agent impacts the cumulative return.
\begin{figure}[ht]
	\centering
	\includegraphics[width=0.5\textwidth]{ 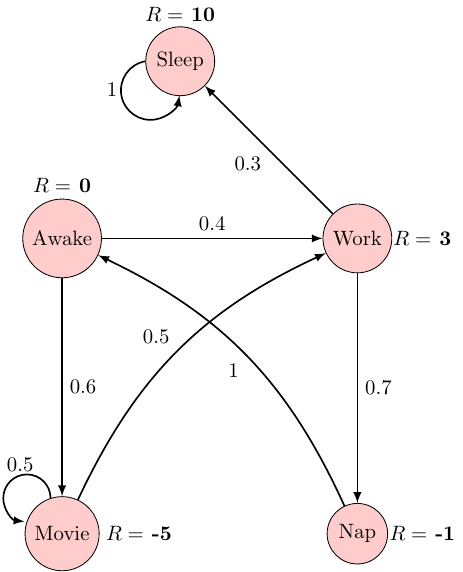}
	\caption{A toy Markov decision process (MDP) of the everyday schedule of a PhD student.}
	\label{fig:simple_mdp}
\end{figure}

\paragraph{Toy Example} Before we move on further, let us take a simple example and run through the process of formulating the $G$ step by step. In~\figref{fig:simple_mdp}, we give an illustration of a work-sleep schedule of a PhD student as an MDP. In the problem, we consider \textit{sleep} as the terminal step. There are many possible ways to reach the terminal step, such as 
\begin{itemize}
	\item$\text{Awake}\rightarrow\text{Work}\rightarrow\text{Sleep}$ 
	\item$\text{Awake}\rightarrow\text{Work}\rightarrow\text{Nap}\rightarrow\text{Awake}\rightarrow\text{Movie}\rightarrow\text{Work}\rightarrow\text{Sleep}$ etc.
\end{itemize}

The transition matrix corresponding to the MDP is defined by
\begin{equation}
	p[s'=s_{t+1}|s_t] = 
	\begin{tabular}{l|lllll}
		& Awake                  & Movie                    & Nap                      & Sleep                    & Work                     \\ \cline{2-6} 
		\multicolumn{1}{l|}{Awake} & \multicolumn{1}{l}{}  & \multicolumn{1}{l}{0.6} & \multicolumn{1}{l}{}    & \multicolumn{1}{l}{}    & \multicolumn{1}{l}{0.4} \\[0.4cm] 
		\multicolumn{1}{l|}{Movie} & \multicolumn{1}{l}{}  & \multicolumn{1}{l}{0.5} & \multicolumn{1}{l}{}    & \multicolumn{1}{l}{}    & \multicolumn{1}{l}{0.5} \\[0.4cm] 
		\multicolumn{1}{l|}{Nap}   & \multicolumn{1}{l}{1} & \multicolumn{1}{l}{}    & \multicolumn{1}{l}{}    & \multicolumn{1}{l}{}    & \multicolumn{1}{l}{}    \\[0.4cm] 
		\multicolumn{1}{l|}{Sleep} & \multicolumn{1}{l}{}  & \multicolumn{1}{l}{}    & \multicolumn{1}{l}{}    & \multicolumn{1}{l}{1}   & \multicolumn{1}{l}{}    \\[0.4cm] 
		\multicolumn{1}{l|}{Work}  & \multicolumn{1}{l}{}  & \multicolumn{1}{l}{}    & \multicolumn{1}{l}{0.7} & \multicolumn{1}{l}{0.3} & \multicolumn{1}{l}{}    \\[0.4cm] 
	\end{tabular}.
\end{equation}
As all the states are associated with a particular reward, we can calculate the discounted return using~\eqref{eq:discounted_return}. For an example for the path: $\text{Awake}\rightarrow\text{Movie}\rightarrow\text{Work}\rightarrow\text{Sleep}$, if we are currently at the state ``Awake" then the discounted reward for the "Awake" state becomes
\begin{equation}
	G_\text{Awake} = R_\text{Movie}+\gamma R_\text{Work} + \gamma^2 R_\text{Sleep} = -5 + 3\gamma + 10\gamma^2,
\end{equation}
\begin{itemize}
	\item \textbf{Myopic agent:} If we consider $\gamma$ is close to $0$ say $\gamma=0.1$ then we get $G_\text{Awake}=-4.6,$
	\item \textbf{Far-sighted agent:} On the other hand, if we consider $\gamma=0.9$, which is close to $1$ then we get $G_\text{Awake}=+5.8$.
\end{itemize}
From this, we can say that for the \textit{far-sighted case}, the agent might prefer to take the route $\text{Awake}\rightarrow\text{Movie}\rightarrow\text{Work}\rightarrow\text{Sleep}$. But a \textit{myopic agent} is more physical because animals have a preference for immediate reward. In the following part, we elaborate on Markov decision processes (MDPs) and the importance of value function in RL.

\paragraph{Markov decision process}
The Markov decision processes (MDPs) provide a modelling framework for sequential decision problems~\cite{littman1994markov}, as shown in the example of the previous section. MDPs have something called \textit{Markov property}, which is defined by the dependency of the next state of the environment on the current state and the action. To express it more elaborately, we consider an environment that might respond at the time $t+1$ corresponding to the action taken at $t$. In the most general case, the nature at $t+1$ may depend on everything that happened in the past events from $t=0$ to $t=t$ i.e. the probability of the state appearing on state $s' = s_{t+1}$ with reward $r = R_{t+1}$ can be written as
\begin{equation}
	\textrm{P}\left[R_{t+1}=r, s_{t+1} = s'|s_0,a_0,R_1, \ldots s_{t-1}, a_{t-1}, R_t, s_t, a_t\right].
\end{equation}
On the contrary, if the state has \textit{Markov property} then the same probability can be rewritten as
\begin{equation}
	p(r, s'|s_t,a_t)=\textrm{P}\left[R_{t+1}=r, s_{t+1} = s'| s_t, a_t\right],
	\label{eq:mdp-dynamics}
\end{equation}
which says that \textit{the future state of the system depends solely on the current state and the action taken, rather than on any prior history}~\cite{howard1960dynamic}. This means instead of memorizing all the information about all the past events and defining the nature of the environment at $t+1$ the agent now can characterize $t+1$ by just inquiring about its previous event at time $t$.

One can define an MDP as a 5-tuple process ($\mathcal{S}$, $\mathcal{A}$, $p(s'|s_t,a_t)$, $R$, $\gamma$) depending on:
\begin{itemize}
	\item $\mathcal{S}$, the finite set of all states of the environment;
	\item $\mathcal{A}$, the finite set of all legal actions that can be executed at state $s_t\in \mathcal{S}$;
	\item $p[s_{t+1}=s'|s_t,a_t]$, the probability of transition to state $s'$ at the time $t+1$ if $a_t$ action is taken on the state of the environment $s_t$;
	\item $R\in\mathcal{R}$, the reward received when the environment transitioning from $s_t$ to $s'$ after action $a_t$;
	\item $\gamma\in\left[0,1\right]$, is the discount factor that two-qubit the difference in the future and present reward.
\end{itemize}

If the dynamics are specific by~\eqref{eq:mdp-dynamics}, one can compute the expected reward $r(s, a)$ as 
\begin{equation}
	\sum_{r\in \mathcal{R}}r \sum_{s'\in \mathcal{S}}p(r,s'|s_t,a_t),
\end{equation}
from the state-action pair. The state transition probability $p(s^\prime|s_t,a_t)$ is expressed as 
\begin{equation}
	p(s^\prime|s_t,a_t)= \sum_{r\in\mathcal{R}}p(r,s'|s_t,a_t),
\end{equation}
and the expected rewards for the state–action–next-state $r(s, a, s^\prime)$ as 
\begin{equation}
	\sum_{r\in\mathcal{R}}\dfrac{rp(r,s'|s_t,a_t)}{p(s'|s_t, a_t)}.
\end{equation}


\paragraph{Value function}
In RL the value function provides a measure of how good it is for an RL-agent to take a specific action at a particular state. It helps the agent to make decisions in an uncertain environment by quantifying the expected future reward that an agent can get for that particular state. The value functions are defined with respect to specific policies. Recalling that a policy, $\pi$, at time step $t$, is defined by a mapping from each state $s_t\in\mathcal{S}$ and action $a_t\in\mathcal{A}$ to the probability $\pi(s_t,a_t)$ of taking the action $a_t$ in state $s_t$. This inevitably gives rise to two main types of value functions in RL based on the state and the action as follows:
\begin{itemize}
	\item \textbf{State-value function}, $v_\pi(s)$ is represented by the expected return starting from a particular state $s$ and by following a particular policy $\pi$. The state value function encapsulates the inherent value associated with each $s$, irrespective of the action taken. For MDPs, one can define the state-value function as
	\begin{equation}
		v_\pi(s_t) = \mathbb{E}_\pi\left[ \sum_{j=0}^\infty \gamma^j R_{t+j+1} \Bigg\vert s_t=s \right],
		\label{eq:state_value_func}
	\end{equation}
	where $\mathbb{E}_\pi[.]$ gives the expected value given that the agent follows policy $\pi$.
	
	\item \textbf{Action-value function}, $q_\pi(s, a)$ is represented by the expected return starting from a particular state $s$, taking an action $a\in\mathcal{A}$ and by following a particular policy $\pi$. It estimates the long-term value of taking a particular action in a given state. For MDPs, one can define the action-value function as
	\begin{equation}
		q_\pi(s_t,a_t) = \mathbb{E}_\pi\left[ \sum_{j=0}^\infty \gamma^j R_{t+j+1} \Bigg\vert s_t=s, a_t=a \right].
		\label{eq:action_value_func}
	\end{equation}
\end{itemize}

\noindent The value function in~\eqref{eq:state_value_func} can be decomposed into two parts: (1) the immediate reward and (2) the discounted value of the successor rate in the following way:
\begin{align}
	v_\pi(s_t) & = \mathbb{E}_\pi\left[ R_{t+1} + \gamma R_{t+2} +\ldots \Bigg\vert s_t=s \right]\nonumber\\
	& = \mathbb{E}_\pi\left[ R_{t+1} + \gamma \left(R_{t+2} + \gamma R_{t+3}+\ldots\right) \Bigg\vert s_t=s \right]\nonumber\\
	& = \mathbb{E}_\pi\left[ \underbrace{R_{t+1}}_{\textrm{\shortstack{Immediate\\[-1ex] reward}}} + \underbrace{\gamma\sum_{j=0}^\infty\gamma^k R_{t+j+2}}_{\textrm{\shortstack{Discounted\\[-0.5ex] successor rate}}}\Bigg\vert s_t=s \right]\nonumber\\
	& = \sum_{a}\pi(a|s)\sum_{s'=s_{t+1}}\sum_{r=R_{t+1}}p(s',r|s,a)\left[ r+\gamma\mathbb{E}_\pi\left( \sum_{j=0}^\infty R_{t+k+2}\Bigg\vert s' \right) \right]
	\nonumber\\
	& = \sum_{a}\pi(a|s)\sum_{s',r}p_{ss'}\left[ r + \gamma v_\pi(s') \right].
	\label{eq:bellman_equation_state_val_func}
\end{align}
The~\eqref{eq:bellman_equation_state_val_func} is known as the \textit{Bellman expectation equation for the state-value function}. This is basically a sum of all values of the action, the current and the successor state of the environment. For each value of the variables, we compute the $\pi(s,a)p(s',r|s,a)$, multiply it by the linear weighted term $\left[ r + \gamma v_\pi(s') \right]$ and sum over all possible values of the three variables to get expected value. In the same way using~\eqref{eq:action_value_func} we can find the \textit{Bellman equation for the action-value function} as follows:
\begin{equation}
	q_\pi(s_t,a_t)= \mathbb{E}_\pi\left[ R_{t+1} + 
	\gamma v(s_{t+1},a_{t+1})\Bigg\vert s_t=s, a_t=a \right].
	\label{eq:bellman_equation_action_val_func}
\end{equation}
Here we define the \textit{Bellman operator} $T^\pi$ as 
\begin{equation}
	(T^\pi q)(s_t, a_t) = R_{t+1} +\gamma\mathbb{E}\left[ q(s_{t+1},a_{t+1}) \Big\vert s_{t+1}, a_{t+1} \right],
\end{equation}
where $s_{t+1}$ is samples with probability $p(s_{t+1}|s_t,a_t)$ and $a_{t+1}$ is sampled from the policy $\pi(a|s)$. In the same manner, the \textit{Bellman optimality operator} is defined by
\begin{equation}
	(T^* q)(s_t, a_t) = R_{t+1}+\gamma\mathbb{E}\left[ q^*(s_{t+1},a_{t+1}) \Big\vert s_{t+1}\sim p(s_{t+1} | s_t,a_t) \right].\label{eq:bellman_optimality_operator}
\end{equation}
The operator defined in~\eqref{eq:bellman_optimality_operator} will be useful while we discuss the Deep Q-Network (DQN).  

There are many possible ways to approximate, compute, and learn the value function for a particular policy, and all these methods are diagrammatically represented using \textit{backup diagrams}. These diagrams visualize how the value function is updated based on new information received from the environment. In other words, backup diagrams depict the flow of information and the update process involved in processing the value function. This two-qubit how the estimated values of state or state-action pair are modified after receiving feedback from the environment. In each \textit{backup diagram} the states (\tikz\draw[draw] (0,0) circle (.5ex);) and actions (\tikz\draw[draw, fill = black] (0,0) circle (.5ex);) are represented by vertices and the transaction among them are edges $\left(\begin{tikzpicture}
	\path (0,0) node[draw, circle, scale=0.6] (A) {};
	\path (0.4,0.4) node[circle, fill = black, scale=0.6] (B) {};
	\draw (A) -- (B);
\end{tikzpicture}\; \text{or} \; \begin{tikzpicture}
	\path (0,0) node[circle, fill = black, scale=0.6] (A) {};
	\path (.4,.4) node[draw, circle, scale=0.6] (B) {};
	\draw (A) -- (B);
\end{tikzpicture}\right) 
$

For example, in~\figref{fig:backup_diagrams} we illustrate the \textit{backup diagrams} for $v_\pi$ and $q_\pi$.

\begin{figure}
	\centering
	\includegraphics[width = 0.8\textwidth]{ 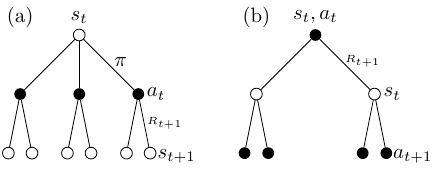}
	\caption{Simple illustration of the backup diagrams for (a) state-value function ($v_\pi$) and (b) action-value function ($q_\pi$). The time flows downwards.}
	\label{fig:backup_diagrams}
\end{figure}

\subsection{Optimal value function} 
The optimal value function is defined by the maximum expected return that an agent can achieve by following an optimal policy. A policy $\pi$ is defined to be better or equivalent to another policy $\pi^\prime$ i.e. $\pi\geq\pi^\prime$ iff $v_\pi(s_t)\geq v_{\pi^\prime}(s_t) \;\forall \; s_t\in \mathcal{S}$. There can be more than one \textit{optimal policy}, which we denote by $\pi^*$. These policies share the same state-value function, which is called by \textit{optimal state-value function}, $v^*$ and is defined by
\begin{equation}
	v^*(s_t) = \underset{\pi}{\textrm{max}}\; v_\pi(s_t),
\end{equation}
and in the same manner the \textit{optimal action-value function}, $q^*$, is given by
\begin{equation}
	v^*(s_t) = \underset{\pi}{\textrm{max}}\; q_\pi(s_t, a_t),
\end{equation}
where $a_t\in \mathcal{A}(s_t)$. Using~\eqref{eq:bellman_equation_action_val_func} we can rewrite the above equation in terms of $v^*$ as
\begin{equation}
	q_*(s_t,a_t) = \mathbb{E}
	\left[ R_{t+1} +\gamma v^*(s_{t+1}, a_{t+1})\Bigg| s_t=s, a_t=a \right].
\end{equation}

As we saw in the previous section, one can obtain \textit{Bellman equation} directly from either the state-value function or the action-value function. In the same way, from the optimal state-value and action-value function, we get \textit{Bellman optimality equation}. The intuition behind the Bellman optimality equation lies in the principle of optimality, which states that an optimal policy can be divided into sub-problems and solved independently for each state. In other words, it decomposes the problem of finding the optimal value function into sub-problems for each state, allowing for a dynamic programming approach to solving MDPs. The Bellman optimality equation for the state-value function is given by
\begin{align}
	v_*(s_t) & = \underset{a_t\in\mathcal{A}}{\textrm{max}} \; \mathbb{E}_{\pi^*}\left[ R_{t+1} + \gamma R_{t+2} +\ldots \Bigg\vert s_t=s, a_t=a \right]\nonumber\\
	& = \underset{a_t\in\mathcal{A}}{\textrm{max}} \; \mathbb{E}_{\pi^*}\left[ R_{t+1} + \gamma \left(R_{t+2} + \gamma R_{t+3}+\ldots\right) \Bigg\vert s_t=s, a_t=a \right]\nonumber\\
	& = \underset{a_t\in\mathcal{A}}{\textrm{max}} \; \mathbb{E}_{\pi^*}\left[ R_{t+1} + \gamma\sum_{j=0}^\infty\gamma^k R_{t+j+2}\Bigg\vert s_t=s, a_t=a \right]\nonumber\\
	& = \underset{a_t\in\mathcal{A}}{\textrm{max}} \; \mathbb{E} \left[ R_{t+1} +\gamma v_*(s') \Bigg\vert s_t=s, a_t=a \right]\label{eq:bellman_optimality_equation_state_val_func1}\\
	& =\underset{a_t\in\mathcal{A}}{\textrm{max}} \sum_{s',r}p(s',r|s,a)\left( r+\gamma v_*(s') \right).
	\label{eq:bellman_optimality_equation_state_val_func2}
\end{align}


In the \eqref{eq:bellman_optimality_equation_state_val_func1} and \eqref{eq:bellman_optimality_equation_state_val_func2} is the Bellman optimality equation for $v_*$. Following the same steps, we can obtain the Bellman optimality equation for $q_*$ as
\begin{align}
	q_*(s_t,a_t) & = \underset{a_t\in\mathcal{A}}{\textrm{max}} \; \mathbb{E} \left[ R_{t+1} +\gamma v_*(s') \Bigg\vert s_t=s, a_t=a \right] ,\label{eq:bellman_optimality_equation_state_val_func3}\\
	& =\underset{a_t\in\mathcal{A}}{\textrm{max}} \sum_{s',r}p(s',r|s,a)\left( r+\gamma v_*(s') \right). \label{eq:bellman_optimality_equation_state_val_func4}
\end{align}

Having the (optimal) value functions at our disposal, we can now talk about how to solve a reinforcement learning problem. There are fundamentally two approaches used in solving reinforcement learning (RL) problems: (1) Model-Based RL~\cite{polydoros2017survey,moerland2023model,kaiser2019model} and (2) Model-Free RL. In the case of Model-Based RL, we have complete knowledge of the dynamics of the Markov decision process, which includes precise knowledge of the transition probabilities and rewards for each transition. Hence, to solve the Model-Free RL problem, \textit{value iteration algorithm}~\cite{alpaydin2020introduction,bellman1966dynamic}can be utilized. It iteratively evaluates the optimal value of each state by taking the expected intermediate reward and the values for the successor states into account. For the sake of coherence regarding the thesis topic, we end the discussion on the \textit{Model-Based RL} here. In the following, we briefly discuss the \textit{Model-Free reinforcement learning} approach.
\begin{figure}
	\centering \includegraphics{ 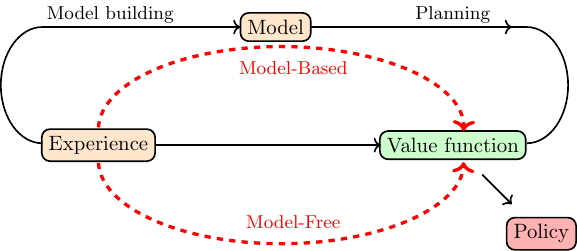}
	\caption{Illustration of model-based and model-free RL. }
	\label{fig:model_based_vs_model_free}
\end{figure}


\subsection{Model-Free learning}\label{sec:model_free_rl}
As previously discussed, the \textit{value iteration algorithm} requires complete knowledge of the model, such as transition probabilities and rewards. In many scenarios, we encounter situations where the agent lacks knowledge of transition probabilities, necessitating alternative approaches to compute the policy function. The model-free approach is developed by keeping these encounters in mind. We compare the value- and policy-based approaches in {Tab.~\ref{tab:value-and-policy-based}}. Meanwhile, a comparison of the model-based and model-free reinforcement learning is illustrated in~\figref{fig:model_based_vs_model_free}. Model-based reinforcement learning involves a subroutine of creating an internal representation of how the environment behaves, allowing the agent to predict future states. Meanwhile, instead of learning from a predefined model, model-free reinforcement learning focuses on learning the best actions in different situations by estimating the value or policy directly from observed experiences. For the sake of this thesis, in the remaining sections, we will focus on value-based model-free algorithms.
\begin{center}
	\begin{table}
		\begin{tabular}{c|c}
			Value-based & Policy-Based   \\
			\hline
			\makecell{ Aims to learn the optimal\\ 
				state-value function or\\ 
				action-value function, i.e.\\ 
				Q-function directly. } & 
			\makecell{ Aims to learn the optimal\\
				policy directly, without\\
				explicitly estimating value\\ functions.} \\ 
			\hline
			\makecell{ These algorithms depends on\\ 
				the balance between exploration\\
				and exploitation. It explores\\ 
				new states and actions while\\
				also exploiting the value\\
				estimates.} & \makecell{
				These algorithms optimize the\\
				policy $\pi(s_t,a_t|\vec{\theta})$ by modifying\\
				the $\vec{\theta}$ to maximize the\\ expected cumulative reward.} \\ 
			\hline
			\makecell{ A popular value-based algorithm\\
				that learns the Q-value through\\
				an iterative updating method\\
				is called~\textit{Q-learning}~\cite{watkins1989learning,watkins1992q}.\\
				This method updates the value\\
				based on observed rewards and\\
				transition probability.} & \makecell{
				In a policy-based algorithm\\
				gradient ascent method~\cite{zinkevich2003online,sorg2010reward}\\ 
				is used to update policy\\ 
				parameters.}
		\end{tabular}
		\caption{Comparison table for value-based and policy-based methods.}
		\label{tab:value-and-policy-based}
	\end{table}
\end{center}

\paragraph{Temporal difference learning} In the value iteration process, the state-value function is calculated recursively using the~\eqref{eq:bellman_equation_state_val_func}. In model-free learning, the transition function is replaced by a sequence of the sample from the environment, and we can use Bellman's recursive computation to estimate the new updates to the value function based on the previous estimates.

\textit{Temporal difference learning} (TD learning)~\cite{sutton1988learning} is a bootstrapping method that can be used to process and refine the samples to achieve an approximate final value of the state. As the name suggests, it refers to the difference in the values of the states at two different time steps. This information is then used to calculate the value at the new time step. TD learning method is given by
\begin{equation}
	v(s_t) \leftarrow v(s_t) + \alpha\left[ R_{t+1} + \gamma v(s_{t+1}) -v(s_t) \right],
	\label{eq:temp_diff_v1}
\end{equation}
recalling $s_t$ is the state of the environment at time step $t$, and $s_{t+1}$ is the new state at the time $t+1$. The reward we receive for the transition from $s_t$ to $s_{t+1}$ is $R_{t+1}$. As we described before, $\gamma$ is the discount factor, $\gamma$ set closer to zero, two-qubit a myopic agent, and if it is closer to $1$, we get a far-sighted agent. The new variable $\alpha$ is the learning rate, which determines how fast the algorithm learns. The essence of temporal difference lies in the last term of~\eqref{eq:temp_diff_v1}, where we subtract $``-v(s_t)"$ from the current state value to compute the temporal difference. The TD learning method revolutionized the use of model-free approaches in different RL scenarios. One remarkable achievement was TD-Gammon~\cite{tesauro1995temporal}, a program that defeated human world champions in the game of Backgammon during the early 90s.

\begin{figure}[H]
	\centering
	\includegraphics[width=0.4\textwidth,angle=90]{ 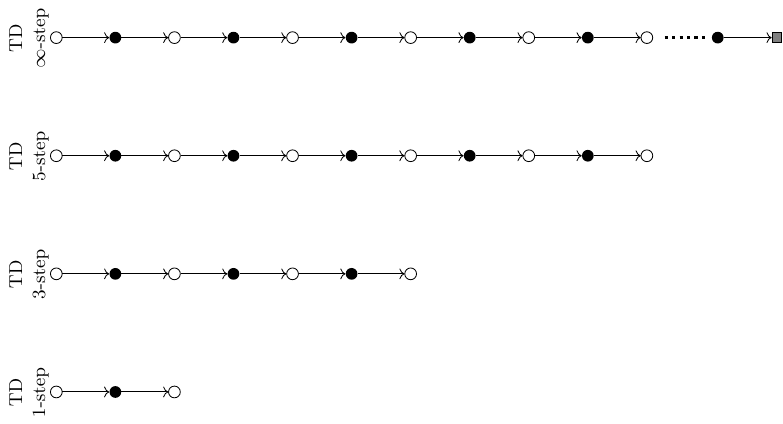}
	\caption{Illustration of $1$, $5$ and $\infty$-step TD learning. The $\infty$ step learning is equivalent to the Monte Carlo learning method.}
	\label{fig:TDinfinity}
\end{figure}

Unlike the Monte Carlo learning method, which performs a full episode with various random actions before it utilizes the reward, the temporal difference bootstraps the Q-function with the help of the values that it gathers from the previous time steps. This helps to refine the value function with the cumulative rewards after each time step (1-step is defined by, \begin{tikzpicture}
	\path (0,0) node[draw, circle, scale=0.6] (A) {};
	\path (1,0) node[circle, fill = black, scale=0.6] (B) {};
	\path (2,0) node[draw, circle, scale=0.6] (C) {};
	\draw[->] (A) -- (B);
	\draw[->] (B) -- (C);
\end{tikzpicture}, the arrow dented the flow of time). Based on this description, we can think of a middle ground with \textit{n-steps}, which fundamentally points to the fact that we neither sample a single step like TD learning nor do we sample a full episode like Monte Carlo, but we sample a few steps (say, $n$ steps) at a time before utilizing the reward values, which we illustrate this in~\figref{fig:TDinfinity}. This strategy allows for a more granular assessment of state values compared to TD learning, which may lead to faster convergence and reduced variance in the estimated values. Moreover, one of the advantages of the n-step approach is that it does not require the entire episode to be completed before updating, making the n-step TD learning computationally efficient over Monte Carlo methods, particularly in environments with long episodes. This middle-ground approach can thus be particularly advantageous in RL tasks where balancing the trade-offs between bias and variance is crucial for efficient learning and decision-making.

\paragraph{Exploration and exploitation}

\begin{figure}[H]
	\centering
	\includegraphics[height = 0.6\textheight]{ 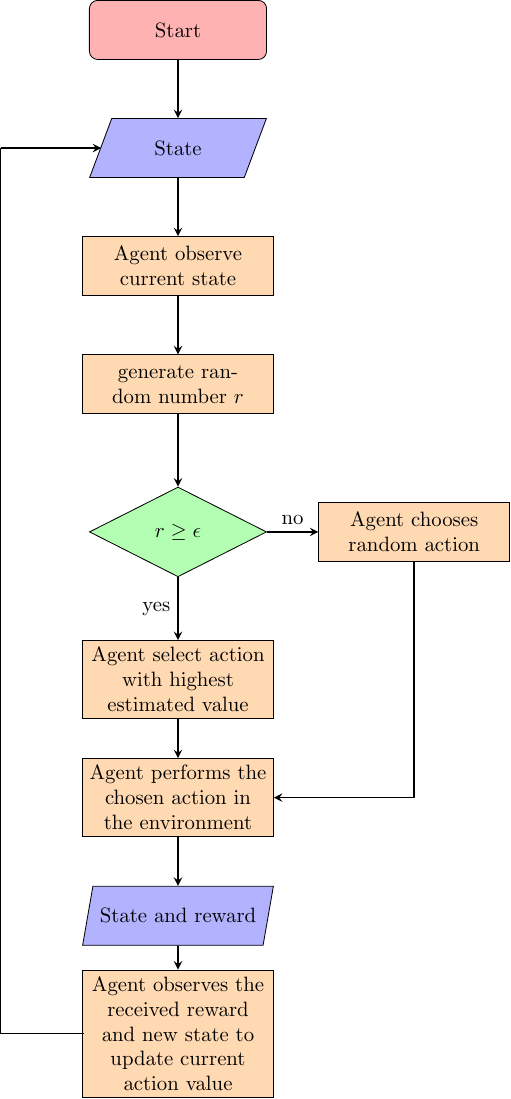}
	\caption{Block diagram of an $\epsilon$-greedy algorithm. In this algorithm, a parameter $\epsilon$ is used to determine the probability of choosing a random exploration action versus selecting the action with the highest expected reward. When $\epsilon$ is set to a small value, the algorithm tends to exploit the current best action, while a higher value encourages more exploration.}
	\label{fig:epsilon-greedy}
\end{figure}

In the context of the model-free RL exploration/exploitation are two primary concepts that are most frequently used in designing RL algorithms. The \textit{exploration} part indicates the process of collecting environment information of the agent by taking random actions. The goal of this routine is to discover new states of the environment, actions, and their corresponding rewards. Meanwhile, \textit{exploitation} is utilized to leverage the information that is gathered by maximizing the immediate rewards, and it focuses on the selection of actions that will provide the highest reward based on the existing knowledge of the environment. Hence, the \textit{exploration} and \textit{exploitation} complement each other.

One of the fundamental challenges in RL is to balance exploitation and exploitation because if the agent only focuses on exploration it never gets the chance to leverage the existing information whereas more exploitation of the immediate knowledge and the algorithm may get stuck in a non-optimal policy. This means the algorithm has failed to find a better strategy that leads to an optimal policy.

\paragraph{$\epsilon$-greedy exploration}

It is a well-known strategy in RL to maintain a balance between the exploration and exploitation aspect. The $\epsilon$-greedy method works by assigning a parameter $\epsilon$, which encodes the information regarding the probability of exploration as 
\begin{equation}
\textrm{Selection of action by agent} = \epsilon a_\textrm{r} + (1-\epsilon) a_\textrm{ev},\label{eq:epsilon-greedy}
\end{equation}
where $a_\textrm{r}$ is the random action and $a_\textrm{ev}$ is the action with highest estimate value. This algorithmic choice presented in~\eqref{eq:epsilon-greedy} is called the \textit{exploration/exploitation trade-off}. An illustration of the $epsilon$-greedy algorithm is presented in~\figref{fig:epsilon-greedy}.
There are other methods to define the exploration aspect, either by using the Thompson sampling~\cite{russo2017tutorial,sickles2019measurement} or adding the Dirichlet-noise~\cite{kotz2004continuous}.

\subsection{On-policy and off-policy learning}

The RL deals with learning an action and a policy from the received rewards. That is, the agent selects an action to perform on the environment and learns from the reward that it receives after taking the action. The agent learns to select an action for the next time-step. Now, the dilemma arises if the agent either updates from its most recent action (\textit{on-policy}) or learns from all the available information gathered (\textit{off-policy}). A tabular representation of the difference between the two kinds of learning is presented in {Tab.~\ref{tab:on-and-off-policy}}.

It should be noted that the on- and off-policy learning shows different behaviour in convergence to the optimal policy. Such as the off-policy method promises to converge to the optimal policy after sampling a sufficient number of states, so it uses a greedy reward approach. At the same time, the on-policy method for a fixed value of $\epsilon$ can not converge to the optimal policy as they keep on selecting the actions that are based on the current action. Meanwhile, when we use a policy where we keep on varying $\epsilon$ towards zero, then the on-policy method promises to converge.

In the following, we will briefly discuss two famous algorithms: (1) on-policy SARSA and (2) off-policy Q-learning.

\begin{table}[ht]
\begin{center}
	\begin{tabular}{c|c}
		On-policy & Off-policy   \\
		\hline
		\makecell{ The current policy determines\\
			the action to take and the\\
			value of that action is used\\
			to update the policy\\
			function} & 
		\makecell{ The determination of action\\
			takes place by backing up\\
			values of other action which\\
			is not necessarily selected\\
			by the behavior policy} \\ 
		\hline
		\makecell{ This learning process is stable\\
			since the agent updates the\\
			policy based on experiences\\
			from the current policy, making\\
			the learning process more\\
			consistent} & \makecell{
			Off-policy learning although less\\ stable compared to on-policy\\
			learning but here, the agent is\\
			open to learning from a diverse\\
			set of experiences, which has\\ 
			the potential to improve the\\
			exploration and discover\\
			better policies.} \\
		\hline
		\makecell{ The agent's exploration might be\\
			biased, which causes difficulty\\
			in exploring actions that are\\
			not properly defined on policy,\\ leading to sub-optimal policy.} & \makecell{
			It requires more data and\\ computational resources\\
			compared to on-policy\\
			learning as it learns\\ 
			from all the available \\
			information gathered\\ 
			before.}
	\end{tabular}
	\caption{A comparison of on- and off-policy learning.}
	\label{tab:on-and-off-policy}
\end{center}
\end{table}


\paragraph{SARSA} Is an on-policy algorithm that was first proposed by Rummery and Niranjan~\cite{rummery1994line}. The on-policy methods update the current policy by utilizing the action value of the policy itself. Hence, the update rule is given by
\begin{equation}
Q(s_t, a_t) \leftarrow Q(s_t, a_t) + \alpha\left[ r_{t+1} + \gamma Q(s_{t+1},a_{t+1}) - Q(s_t,a_t) \right].
\label{eq:sarsa}
\end{equation}
It should be noted that SARSA follows the same updating method as the TD method as presented through~\eqref{eq:temp_diff_v1}. The only difference is that the state-action value function replaces the state-value function.

SARSA updates the Q-values of the current state-action pair ($s_t,a_t$) by using the Q-values of the next state-action pair ($s_{t+1}, a_{t+1}$). To say it in a more elaborate manner, in the SARSA algorithm, we select an action and apply it to the environment, and then it follows the action that is guided by the behaviour policy. The behaviour policy is defined by either the $\epsilon$-greedy approach. Hence, in this learning process, the state space sampling is done by following the behavioural policy and updating the current policy by updating the values of the actions based on the sampling. 

\paragraph{Q-learning}
The Q-learning algorithm was first proposed by Watkins~\cite{cjc1989learning}. Unlike SARSA, Q-learning learns the Q-values using a different policy than the one it followed during the action selection process. The updated formula for Q-learning is given by 
\begin{equation}
Q(s_t, a_t) \leftarrow Q(s_t, a_t) + \alpha\left[ r_{t+1} + \gamma\max_a Q(s_{t+1},a) - Q(s_t,a_t) \right].
\label{eq:q-learning}
\end{equation}
The primary difference between SARSA and Q-learning update rule is the term $Q(s_{t+1},a_{t+1})$ is replaced by $\max_a Q(s_{t+1},a)$. This indicates that it now learns from the stored values of the best action instead of the action that was actually evaluated. The main reason Q-learning is called off-policy is that it updates the Q-values using the Q-value of the next state $s_{t+1}$ and a greedy action which is not necessarily the action of the behaviour policy.

One of the drawbacks of Q-learning is that as the size of the Q-table grows exponentially with the increase in the number of states and actions, it becomes infeasible. To tackle this hindrance, we can combine Q-learning with the deep neural network, which is known as a DQN~\cite{montavon2018methods}.

\paragraph{Deep Q-Network} In this algorithm, a deep neural network in the form $Q_\beta:\mathcal{S}\times \mathcal{A}\rightarrow\mathbb{R}$ is utilized to approximate the optimal Q function $Q^*$, where $\beta$ is the parameter of the neural network. There are two essential tricks to be noted to achieve success with DQN, and those are~\cite{mnih2015human}:
\begin{itemize}
\item The usage of \textit{experience reply}~\cite{lin1992self} in DQN helps to obtain uncorrelated samples since the trajectory of MDP has a strong temporal correlation. Specifically, for each time step $t$, we tend to store the tuple $(s_t, a_t, r_t, s_{t+1})$ into the reply memory, which is followed by the sampling of a mini-batch of independent samples from the replay memory. This is used to train the deep neural network using a stochastic gradient descent method.

\item The utilization of a target network $Q_{\theta^*}$ with parameter $\theta^*$ is another trick which we use in DQN. After collecting the independent samples $\left( s_i, a_i,r_i,s_{i+1} \right)$ from the replay memory, to update the parameters of the Q-network, we evaluate the target network as follows:
\begin{equation}
	E_\textrm{targ}^i = r_i + \gamma\times \textrm{max}_{a\in\mathcal{A}} Q_{\theta^*}(s_{i+1}, a),
\end{equation}
and then updating $\theta$ using the gradient of
\begin{equation}
	L(\theta) = \frac{1}{n}\sum_{i=1}^n \left[ E_\textrm{targ} - Q_\theta\left(s_i,a_i \right) \right]^2.\label{eq:network-param-update-target}
\end{equation}
Parameter $\theta^*$ is updated once in every $T$ i.e. the target network is held fixed for $T$ steps, and then we update it by the current weight of the~Q-network.
\end{itemize}

In~\cite{mnih2015human} the authors utilize a replay memory of size $10^{6}$ on the other hand, in~\cite{ostaszewski2021reinforcement} the authors use a replay memory of $2\times10^{4}$. This indicates that the replay memory size is usually very large. Additionally, in DQN, we use $\epsilon$-greedy policy to enable exploration over the state and action. In this scenario, when the replay memory is large, the experience replay is prone to sample independent transitions from an exploration-driven policy (as it is $\epsilon$-greedy), which decreases the variance of the term $\Delta L(\theta)$. As the $L(\theta)$ plays a vital role in updating the parameters of a neural network, so having a low variance in $L(\theta)$ stabilizes the training of DQN.

Meanwhile, to understand the necessity of a target network, we first set $\theta^* = \theta$. The bias-variance decomposition gives us the expected value of the $L(\theta)$ as
\begin{equation}
\mathbb{E}[L(\theta)] =\underbrace{ || Q_\theta - T^*Q_\theta||^2_\sigma}_{\text{\shortstack{mean-squared\\[-0.5ex] Bellman equation}}} + \underbrace{\mathbb{E}[E_\textrm{targ}^1 - (TQ_\theta) (s_1,a_1)]^2}_{\text{\shortstack{$\theta$ dependent \\[-0.5ex] variance of\\[-0.5ex] $E_\textrm{targ}^1$}}}\label{eq:network-param-update-wo-target},
\end{equation}
where $T^*$ is the \textit{Bellman optimality operator} defined in~\eqref{eq:bellman_optimality_operator}. In~\eqref{eq:network-param-update-wo-target} along with the mean-squared Bellman error, we have an additional bias of the form $\mathbb{E}[E_\textrm{targ}^1 - (TQ_\theta) (s_1,a_1)]^2$, that is strictly dependent on the neural network parameter. This indicates that minimizing the $L(\theta)$ can drastically differ from minimizing the mean-squared Bellman equation. This issue is by using the target network in~\eqref{eq:network-param-update-target} what has an expectation value
\begin{equation}
\mathbb{E}[L(\theta)] =\underbrace{ || Q_\theta - T^*Q_\theta||^2_\sigma}_{\text{\shortstack{mean-squared\\[-0.5ex] Bellman equation}}} + \underbrace{\mathbb{E}[E_\textrm{targ}^1 - (TQ_{\theta^*}) (s_1,a_1)]^2}_{\text{\shortstack{$\theta$ independent \\[-0.5ex] variance of\\[-0.5ex] $E_\textrm{targ}^1$}}},
\end{equation}
as the second term is independent of $\theta$ so minimizing $L(\theta)$ is approximately equivalent to solving
\begin{equation}
\underset{\theta \in \Theta}{\textrm{minimize}}|| Q_\theta - T^*Q_\theta||^2_\sigma,\label{eq:DQN-as-minimization-problem}
\end{equation}
where $\Theta$ is the parameter space. In simple words, the ultimate aim of DQN is to solve the minimization problem defined through~\eqref{eq:DQN-as-minimization-problem} with a fixed $\theta^*$ and it updates the $\theta^*$ by the minimizer parameter of the neural network $\theta$.
\begin{algorithm}
\caption{FQI algorithm}\label{alg:fqi-pseudocode}
\begin{algorithmic}
	\State MDP as tuple $(\mathcal{S}, \mathcal{A}, p, \mathcal{R}, \gamma)$, define $\mathcal{F}$, sampling distribution $\sigma$, total iterations $J$, sample number $n$, initial estimator $\tilde{Q}_0$.\Comment{Input}
	\For {$j=0,\ldots,J-1$ }
	
	\textit{Sample} \textit{i.i.d.} $\{(s_j\in \mathcal{S}, a_j\in\mathcal{A},R_j\in\mathcal{R},s_{j+1})\}_{i\in[n]}$, with $(s_i, a_i)$ sampled from $\sigma$.
	
	\textit{Compute} $E_\textrm{targ}^j = R_j+\gamma\times\underset{a\in\mathcal{A}}{\text{max}}\tilde{Q}_j(s_{i+1},a)$
	
	\textit{Update} the action-value function
	\begin{equation}
		\tilde{Q}_{k+1}\leftarrow \underset{f\in\mathcal{F}}{\textrm{argmin}}\frac{1}{n}\sum_{i=1}^n\left[ E^i_\textrm{targ}-f(s_i,a_i) \right]^2\nonumber
	\end{equation}
	\EndFor\\
	\textit{Define} the policy $\pi_J$ as a greedy policy in respect with $\tilde{Q}_J$
	\State An estimator $\tilde{Q}_J$ of $Q^*$ and policy $\pi_J$.\Comment{Output}
\end{algorithmic}
\end{algorithm}

An implementable version of the above-discussed DQN is known as the neural Fitted Q-Iteration (FQI) algorithm. This generates a sequence of value functions. Let us consider $\mathcal{F}$ is a class of functions on the state-action space. For the $j$-th iteration of the algorithm, we consider $\tilde{Q}_j$ is the present estimate of the $Q^*$. Hence, the update rule of $\tilde{Q}_j$ is defined as
\begin{equation}
\tilde{Q}_{k+1} = \underset{f\in\mathcal{F}}{\textrm{argmin}}\frac{1}{n}\sum_{i=1}^n\left[ E^i_\textrm{targ}-f(s_i,a_i) \right]^2.
\end{equation}
We can replace the $\mathcal{F}$ by the neural networks, and then the algorithm is known as neural FQI~\cite{riedmiller2005neural}. Therefore, we can consider neural FQI as a variation of DQN, where we substitute experience replay with sampling from a stable distribution to understand the statistical characteristics.

\paragraph{Double Deep Q-Network}

{Deep RL methods employ neural networks to adapt the agent's policy for optimizing the return
$$
G_t=\sum_{k=0}^{\infty} \gamma^{k} r_{t + k + 1},
$$
with the discount factor $\gamma \in[0,1)$.
Each state and action pair $(s, a)$ can then be assigned an action-value that quantifies the expected return from state $s$ in step $t$ taking action $a$ under policy $\pi$
$$
q_\pi(s, a)=\mathbb{E}_\pi\left[G_t \mid s_t=s, a_t=a\right].
$$
The aim is to find the optimal policy that maximizes the expected return. 
Such a policy can be derived from the optimal action-value function $q_*$, defined by the Bellman optimality equation:
$$
q_*(s, a)=\mathbb{E}\left[r_{t+1}+\max_{a^{\prime}} q_*\left(s_{t+1}, a^{\prime}\right) \mid s_t=s, a_t=a\right].
$$
Instead of directly solving the Bellman optimality equation in value-based RL, the aim is to learn the optimal action-value function from data samples. 
One such prominent value-based $\mathrm{RL}$ algorithms is $Q$-learning, where each state-action pair $(s, a)$ is assigned a so-called $Q$-value $Q(s, a)$ which is updated to approximate $q_*$.
Starting from randomly initialized values, the Q-values are updated according to the following rule:
$$
Q(s_t, a_t) \leftarrow Q(s_t, a_t) + \alpha \left(r_{t+1} +\gamma \max_{a^{\prime}} Q\left(s_{t+1}, a^{\prime}\right) - Q(s_t, a_t)\right),
$$
where $\alpha$ is the learning rate, $r_{t+1}$ is the reward at time $t+1$, and $s_{t+1}$ is the next encountered state after taking action $a_t$ in state $s_t$. 
\begin{algorithm}[H]
	\caption{Double Q-Learning}\label{alg:DDQN-pseucode}
	\begin{algorithmic}
		\State Initial network $\leftarrow Q^\theta$, target network $\leftarrow Q^{\theta^\prime}$, replay buffer $\leftarrow\mathcal{D}$, $\tau<<1$ \Comment{Input}
		\For {each number of iterations}
		\For {each step}
		\State Observe state $s_t$ and choose $a_t\sim\pi(a_t,s_t)$
		\State Apply $a_t$, observe $s_{t+1}$ and $r_t = R(a_t,s_t)$
		\State Store $(s_t,a_t,r_t,s_{t+1})$ in $\mathcal{D}$
		\EndFor
		\For{each update step}
		\State Sample $e_t=(s_t,a_t,r_t,s_{t+1})\sim\mathcal{D}$
		\State Compute Q-value:
		\begin{equation}
			Q^*(s_t,a_t)\approx r_t +\gamma Q^\theta\left( s_{t+1},\mathrm{argmin}_aQ^{\theta^\prime}(s_{t+1},a) \right)
		\end{equation}
		\State Perform gradient descent on $\left[Q^*(s_t,a_t)-Q^\theta(s_t,a_t)\right]^2$
		\State Update the Q-network parameter:
		\begin{equation}
			\theta^\prime \leftarrow \tau\times\theta + (1-\tau)\times\theta^\prime
		\end{equation}
		\EndFor
		\EndFor
	\end{algorithmic}
\end{algorithm}

In the limit of visiting all $(s, a)$ pairs infinitely often, this update rule converges to the optimal Q-values in the tabular case~\cite{melo2001convergence}.
In practice, a so-called $\epsilon$-greedy policy is used to ensure sufficient exploration in a Q-learning setting. Formally, stated as,
$$
\pi(a \mid s)=\left\{\begin{array}{l}
	1-\epsilon_t \text { for } a=\max _{a^{\prime}} Q\left(s, a^{\prime}\right) \\
	\epsilon_t \text { otherwise }
\end{array}\right.
$$
The $\epsilon$-greedy policy is only used to introduce randomness to the actions selected by the agent during training, but once training is finished, a deterministic policy follows.}
{We employ neural networks (NN) as function approximators to extend Q-learning to large state and action spaces.
NN training typically requires independently and identically distributed data, which isn't naturally available in the sequential RL data. 
This problem is circumvented by experience replay.
This method divides past experiences into single-episode updates, creating batches randomly sampled from memory.
To stabilize training, two NNs are employed: a policy network that is continuously updated and a target network that is an earlier copy of the policy network. 
The policy network estimates the current value, while the target network provides a more stable target value represented by 
$Y$:
$$
Y_{\mathrm{DQN}}=r_{t+1}+\gamma \max_{a^{\prime}} Q_{\mathrm{target}}\left(s_{t+1}, a^{\prime}\right)
$$}
{In the Double deep Q-network (DDQN) algorithm, the action for the target value is sampled from the policy network to reduce the overestimation bias inherent in standard DQN. The corresponding target is defined as:
$$
Y_{\mathrm{DDQN}}=r_{t+1}+\gamma Q_{\mathrm{target}}\left(s_{t+1}, \underset{a^{\prime}}{\arg\max } Q_{\text {policy }}\left(s_{t+1}, a^{\prime}\right)\right) .
$$
{This target value is approximated using a selected loss function, in this case, a smooth L$1$-norm loss.}

\section{RL-based quantum architecture search algorithm}
Through the discussions in the previous sections, we notice that learning an RL-agent primarily depends on the following ingredients. The environment specifications, a proper description of environment (RL-state), the formulation of the reward function and the action space encoding. Hence, while constructing an RL-based quantum architecture algorithm to solve variational quantum algorithms, it is crucial to define the environment, the state of the RL, the action, and the reward function in an agent-friendly way. 
\begin{figure}[h]
	\centering
	\includegraphics[width=0.9\columnwidth]{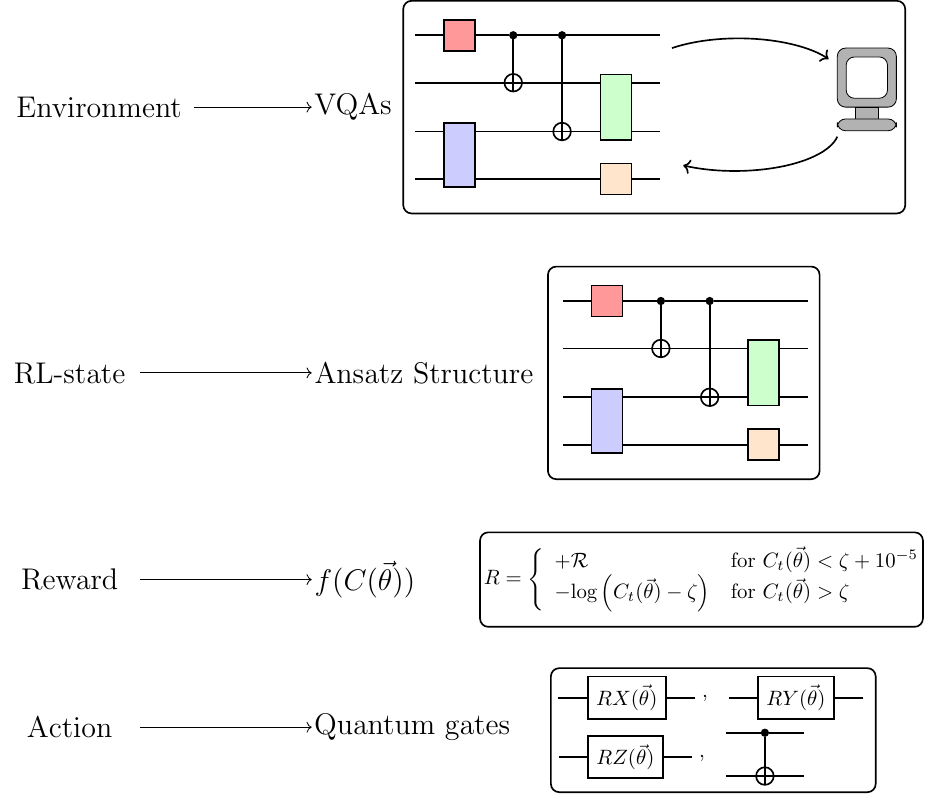}
	\caption{The crucial ingredients to cook a successful RL-based QAS algorithm. Here, the environment is defined through the hybrid quantum-classical algorithm. The RL-state is represented through the variational quantum circuit, the reward function is a function of the cost function that encodes the problem, and the actions are defined by one- and two-qubit quantum gates.}
	\label{fig:rlqas_ingredients}
\end{figure}


In the \figref{fig:rlqas_ingredients}, we illustrate an effective yet simple way to define the ingredients in order to construct an RL-based quantum architecture search (QAS) algorithm. The environment is specified through the quantum-classical algorithm, where the variational quantum circuit, i.e. the ansatz, is considered RL-state. It should be noted that there are several ways to encode the ansatz in an RL-agent readable way~\cite{fosel2021quantum, ostaszewski2021reinforcement}, but we (which is elaborated in the upcoming chapters) utilize a novel 3D tensor-based binary encoding scheme. Meanwhile, the reward function can be defined in a sparse or dense manner. In the illustration, the reward is dense as the cost function is calculated, and then, based on its value, the reward is defined in each step $t$ of an episode. If the cost function approaches the predefined threshold value $\zeta$, the agent receives a highly positive response (+$\mathcal{R}$). In other scenarios, instead of punishing the agent, we define the cost so that the agent reaches closer to the goal (here, the predefined threshold $\zeta$), and the reward becomes more positive. This makes the learning curve of the agent smoother and not stricter than a sparse (punishing) reward function. This will be more elaborately discussed in the upcoming section when we present the RL-enhanced variational quantum state diagonalization (RL-VQSD) algorithm in chapter~\ref{ch:vqsd_using_rl}. It should be noted that we also make use of a sparse reward function in the case of the curriculum RL-based variational quantum eigensolver (CRLVQE) algorithm presented in chapter~\ref{ch:crlvqe} in order to compare its performance with the state-of-the-art RL methods.
Finally, after each step $t$, the agent decides on an action to take from the action space. The action space is represented by a continuous set of parameterized one-qubit rotation gates, and for two-qubit gates throughout the article, we utilize the \texttt{CNOT} gate. The action space is defined using one-hot encoding, which will be elaborated in the upcoming chapter.

Having all the ingredients for constructing an RL-based QAS algorithm, the upcoming chapters focus on specific applications of this design. 

{%
	\section{The network architecture in Double Deep-Q Network}
	The neural network in the DDQN implementation is a \textit{fully connected feedforward network~\cite{abiodun2018state} also known as a multilayer perceptron} that serves as a function approximator for the Q-function \( Q(s, a) \). The architecture is defined as follows:
	\paragraph{Input Layer}
	The input layer has a size of RL-state, which corresponds to the number of features representing the state of the environment. In the case of quantum architecutre search the RL-state corresponds to the quantum circuit encoding represented in later section~\ref{sec:binary-encoding-scheme}. The state size can be adjusted based on the inclusion of additional information such as type of gates, which qubit it is added and the depth of the circuit.}
\begin{figure}[t!]
	\includegraphics[width=\linewidth]{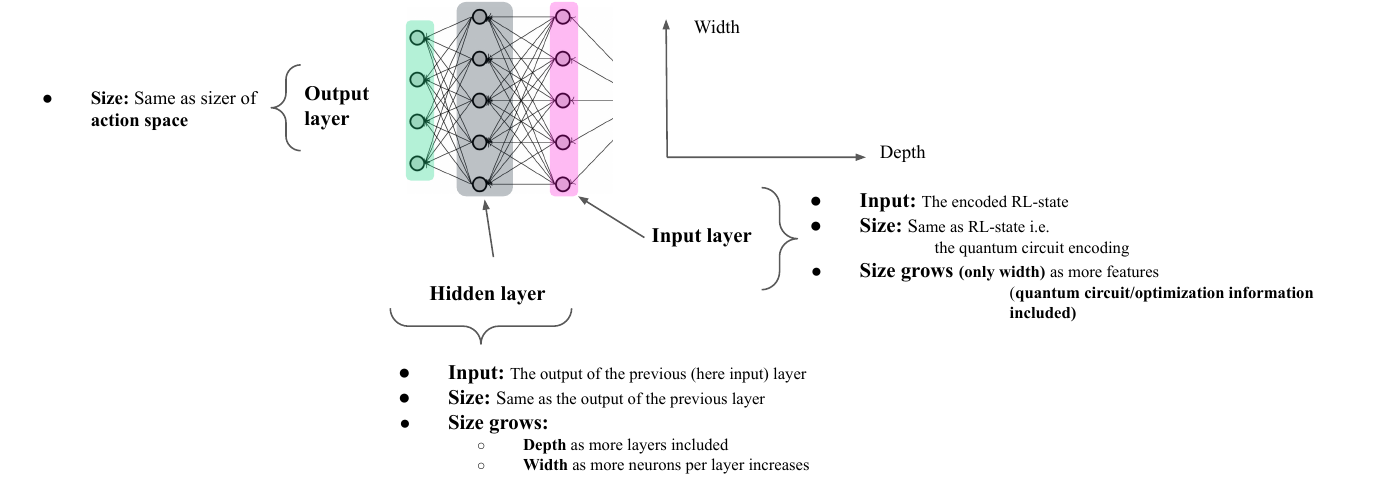}
	\caption{A schemetic of multi-layer perceptron architecture. The size of the input state is the size of the encoded ansatz and the output state size depends on the size of the action space. The action space contains all possible combinations of quantum gates on different qubits.}
\end{figure}

{\paragraph{Hidden Layers}
	The hidden layers are fully connected, with each layer followed by a LeakyReLU activation function and dropout for regularization:
	\[
	\text{Hidden Layers:} \quad [\text{Linear}(n_i, n_{i+1}) + \text{LeakyReLU} + \text{Dropout}(p)] 
	\]
	where \( n_i \) represents the number of neurons in layer \( i \) and \( p \) is the dropout probability. It should be noted that $n_1$ = size of the RL-state.
	\paragraph{Output Layer}
	The output layer is fully connected with a size equal to the action space. In our case the possible quantum gates and its connectivity among qubits. This layer directly produces the Q-values for all possible actions, without any activation function.
	\subsection{Role of the network and its training}
	The network (which we call the policy network) corresponding to the state of the environment, selects an action based on the $\epsilon$-greedy policy~\ref{eq:epsilon-greedy}. A copy of the network (which we term as target network) is updated periodically to stabilize learning process. The training of the target network involves the following steps:
	\begin{itemize}
		\item \textbf{Forward pass:} To compute Q-values for each possible action in a given state using the policy network.
		\item \textbf{Loss Calculation:} Computes the loss using the smooth L1 loss function. The smooth L1 loss is defined by:
		\begin{equation}
			\text{L1}(x) = 
			\begin{cases} 
				0.5 \cdot x^2 & \text{if } |x| < \beta, \\
				|x| - 0.5\beta & \text{otherwise},
			\end{cases}
		\end{equation} 
		where $x$ is the difference between the target and predicted value and $\beta$ is a hyperparameter that determines the threshold at which the loss transitions from quadratic (L2) to linear (L1). When $|x|<\beta$, the loss behaves like L2 loss.
		\item \textbf{Backpropagation:} Calculates the gradients with respect to the loss obtained in the previous step. We encourage the reader to see ref~\cite{rojas1996backpropagation} for the details of the backpropagation algorithm.
		\item \textbf{Optimization:} Update the policy network parameters using the Adam optimizer~\cite{kingma2014adam}.
	\end{itemize}
	{This network architecture, with its configurable depth and regularization techniques, allows the DDQN agent to effectively learn policies. In a recent study~\cite{kundu2024kanqas}, the authors replace the traditional \textit{multilayer perceptron} structure with a \textit{Kolmogorov-Arnold network}~\cite{liu2024kan} architecture for DDQN in QAS. This approach demonstrates significant potential in optimizing 2-qubit gates compared to the state-of-the-art multilayer perceptron models.}

\chapter{Reinforcement Learning assisted Variational Quantum State Diagonalization: RL-VQSD }\label{ch:vqsd_using_rl}

In this chapter, we provide an overview of the results concerning the utilization of Reinforcement Learning in the task of Variational Quantum State Diagonalization which we call the RL-VQSD method. We start with a review of the state-of-the-art approach for quantum state diagonalization, followed by an investigation of the efficiency of various classical optimization techniques and ansatz structure. 
Next, we introduce a novel binary encoding scheme for quantum circuits that improves the 
\emph{efficient search for a diagonalizing unitary and provides an enhancement in the accuracy for the variational quantum state diagonalization (VQSD) scheme}. Moreover, a carefully constructed dense reward function makes the RL-VQSD more efficient in terms of the number of gates and depth of the diagonalizing unitary.
In the last part, we demonstrate the example where the proposed techniques lead to a significant improvement of the VQSD algorithm.

\section{Introduction}
One of the most prominent variational quantum algorithms is called the variational quantum state diagonalization (VQSD)~\cite{larose2019variational}. This procedure utilizes a quantum-classical hybrid procedure to identify the unitary rotation under which the given quantum state becomes diagonal in the computational basis, i.e. it
diagonalizes a quantum state. This has  several applications, including quantum state fidelity
estimation~\cite{cerezo2020variational}, device certification~\cite{kundu2022variational}, Hamiltonian diagonalization~\cite{zeng2021variational}, and extracting the entanglement properties of a system~\cite{larose2019variational}. VQSD generalizes the well-studied problem of quantum state preparation, which can be understood as quantum state tomography for pure states. Considering it has applications that range from quantum information to condensed matter physics, an efficient way to deal with quantum state diagonalization may lead to interesting insights in these fields.

It is worth mentioning that classical methods of diagonalization scale polynomially with the size of the matrix~\cite{demmel2008performance} and in {Tab.~\ref{tab:classical_diag_algo}} we list a few well-known classical diagonalization algorithms and their complexity. Meanwhile, quantum principal component analysis~\cite{lloyd2014quantum} (qPCA) proposes an exact algorithmic method for quantum state diagonalization. 
\begin{table}[H]
	\centering
	\begin{tabular}{c|c}
		Methods           & Asymptotic Complexity \\
		\hline
		QR iteration~\cite{watkins1982understanding}
		& O($n^2$)                 \\
		BI iterations~\cite{strobach1997bi}
		& O($nm$)                 \\
		Divide \& conquer~\cite{gu1995divide}           
		& O($n^3$)                 \\
		MR method~\cite{petschow2012algorithm}
		& O($n^2$)                
	\end{tabular}
	\caption{Complexity comparison of the classical methods for diagonalization.}
	\label{tab:classical_diag_algo}
\end{table}

Nevertheless, qPCA often results in complex circuits, which variational approaches could potentially surpass. However, a significant challenge in VQSD lies in devising an \textit{efficient} ansatz capable of diagonalizing a specific quantum state. 


Many factors can be used as a pointer for an efficient ansatz ($\mathcal{A}_\text{eff}$) but for the sake of the thesis we primarily focus on the following definition to define an efficient ansatz~\cite{kundu2024enhancing}:



\begin{definition}\label{def:efficient_ansatz}
	An ansatz is efficient if the depth and the total number of gates are smaller compared to the state-of-the-art ansatz structures, and which returns a lower error in solving the problem.
\end{definition}


Here, we can formulate the task of finding an efficient ansatz that can be expressed (ideally) in the following way
\begin{equation}
	\mathcal{A}_\text{eff} = \underset{\mathcal{D},\mathcal{N}_g,\Delta}{\text{min}} f(\mathcal{D},\mathcal{N}_g,\Delta),
	\label{eq:efficient_ansatz}
\end{equation}
where $\mathcal{D}$ is the total depth, $\mathcal{N}_g$ is the total number of gates in the ansatz. Meanwhile, $\Delta$ is the error we receive using the ansatz. Based on the fact that the parameters $\mathcal{D}$, $\mathcal{N}_g$, and $\Delta$ increases with the increase in number of qubits the overall $\mathcal{A}_\text{eff}$ is a monotonically increasing function of these parameters. And the task of finding the efficient ansatz boils down to decreasing the rate of the growth of this function with the number of qubits. 
It should be noted that based on the problem, the definition of $\Delta$ varies, as we will show in the upcoming sections.


In the standard VQSD methods, a layered hardware efficient ansatz (LHEA) is
utilized as shown in~\figref{fig:layered-ansatz}. 
\begin{figure}[ht!]
	\begin{subfigure}{\textwidth}
		\centering
		\begin{tikzpicture}
			\node[scale=0.8] {
				\begin{quantikz}
					& \gate[wires=2][1cm]{G(\vec{\theta})} & \qw \\
					& \qw &\qw
				\end{quantikz}
				=\begin{quantikz}
					& \gate{RY(\theta_1)} & \ctrl{1} & \gate{RY(\theta_1)} \\
					& \gate{RY(\theta_1)} & \ctrl{-1} & \gate{RY(\theta_1)}
				\end{quantikz}
			};
		\end{tikzpicture}
		\label{fig:layered-anzatz-one}
	\end{subfigure}
	\begin{subfigure}{\textwidth}
		\centering
		\begin{tikzpicture}
			\node[scale=0.8] {
				\begin{quantikz}
					& \gate[wires=2][1cm]{G(\vec{\theta})} & \qw \\
					& \qw &\qw
				\end{quantikz}
				=\begin{quantikz}
					& \gate{RZ(\theta_1)} & \gate{RY(\theta_2)} & \gate{RZ(\theta_3)} & \ctrl{1} & \gate{RZ(\theta_1)} & \gate{RY(\theta_2)} & \gate{RZ(\theta_3)}\\
					& \gate{RZ(\theta_1)} & \gate{RY(\theta_2)} & \gate{RZ(\theta_3)} & \targ{} & \gate{RZ(\theta_1)} & \gate{RY(\theta_2)} & \gate{RZ(\theta_3)}
				\end{quantikz}
			};
		\end{tikzpicture}
		\label{fig:layered-anzatz-three}
	\end{subfigure}
	\caption{Two possible decompositions of the two-qubit rotations in each layer-wise unitary $U_i(\vec{\theta}_i)$ of the layer-ansatz in~\figref{fig:layered-ansatz}.}
	\label{fig:two_qubit_rot_decomp_layered_ansatz}
\end{figure}
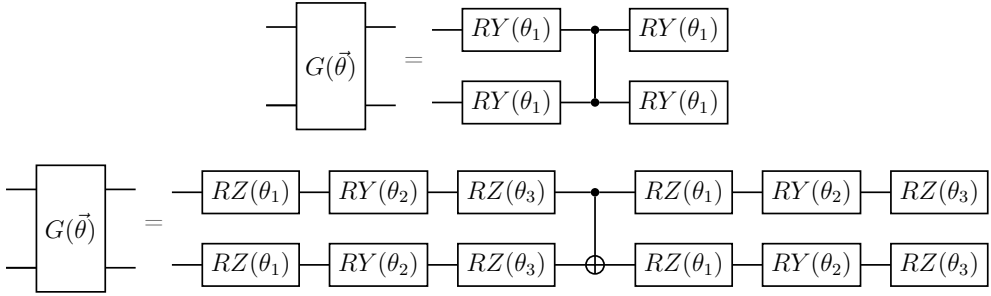

A single layer of the ansatz contains two-qubit gates acting on neighboring qubits. These gates are decomposed in two ways illustrated in Fig.~\ref{fig:two_qubit_rot_decomp_layered_ansatz}. Although in the LHEA, the parameter count increases linearly with the number of layers and qubits, it has trainability issues and often encounters local minima. To tackle the trainability issue, instead of using a fixed structure of LHEA, the authors allow additional updates (i.e. changes in the ansatz structure) during the classical optimization process. In this process, every optimization step involves the minimization of the cost function with a small random change to the ansatz structure. The new structure is approved or rejected based on a simulated annealing scheme~\cite{cincio2018learning}. Although the varying structure LHEA outperforms fixed structure LHEA, the number of gates in the quantum circuit increases rapidly as we scale the size of the quantum state. Hence, the problem of finding a method to construct an ansatz that satisfies all efficiency criteria is still an open problem.

To tackle the problem, we incorporate a Reinforcement Learning (RL) agent to automate the search for an efficient ansatz for VQAs. In the past several novel approaches have been introduced in the light of machine learning techniques to address the challenge of finding a new architecture of ansatz~\cite{ostaszewski2021reinforcement,kuo2021quantum,ye2021quantum,he2023gnn,zhang2022differentiable,du2022quantum,patel2022reinforcement,moro2021quantum, cao2022quantum}. In a nutshell, the RL-based algorithm, i.e. RL-VQSD we introduce in~\cite{kundu2024enhancing} utilizes a state-of-the-art encoding method for RL-state space, representing the variational circuit, along with a dense reward function to show that our approach can be successfully used for
diagonalizing arbitrary mixed quantum states. Moreover, we also demonstrate that compared to LHEA, the ansatz proposed by the RL-agent is more efficient (Def.~\ref{eq:efficient_ansatz} for the definition of efficient ansatz). The methods we discuss are algorithm-independent, so they can be easily adopted to tackle any VQAs.


In the following, we first briefly discuss the VQSD algorithm and the ways to implement it. Using LHEA we analyze its performance for random quantum states. Next, we give a brief discussion on how we can incorporate reinforcement learning (RL) to automate the VQSD process, which throughout the article we call the RL-VQSD method.

\section{Previous work}
In this section, we briefly discuss the variational quantum state diagonalization algorithm and analyze its performance. We will use the described procedure as the starting point for the utilization of reinforcement learning to improve efficiency.

\subsection{The VQSD algorithm} \label{sec:vqsd-algorithm-brief}
The variational quantum state diagonalization algorithms, introduced in~\cite{larose2019variational}, aim to identify the unitary rotation under which the given quantum state becomes diagonal on the computational basis. Hence, for a given state $\rho$, VQSD composed of the three following subroutines (see~\figref{fig:ill-vqsd})

\begin{itemize}
	
	\item \textsc{Training} In 
	this subroutine, for a given state 
	$\rho$, one optimizes the 
	parameters $\vec{\theta}$ of a 
	quantum gate sequence 
	$U(\vec{\theta})$, which 
	(ideally) after optimization 
	satisfies 
	\begin{equation}
		\tilde{\rho} = U(\vec{\theta}_\text{opt})\rho U(\vec{\theta}_\text{opt})^\dagger = \rho_\text{diag},
	\end{equation}
	where $\rho_\text{diag}$ is the diagonalized $\rho$ in its eigenbasis and $\vec{\theta}_\text{opt}$ are the optimal angles. One can utilize classical gradient-based methods such as \texttt{SPSA}~\cite{spall1998implementation} and Gradient-Descent~\cite{ruder2016overview}, or gradient-free optimization methods such as \texttt{COBYLA} and \texttt{POWELL}~\cite{powell1994direct} in the training process.
	
	\item \textsc{Eigenvalue Readout} In this subroutine, using the optimized unitary $U(\vec{\theta}_\text{opt})$ and one copy of state $\rho$, one can extract -- for low-rank states -- all the eigenvalues or --  for full-rank state -- the largest eigenvalues. This is achieved by measuring the $\tilde{\rho}$ in the computational basis, $\mathbf{b}=b_1b_2\ldots b_n$, as follows
	\begin{equation}
		\tilde{\lambda} = \bra{\mathbf{b}}\tilde{\rho}\ket{\mathbf{b}},
	\end{equation}
	where $\tilde{\lambda}$ are inferred eigenvalues.
	
	\item \textsc{Eigenvector Preparation} In the final {step} one can prepare the eigenvectors associated with the largest eigenvalues. If $\tilde{\mathbf{b}}$ is a bit string associated with $\tilde{\lambda}$ then one can get the inferred eigenvectors $|\tilde{v}_{\tilde{\mathbf{b}}}\rangle$ as follows
	\begin{equation}
		|\tilde{v}_{\tilde{\mathbf{b}}}\rangle = U(\theta_\textrm{opt})^\dagger\ket{\tilde{\mathbf{b}}} = U(\theta_\textrm{opt})^\dagger\left(X^{b_1}\otimes\ldots\otimes X^{b_n} \right)\ket{\mathbf{0}}.
	\end{equation}
\end{itemize}

\begin{figure}[ht]
	\centering
	\includegraphics[width = 0.75\linewidth]{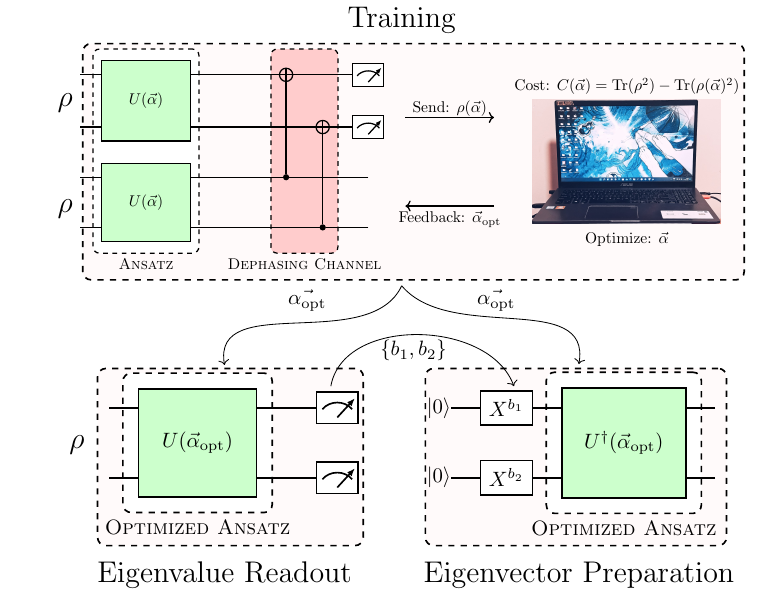}
	\caption{Elements of variational quantum state diagonalization (VQSD) algorithm. In the presented example, we consider the diagonalization for a two-qubit input state.}
	\label{fig:ill-vqsd}
\end{figure}

\subsection{The cost function}

The VQSD algorithm focuses on finding a unitary that diagonalizes a quantum state; hence, the cost function should encode the information about how far a state $\rho$ is from being diagonal. Meanwhile, we need to ensure that the cost function choice follows the three golden rules of cost function construction that are briefly discussed in {Sec.~\ref{sec:cost_function_introduced}}. One such cost-efficiently computable cost function is
\begin{equation}
	C(\vec{\theta}) = \text{Tr}(\rho^2) - \text{Tr}(\mathcal{D}(\tilde{\rho})^2),\label{eq:cost_func_vqsd}
\end{equation}
where $\mathcal{D}$ denotes the dephasing channel to minimize the off-diagonal terms in $\tilde{\rho}$. The cost function~\eqref{eq:cost_func_vqsd} is upper bounded by the error in the eigenvalue of $\rho$. Where we define the eigenvalue error by
\begin{equation}
	\Delta_i = \sum_{i=1}^m\left(\lambda_i - \tilde{\lambda}_i\right)^2,
\end{equation}
where $m$ two-qubit the number of the largest eigenvalues, $\lambda_i$ is the true eigenvalue and $\tilde{\lambda}_i$ is the inferred eigenvalue obtained from the \textsc{eigenvalue readout} subroutine. In the ideal case, where the state is completely diagonalized, $m = 2^n$ indicates all the eigenvalues have been considered. 

It should be noted that in the cost function~\eqref{eq:cost_func_vqsd}, the 1st term on the RHS can be computed outside the optimization loop, and we left out with the main task of evaluating the 2nd term. Also, the cost function landscape for the cost function turns insensitive to the changes in the ansatz when we scale up the size of the quantum state, but as we constrain ourselves to $n\leq6$ qubits, we can efficiently utilize the cost~\eqref{eq:cost_func_vqsd}.

In the following, we first analyze the performance of different ansatz and optimizers to diagonalize a two-qubit mixed quantum state.
\subsection{Benchmarking the performance of LHEA}


In the following, we rigorously investigate the state-of-the-art VQSD method for two- and three-qubit uniformly distributed random quantum states that are generated by using the \texttt{random\_density\_matrix} generator of \texttt{qiskit.quantum\_info} module\footnote{\href{https://qiskit.org/documentation/stubs/qiskit.quantum_info.random_density_matrix.html}{https://qiskit.org/documentation/stubs/qiskit.quantum\_info.random\_density\_matrix.html}} of \texttt{qiskit}~\cite{qiskit2019}. We do the investigation for two-qubit random mixed quantum states and focus on the performance of LHEA. Further, we benchmark classical optimization methods in the task of diagonalizing a two-qubit mixed quantum state.
\begin{figure}[ht]
	\centering
	\includegraphics[scale=0.9]{ 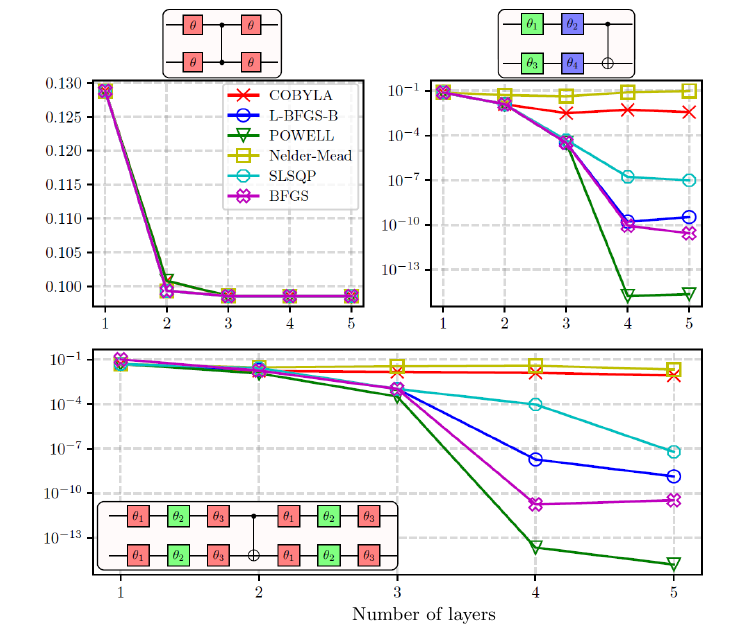}
	\caption{Three kinds of LHEA in diagonalizing a two-qubit quantum state using VQSD. We can see from the results that the \texttt{RYCZ} structure performs worse than the other two structures in terms of finding the optimal cost function for the task.
		In the pictures \fcolorbox{black}{red!60}{\rule{0pt}{6pt}\rule{6pt}{0pt}}=\texttt{RY}, \fcolorbox{black}{green!60}{\rule{0pt}{6pt}\rule{6pt}{0pt}}=\texttt{RZ} and \fcolorbox{black}{blue!60}{\rule{0pt}{6pt}\rule{6pt}{0pt}}=\texttt{RX} gate.}
	\label{fig:diff_ansatz_compare}
\end{figure}

\paragraph{Performance of LHEA}\label{sec:peformance_of_lhea} 

In~\figref{fig:diff_ansatz_compare}, we explore the possible constructions of HEA and briefly discuss their performance. For name convention, we call the ansatz in LHS of the first row as \texttt{RYCZ} and the RHS one as \texttt{RZRX CNOT} ansatz where we call the ansatz in the 2nd row as \texttt{RYRZRY CNOT} ansatz. It can be seen that for the \texttt{RYCZ} ansatz, the convergence with the number of parameters saturates at $10^{-1}$ for all kinds of optimizers, whereas the other two ansatz shows good performance in minimizing the cost. 
\begin{figure}[ht]
	\centering
	\includegraphics[scale=0.9]{ 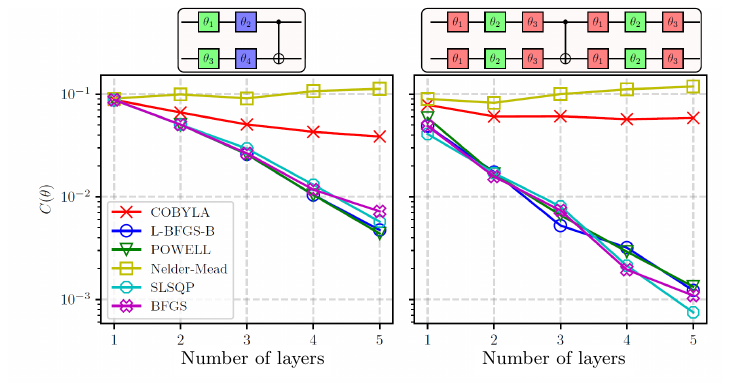}
	\caption{By considering two best performing structures of ansatz for two-qubit, we investigate the performance of these structures in cost function minimization for three-qubit state diagonalization. The figure shows that \texttt{RYRZRY CNOT} structure outperforms in terms of accuracy compared to \texttt{RZRX CNOT}.}
	\label{fig:3q_ansatz_compare}
\end{figure}



The convergence of the \texttt{RYRZRY CNOT} and \texttt{RZRX CNOT} ansatz shows comparable performance. Both ansatzes have their advantages and disadvantages which are discussed in the following points.


\begin{itemize}
	\item In the case of the \texttt{RZRX CNOT} ansatz, the number of parametrized gates is very small and relatively susceptible to noise. On the other hand, due to a higher number of different parameters, i.e. $\theta_1,\theta_2,\theta_3$ and $\theta_4$ for larger quantum states, it fails to optimize the cost function.
	\item In the case of the \texttt{RYRZRY CNOT} ansatz, the number of different parameters in gates is very small, making it more suitable for optimization. On the other hand, due to the increase in number of gates, it is not susceptible to noise.
\end{itemize}
In~\figref{fig:3q_ansatz_compare} we further investigate the performance of \texttt{RYRZRY CNOT} and \texttt{RZRX CNOT} ansatz for a three quantum mixed quantum state, and we can clearly see that \texttt{RYRZRY CNOT} outperforms the other so we use the \texttt{RYRZRY CNOT} construction of LHEA for VQSD throughout the paper, if not stated otherwise.
\paragraph{Benchmarking optimizers}

In~\figref{fig:optimizer_benchmark_vqsd} \texttt{RYRZRY CNOT} construction of LHEA to diagonalize 100 Haar-random mixed quantum states is presented.

\begin{figure}[ht]
	\centering
	\includegraphics[scale=0.9]{ 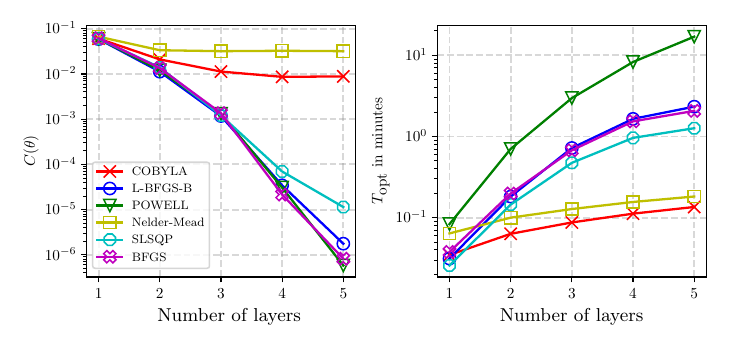}
	\caption{Different optimizers' performance in minimizing the cost function in diagonalizing 100 random two-qubit mixed states. It can be seen that among the gradient-free optimizers, {POWELL} performs optimally, whereas the time ($T_\textrm{opt}$ in seconds) it takes to provide the optimized result is higher than other optimizers. In the case of the gradient-based optimizer, the \texttt{L-BFGS-B} and the \texttt{BFGS} outperform other similar optimizers.}
	\label{fig:optimizer_benchmark_vqsd}
\end{figure}



From the left-hand side of~\figref{fig:optimizer_benchmark_vqsd} we can see that we need at least 3 layers to get a good approximation to the diagonal state of an arbitrary two-qubit mixed state. This indicates we need at least 4 \texttt{CNOT} and $12\times4 =48$ parameterized gates with a total depth of $ 7\times4 = 28$ to get a good estimate on the diagonalizing unitary. Among all the optimizers, the POWELL optimizer gives us the best outcome; meanwhile, on the right-hand side of~\figref{fig:optimizer_benchmark_vqsd} we notice that the $T_\text{opt}$, i.e. the time (in minutes) it takes to finish a complete round of optimization\footnote{Each round of optimization contains 10 runs of the optimizer with uniform random initialization and we choose the best parameters among the 10 runs.} grows rapidly as we increase the number of layers. On the other, COBYLA finishes the complete optimization process in just 0.1 minutes. Still, the cost function does not go below $10^{-3}$ and hence fails to give us a good approximation to the diagonal state.

In the following, we primarily use COBYLA in the classical optimization subroutine for reinforcement learning-assisted VQSD, i.e., the RL-VQSD method. The main motivation behind this decision is the following. While utilizing a learning-based method like RL \textit{we start from an empty quantum circuit, and using a state-of-the-art encoding method, we encode the circuit into an RL-state}. After each application of a gate to the quantum circuit, the RL-state updates, and the circuit passes through a classical optimization method. Using the outcome of the optimization, we evaluate a reward function. \textit{Based on the reward, the RL-agent decides on an action that encodes the information of a particular quantum gate}. This process repeats until a predefined threshold in cost estimation is reached, and the total number of episodes is exhausted. After the application of each action, we harness the power of classical optimization to make sure that the optimization process does not consume most of the resources in the RL-VQSD.

\section{Components of RL-VQSD algorithm}

Before jumping into RL-VQSD, in this section, we provide a detailed description of the building blocks used in the proposed algorithm. As in any RL method, the proposed methods included the specification of the state, the actions, and the reward function.

\subsection{RL-State}\label{sec:binary-encoding-scheme}

Learning-based quantum architecture search algorithms require a concise circuit representation that is commonly referred to as encoding. Through the encoding scheme, we can modify, compare, and explore the quantum circuit. This helps us to navigate the search space of all possible alternatives efficiently and discover effective and innovative quantum circuits. In the following, we will elaborate on a tensor-based binary encoding approach for the quantum circuit, which can efficiently be utilized as an RL-state.
The encoding was first introduced in~\cite{rl-vqe-paper}. In this scheme, the gate structure of the ansatz is expressed as a tensor of dimension $\left[T \times ((N + 3) \times N )\right]$, where $N$ two-qubit the size of the problem and $T$ is the considered maximum depth of the ansatz. For VQSD, $N$ {two-qubit the number of qubits in the quantum state} that need to be diagonalized. The proposed encoding  can be explained through the following two points:
\begin{enumerate}
	\item \textbf{Freedom in connectivity} The encoding enables \textit{all-to-all} qubit connectivity, but it can be restricted by considering \textit{unidirectional nearest neighbour} connections only. In this scenario, the matrix dimension $\left((N+3)\times N \right)$ is reduced to $\left(4\times N \right)$. One should note that in the case of a two-qubit gate, one is not required to keep track of the control and target simultaneously. Hence, defining one argument of the two-qubit gate implicitly provides information about the other argument due to its nearest neighbor and unidirectional nature. A similar encoding scheme is described in~\cite{fosel2021quantum}.
	
	\begin{figure}[ht]
		\centering
		\includegraphics[width = 0.5\linewidth]{ 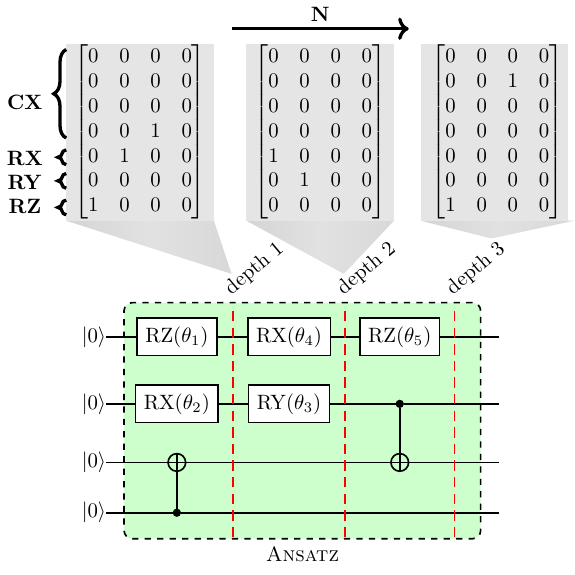}
		\caption{Example of the proposed encoding for four-qubit ansatz. The first $\left(N\times N\right)$ square matrix is reserved for the \texttt{CNOT} connectivity. The columns of the square matrix encode the \textit{target qubit}, and the rows represent \textit{control qubits}. The remaining $\left((N+j)\times N\right)$ elements encode arbitrary rotation towards $j$ direction where $j = 1$, $2$, and $3$, for $\texttt{X}$, $\texttt{Y}$ and $\texttt{Z}$ rotations, respectively.}
		\label{fig:encoding}
	\end{figure}
	
	\item \textbf{Depth-based encoding} In previous work~\cite{ostaszewski2021reinforcement} each $\left((N+3)\times N \right)$ matrix carries information corresponding to each action taken by the agent, where each action two-qubit either a single or a two-qubit gate. Additionally, the information was integer-based, in the range $0$ to $N$.
	
	On the contrary, In our work, the encoding is binary and depth-based. For example, if $T=3$, then the encoding initiates by filling up the~$\left[i \times ((N + 3) \times N )\right]$ for $i =1$ until a depth of RL-ansatz is encoded. Which is described as follows:
	\begin{equation}
		\left[ \underbrace{ ((N + 3) \times N )}_{\textrm{depth = 1}}, \underbrace{((N + 3) \times N )}_{\textrm{all zeros}},\underbrace{((N + 3) \times N )}_{\textrm{all zeros}} \right].
	\end{equation}
	Then, as $i=1$ is filled up, we move to $i=2$ to encode depth = 2 of the RL-ansatz, which yields 
	\begin{equation}
		\left[ \underbrace{ ((N + 3) \times N )}_{\textrm{depth = 1}}, \underbrace{((N + 3) \times N )}_{\textrm{depth = 2}},\underbrace{((N + 3) \times N )}_{\textrm{all zeros}} \right].
	\end{equation}
	Finally, the depth = 3 is encoded in $i=3$ resulting in
	\begin{equation}
		\left[ \underbrace{ ((N + 3) \times N )}_{\textrm{depth = 1}}, \underbrace{((N + 3) \times N )}_{\textrm{depth = 2}},\underbrace{((N + 3) \times N )}_{\textrm{depth = 3}} \right].
	\end{equation}
\end{enumerate}
Each depth encoding follows the scheme shown in \ref{fig:encoding}. In the following, we give a simple example of $T=3$ encoding. In the {Appendix~\ref{app:rlstate_code}}, we provide a code that is used for the RL-state encoding.

\subsection{RL-action}\label{sec:rl-action}

For constructing the quantum circuits, we use the scheme developed in~\cite{ostaszewski2021reinforcement} with \texttt{CNOT} and one-qubit rotation gates, which are feasible on currently available quantum devices. The encoding of the action space can be defined as follows. The \texttt{CNOT} gates are represented by a pair of values that indicate the positions of the control and target qubits, with enumeration starting from 0. As for the rotation gates, they are encoded using two integers, also starting from 0. The first integer identifies the qubit register, while the second integer specifies the rotation axis. For an $N$-size quantum state,  the agent can choose from $3\times N$ single-qubit gates and from $2\times\binom{N}{2}$ two-qubit gates. Additionally, we utilize a one-hot encoding for the action. We provide a simple example to make it more clear.

Each action is represented by a list of 4 numbers where

\[
\mathcal{A}_t= 
\begin{cases}
	[N, N, e_3, e_4]_t, & \text{if gate = \texttt{ROT}}\\
	[e_1, e_2, N, N]_t, & \text{if gate = \texttt{CNOT}}
\end{cases}
\]
where $t\in T$ denotes the time step and $e_i\in\{ 0,1,\ldots,N-1\}$. In the case of one qubit parameterized gates defined through \texttt{ROT} encoding, say $[N, N, e_3, e_4]$ the \textit{qubit position} is defined by $e_3$ and the \textit{rotation axis} is encoded through by $e_4$. Meanwhile for two-qubit \texttt{CNOT} gates the encoding $[e_1, e_2, N, N]$ two-qubit that the \textit{control} is on $e_1$ qubit whereas the target is on $(e_1+e_2)$~$\pmod{N}$ qubit.

\paragraph{An example}\label{sec:example-binary-encoding-and-action}
Here, we give a simple example of the encoding and the action scheme for $T=3$. In the~\figref{fig:encoding-action-elaborate}, we take a three-qubit system. The empty circuit is denoted by a tensor of size $3\times(6\times3)$ tensor. 
\begin{figure}[ht]
	\centering
	\includegraphics[scale=0.8]{ 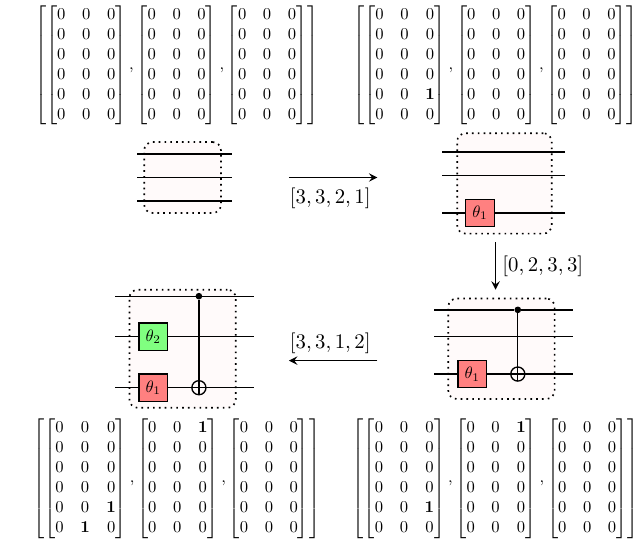}
	\caption{A toy example of action and the encoding scheme. The actions are represented with a list of four numbers. The first two elements in the digit carry the information about the controls and the target of a two-qubit gate that the RL-agent suggests adding in the next step to the ansatz. Meanwhile, the last two digits tell us about a one-qubit gate, on which qubit the gate should be added, and in which direction the rotation is.}
	\label{fig:encoding-action-elaborate}
\end{figure}
As we can see from the example, the first $(6\times3)$ matrix is filled when the \texttt{RY} gate is applied on the 3rd qubit. But the information is encoded in the next empty tensor for the next gate, which is \texttt{CNOT} with a control on the first and a target on 3rd qubit. This is because each $(6\times3)$ matrix encodes the complete information corresponding to a depth. That is why when the next \texttt{RZ} gate is applied on the second qubit, the information corresponding to it was encoded in the previous $(6\times3)$ matrix because it does not increase the depth of the circuit.

The encoding presented in the paper is an RL-agent-friendly representation of the action space. Many such encoding schemes can be adopted as a representation of the action space. Meanwhile, as we constrain the action space to only one-qubit rotations and \texttt{CNOT} gate, it would also be interesting to investigate the inclusion of other gate-sets in the action space.

\subsection{RL-reward}\label{sec:reward-rl-vqsd}
To guide the agent quickly towards the goal, we introduce a reward that is dense in time step $t$. The reward used in this work is given as
\begin{equation}
	R = \left\{\begin{array}{ll}
		+\mathcal{R} & \text{for } C_t(\vec{\theta})<\zeta+10^{-5},\\
		-\textrm{log}\left(C_t(\vec{\theta})-\zeta\right) & \text{for } C_t(\vec{\theta})>\zeta.
	\end{array}\right.\label{eq:log_reward}
\end{equation}
where the goal of the agent is to reach the minimum error for a predefined threshold $\zeta$, \ie the tolerance for cost function minimization. The $\zeta$ is a hyperparameter of the model. The cost function at each step $t$ is calculated for the ansatz which outputs a state $\rho_t(\vec{\theta})$ as
\begin{equation}
	C_t(\vec{\theta}) = \text{Tr}(\rho^2) - \text{Tr}(\rho_t(\vec{\theta})^2).
\end{equation}

\begin{figure}[ht]
	\centering
	\includegraphics[width=0.6\columnwidth]{ 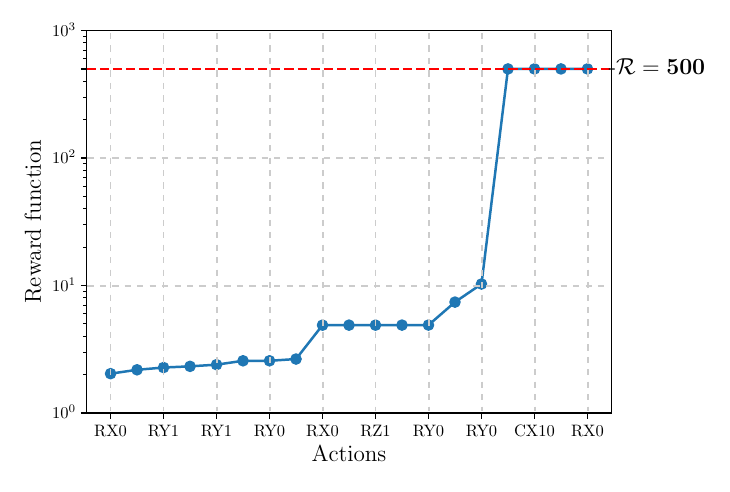}
	\caption{The variation of reward function with actions taken by the RL-agent. In this illustration, we use RL-VQSD (which is discussed in the upcoming section) to diagonalize a two-qubit random quantum state. The agent receives a high reward of value $\mathcal{R}=500$ if the RL-ansatz passes a predefined threshold (which is $\zeta=10^{-5}$) in any step of an episode. The total number of steps considered for this task is $20$. The labels in the x-axis represent each action in terms of a gate where for one qubit gate, the label is denoted as \texttt{Gj} where \texttt{G} is the rotation gate on \texttt{j}-th qubit. For the two-qubit gate, we use the notation \texttt{ CNOTij} where \texttt{CNOT} is the controlled-NOT gate with control on \texttt{i}-th and the target on the \texttt{j}-th qubit.}
	\label{fig:reward_w_action}
\end{figure}

{The motivation for choosing a dense reward function~\eqref{eq:log_reward} instead of a sparse reward (see \eqref{eq:old_reward}), which is utilized in finding the ground state of chemical Hamiltonian) stems from the fact that diagonalizing a quantum state is computationally more demanding than finding its minimum eigenvalue. Recent research has demonstrated that quantum gradient descent can determine the lowest eigenvalue with a number of elementary gates that scales polylogarithmically with the number of qubits \cite{liang2022quantum}. In contrast, state-of-the-art algorithms for diagonalizing an $n \times n$ matrix generally require $\mathcal{O}(n^3)$ operations~\cite{larose2019variational}. Given the complexity involved in state diagonalization, a dense reward function is favored in reinforcement learning tasks due to its capacity to provide continuous feedback and improve learning efficiency.}

Through the illustration in \figref{fig:reward_w_action}, we show how, for a successful episode, the reward function defined in \eqref{eq:log_reward} changes after the application of each action. The illustration shows the learning curve of the RL-ansatz where after taking $15$ actions, the cumulative reward reaches the maximum (i.e. $\mathcal{R}=500$), which means the RL-agent at this point is able to pass the predefined threshold (which $\zeta=10^{-5}$).

\subsection{RL-VQSD}
In this section, we utilize the previously described approaches, such as ansatz encoding, action formulation, and various approaches to reinforcement learning (see {Sec.~\ref{sec:model_free_rl}} for details
\begin{figure}[H]
	\centering
	\includegraphics[scale=0.6]{ 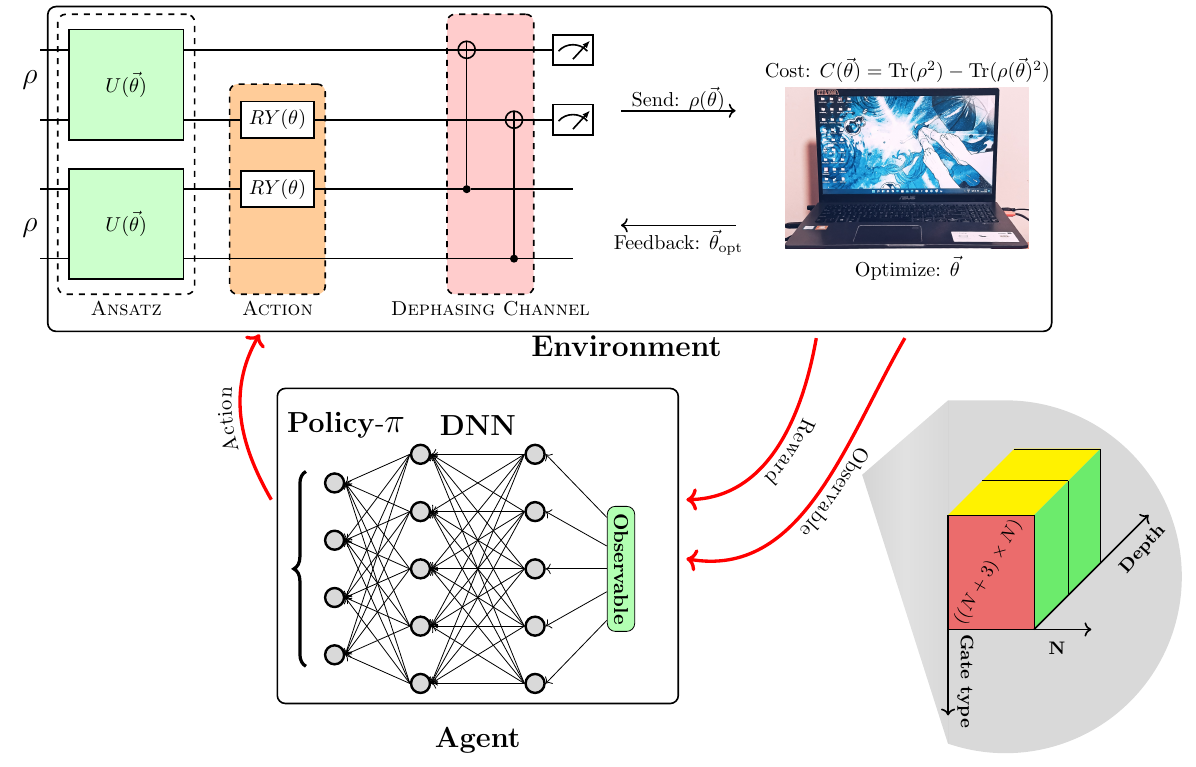}
	\caption{Demonstration of the RL-VQSD procedure: In this approach, the VQA task is linked to an environment where ansatz serves as the RL-state in reinforcement learning. The RL-agent receives a reward in the form of an optimized cost function from the environment, along with the current RL-state. Employing an $\epsilon$-greedy policy, the agent selects an action (a quantum gate), which updates the RL-state for the subsequent step. Using this updated RL-state, the VQA optimizes the cost function, generating a new reward for the agent. This iterative process continues until all episode steps are completed or the cost function reaches a predefined threshold. Throughout our study, we initiate RL-VQSD with an empty circuit, and at each step, the agent's action constructs the RL-ansatz, symbolized by $U(\vec{\alpha})=\mathbb{I}$.}
	\label{fig:rl_vqsd_diagram}
\end{figure}
on the various approaches) to enhance the performance of the VQSD algorithm. We primarily focus on the number of gates, the depth, and the accuracy of diagonalization of an ansatz. In~\figref{fig:rl_vqsd_diagram}, we illustrate the RL-VQSD algorithm.

To achieve the results, we use a Double Deep-Q network~\cite{mnih2013playing} (DDQN) for better stability with an $\epsilon$-greedy policy and the ADAM optimizer~\cite{kingma2014adam}. We start with the parameter specifications given in~\cite{ostaszewski2021reinforcement}, which uses the $n$-step DDQN algorithm with a discount factor of $\gamma = 0.88$ and an $\epsilon$-greedy policy for selecting random actions. The value of $\epsilon$ is gradually decreased from $1$ to a minimum value of $0.05$ by a factor of $0.99995$ at each step. The size of the memory replay buffer is set to $2 \times 10^4$, and the target network in the DDQN training is updated with every $500$ action. Following each training episode, we conduct a testing phase where the probability of selecting a random action is set to $0$, and the experience replay procedure is turned off. Experiences obtained during the testing phase are not added to the memory replay buffer.

To obtain a reward $\mathcal{R}$ for the circuit (\ie for each environmental state), an optimization subroutine needs to be applied to determine the values of the rotation gate angles. We use well-developed methods for continuous optimization, such as COBYLA~\cite{powell1994direct}, which has been shown to be among the best performing when there is no noise in the system. In this chapter, we set $\mathcal{R} = 500$.



\section{Diagonalizing quantum state with RL-VQSD}\label{sec:RL-VQSD-results}


In this section, we briefly investigate the performance of RL-VQSD on diagonalizing two-, three- and four-qubit quantum states. We start with diagonalizing two-qubit randomly sampled quantum states and show that the RL-VQSD outperforms the state-of-the-art VQSD method using the \texttt{RYRZRY CNOT} construction of LHEA. Due to the fast optimization time throughout the chapter, we utilize \texttt{COBYLA} optimizer with $400$, $500$, and $1000$ iterations for two-, three- and four-qubit states. In the case of diagonalizing three- and four-qubit states, we consider the reduced ground state of the six- and right-qubit Heisenberg model and show that the RL-agent proposed ansatz (so-called the RL-ansatz) provides us with a smaller quantum circuit with lesser gates, depth and better accuracy compared to state-of-the-art ansatz structures.

\subsection{Two-qubit states}\label{sec:2-qubit-diagonalization}


In this section, we aim to utilize a quantum computer to diagonalize (1) a fixed mixed quantum state and (2) 50 random quantum states to get the average eigenvalue approximation error and count the gates in RL-ansatz. We utilized the \texttt{random\_density\_matrix} of the module \texttt{quantum\_info} of \texttt{qiskit}~\cite{qiskit2019} to obtain the random density matrices. The states are sampled from the Haar measure. By (1), we argue that RL-VQSD can exactly diagonalize a quantum state. The results of (2) demonstrate that the average performance of RL-VQSD is better than state-of-the-art ansatz.

It can be seen from \ref{fig:2-qubit-eigenvalue-convergence} we show that the agent is able to propose an ansatz that provides us with the exact eigenvalues for a two-qubit random quantum state with 12 quantum logic gates containing 10 rotations and 2 \texttt{CNOT} gates. In \ref{fig:2_qubit_fixed_depth_ansatz}, we illustrate the quantum circuit proposed by the agent.

\begin{figure}[ht]
	\centering
	\begin{subfigure}{.5\textwidth}
		\centering
		\includegraphics[width=\linewidth]{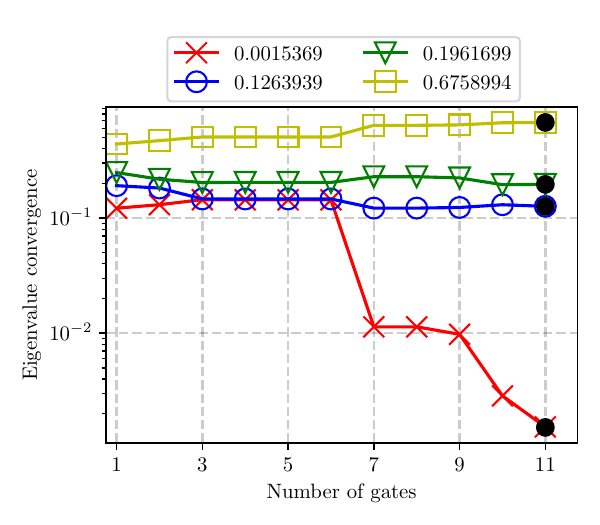}
		\caption{}
		\label{fig:2-qubit-eigenvalue-convergence}
	\end{subfigure}%
	\begin{subfigure}{.5\textwidth}
		\centering
		\includegraphics[width=\linewidth]{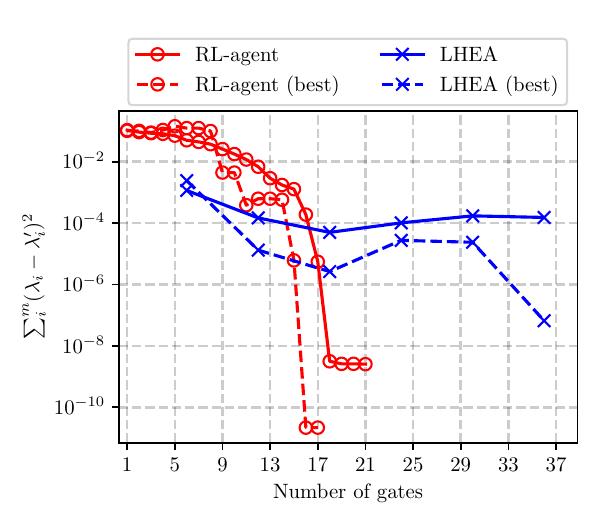}
		\caption{}
		\label{fig:2_qubit_average_error_50_states}
	\end{subfigure}
	\caption{Diagonalization of a two-qubit random density matrix of full rank. In (\subref{fig:2-qubit-eigenvalue-convergence}), we illustrate the convergence of eigenvalues of a mixed quantum state. Meanwhile, in  (\subref{fig:2_qubit_average_error_50_states}), we compare the performance of the RL-agent proposed ansatz with the LHEA. It can be seen that the RL-agent ansatz gives us a better approximation of the eigenvalues. Additionally, the RL-based methods can achieve the accuracy of the LHEA using the circuit with significantly reduced depth of the resulting circuit.}
	\label{fig:2_qubit_eig_conv_HEA_comparison}
\end{figure}

In \figref{fig:2_qubit_average_error_50_states}, we benchmark the performance of RL-ansatz against LHEA. In the illustration, we show that the agent not only gives us a small ansatz to diagonalize with a specific predefined threshold $\zeta$, but it also helps us achieve a very low error in eigenvalue estimation compared to LHEA.
\begin{figure}[ht]
	\centering	\includegraphics{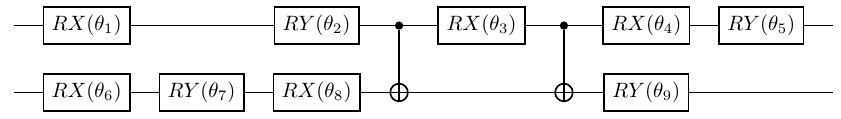}
	\caption{The ansatz proposed by RL-agent the state with eigenvalues convergence illustrated in \figref{fig:2-qubit-eigenvalue-convergence}.}
	\label{fig:2_qubit_fixed_depth_ansatz}
\end{figure}

Furthermore, we explore the possibility of utilizing the ansatz proposed by the RL agent, learning on a fixed quantum state, for the diagonalization of other states. We can confirm that this is indeed possible in the case of the two-qubit state. The corresponding results are presented
\begin{figure}[ht]
	\centering
	\includegraphics[width=0.85\textwidth]{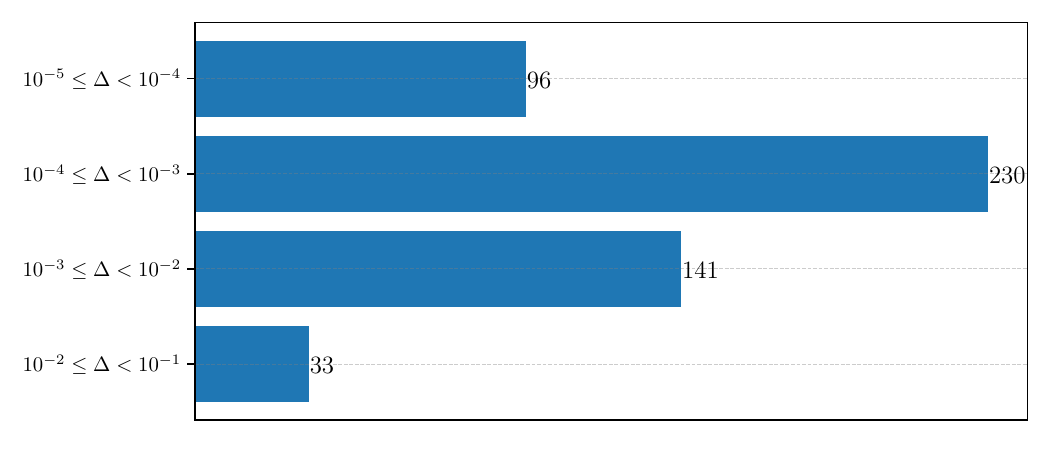}
	\caption{Statistics of error in eigenvalue estimation for 500 arbitrary quantum states. As an ansatz to diagonalize all the quantum states, we consider the fixed RL-ansatz in \figref{fig:2_qubit_fixed_depth_ansatz}.}
	\label{fig:2_qubit_error_with_fixed_ansatz}
\end{figure}
in~\figref{fig:2_qubit_error_with_fixed_ansatz}. One can argue that, in this case, the diagonalization task is relatively easy. However, one should note that the ansatz proposed by the RL agent gives an average error in eigenvalue estimation as in the case of using a standard approach based on LHEA (cf. \figref{fig:2_qubit_average_error_50_states}) while enabling the utilization of a shorter quantum circuit, and reducing the potential influence of errors.

\subsection{Three-qubit reduced Heisenberg model}\label{sec:3-qubit-diagonalization}
One of the important applications of VQSD is to study the entanglement in condensed matter systems~\cite{li2008entanglement}. Hence, in this experiment, to get a better understanding of the efficacy of our method in this regard, we consider a three-qubit reduced state of the ground state ($|\psi_{S_1, S_2}\rangle$) of the one-dimensional Heisenberg model defined on six-qubits which have the following form
\begin{equation}
	H=\sum_{j=1}^{2n} \vec{S}^{(j)}\cdot\vec{S}^{(j+1)},
\end{equation}
where $\vec{S}^{(j)} = \frac{1}{\sqrt{3}}\left( X^{(j)} \hat{x}+Y^{(j)} \hat{y}+Z^{(j)} \hat{z} \right)$ with periodic boundary condition $\vec{S}^{(2n+1)} = \vec{S}^{(1)}$, where $X$, $Y$, and $Z$ are the Pauli operators. To perform entanglement spectroscopy on the ground state of the 6-spin Heisenberg model (\ie $2n=6$), we diagonalize the reduced state $\rho_\textrm = \textrm{Tr}_{S_2}\left[ |\psi_{S_1,S_2}\rangle\langle\psi_{S_1,S_2}| \right]$. We consider choosing the threshold $\zeta = 10^{-4}$ for 500 iterations of the global COBYLA method.

\begin{figure}[ht]
	\centering \includegraphics[scale=0.7]{ 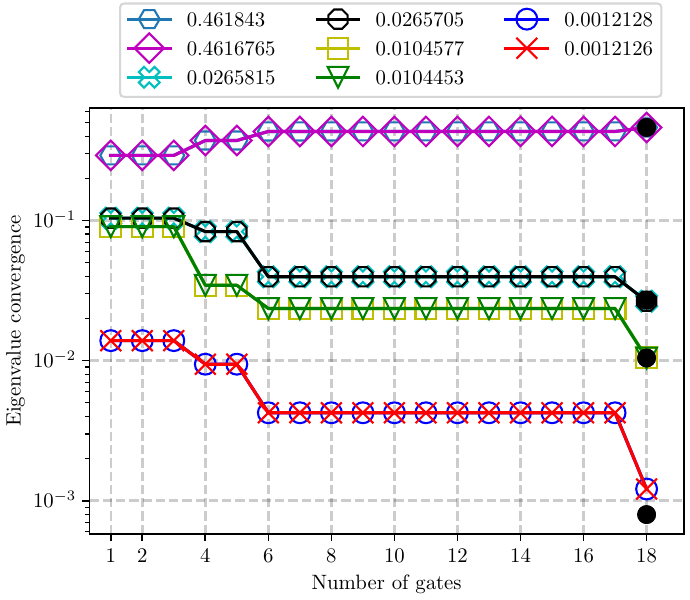}
	\caption{Convergence of the eigenvalues of three-qubit, reduced Heisenberg model. The labels on the top of the figure correspond to the eigenvalues. The black dots represent the true eigenvalues. Due to degeneracy in energy level, some dots are overlapped.
	}
	\label{fig:3_qubit_reduced_heisenberg_model}
\end{figure}

The results presented in \figref{fig:3_qubit_reduced_heisenberg_model} confirm that the RL-agent can learn to construct an ansatz that can find all the eigenvalues with good accuracy with a very small number of gates and depth. The \tikz\draw[draw, fill=black] (0,0) circle (.5ex); two-qubit the true eigenvalues. We can see that the ansatz takes 18 gates to give us 6 out of 8 exact eigenvalues of a three-qubit Heisenberg model. Additionally, the RL-ansatz finds the remaining two smallest eigenvalues with $\Delta_7 = \Delta_8 =  1.73\times 10^{-7}$ accuracy. In \figref{fig:3_qubit_reduced_heisenberg_model_ansatz} we present the RL-ansatz that contains 10 rotations and 8  CNOT gates, proposed by our methods.

\begin{figure}[ht]
	\centering	\includegraphics[width = \textwidth]{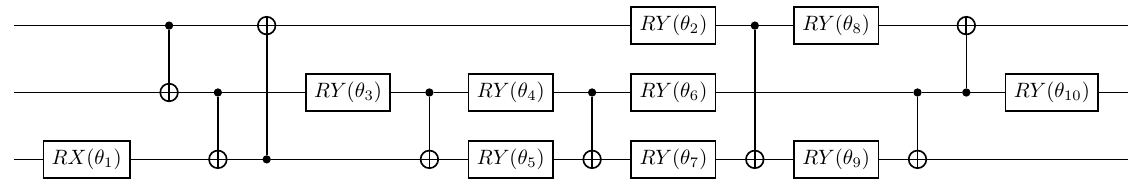}
	\caption{The ansatz proposed by the RL-agent for diagonalizing a state in the three-qubit reduced Heisenberg model. The circuit contains 10 rotations and 8  CNOT gates.}
	\label{fig:3_qubit_reduced_heisenberg_model_ansatz}
\end{figure}

It should be noted from circuits in \figref{fig:2_qubit_fixed_depth_ansatz} and in \figref{fig:3_qubit_reduced_heisenberg_model_ansatz} that the rotation in the Z direction, \ie \texttt{RZ} quantum logic gate, does not play a crucial part in the diagonalizing unitary. Thus, one might attempt to diagonalize a random quantum state of two- and three-qubits, excluding \texttt{RZ} rotation from the list of quantum gates. This gives us a hint concerning the action space that could be significantly reduced in these examples.

\paragraph{Performance of LHEA}
Here we compare the performance of the LHEA as shown in \figref{fig:LHEA-3-qubit-heisen-model} with the RL-ansatz~\ref{fig:3_qubit_reduced_heisenberg_model_ansatz} for diagonalizing a three-qubit reduced Heisenberg model. In the case of LHEA, each layer is composed of all-to-all connected two-qubit rotations containing a sandwiched \texttt{CNOT} gate in between \texttt{RZRYRZ} rotations of the form presented in \ref{fig:two_qubit_rot_decomp_layered_ansatz}(with 3 rotation gates). Hence, each layer contains 12 parameters. To achieve a cost function in the order of $10^{-4}$, it takes more than 60 parameters and at least 5 two-qubit gates. At the same time, the RL-ansatz achieves an accuracy lower than $10^{-4}$ in just $10$ rotations and $8$ two-qubit gates. We get a $6$ times improvement compared to LHEA in terms of parameters.

\begin{figure}[t]
	\centering
	\includegraphics[scale=0.8]{ 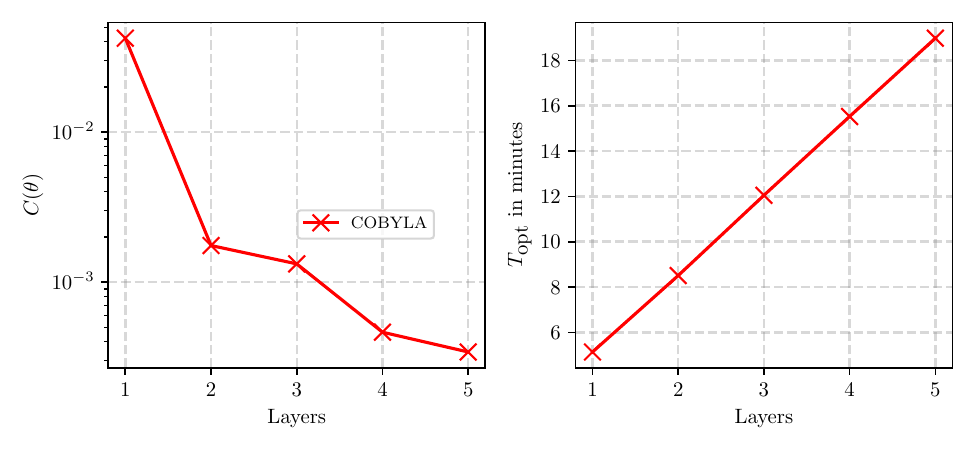}
	\caption{Performance of LHEA with the number of layers on (see LHS) and on the RHS, we show the time it takes to complete layers.}
	\label{fig:LHEA-3-qubit-heisen-model}
\end{figure}

\paragraph{RL-ansatz scaling with accuracy} Here, we briefly investigate the scaling of the minimum number of gates and the depth of the ansatz to diagonalize the three-qubit ground state of the reduced three-qubit Heisenberg model using the RL-VQSD.
\begin{figure}[ht]
	\centering
	\includegraphics[scale=0.7]{ 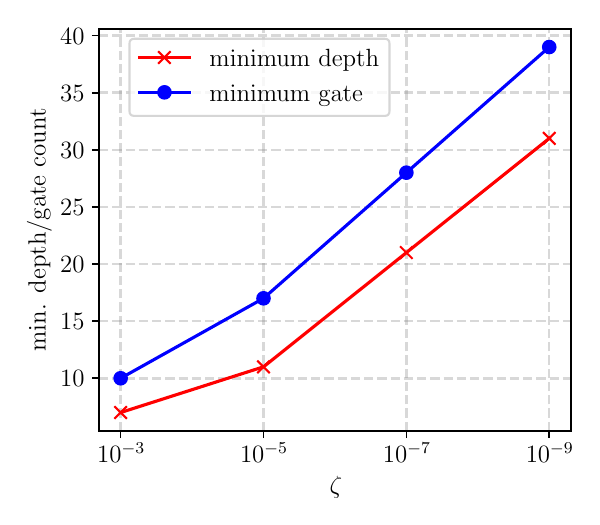}
	\caption{In this illustration, we show the minimum number of gates and depth required to diagonalize the ground state of the reduced 6-spin Heisenberg model using an RL-VQSD.}    \label{fig:3_qubit_heisen_diff_tolerance}
\end{figure}
The results are illustrated in the~\figref{fig:3_qubit_heisen_diff_tolerance} where we can see that the number of gates and the depth both scale linearly with the increase in $\zeta$, the pre-defined tolerance provided to the RL-agent.
\begin{table}[H]
	\centering
	\begin{tabular}{|c|c|c|c|c|c|c}
		\hline
		$\zeta$  & min. ROT & min.  CNOT & ave. depth & ave. num gate \\
		\hline
		$10^{-3}$ & 5        & 4             & 16.63      & 23.084         \\
		$10^{-5}$ & 8        & 5            & 19.67      & 27.863         \\
		$10^{-7}$ & 14       & 10           & 31.63      & 41.13          \\
		$10^{-9}$ & 18       & 15          & 36.8       & 46.15\\ 
		\hline
	\end{tabular}
	\caption{In this table, we summarize the scaling of various important RL-ansatz components averaged over 3000 episodes of RL-VQSD.}
	\label{tab:3-q-heisen-model-benchmark}
\end{table}

\subsection{Four-qubit reduced Heisenberg model}\label{sec:4-qubit-diagonalization}
We extend the results of the previous section for the ground state of the 8-spin Heisenberg model (\ie\; $2n = 8$). We diagonalize the four-qubit reduced state of the ground state of the 8-spin Heisenberg model. 
\begin{figure}[ht]
	\centering
	\includegraphics[width=\textwidth]{ 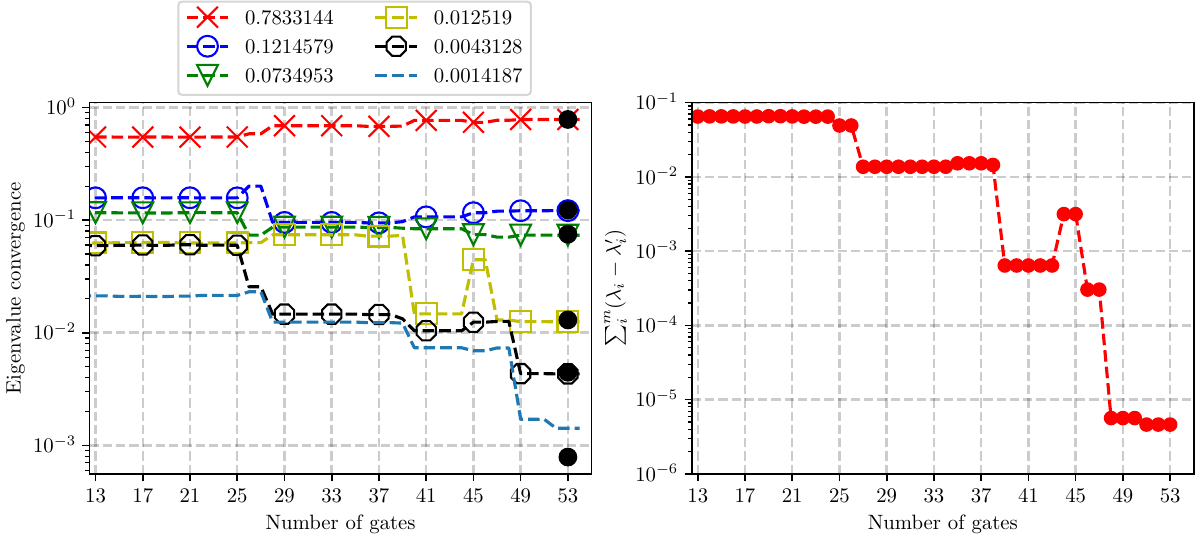}
	\caption{The convergence of individual (left panel) and the overall error (right panel) in eigenvalues for four-qubit reduced Heisenberg model. This provides a significant improvement in terms of gate count and depth compared to the result reported in~\cite{larose2019variational}.}
	\label{fig:my_label}
\end{figure}

It takes 53 gates to find the first 6 largest eigenvalues, with an error below $10^{-5}$. Out of $53$ gates, $16$ are \texttt{CNOT} gates, and the remaining are one-qubit rotations. We consider choosing the threshold $\zeta = 10^{-3}$ for 1000 iterations of the global COBYLA method.
\begin{table}[H]
	\centering
	\begin{tabular}{|c|c|c|c|}
		\hline
		Qubits & Minimum depth & Minimum \# of rotations & Minimum \# of  CNOT \\
		\hline
		2 & 8  & 9 & 2              \\
		3 & 12 & 10 & 8              \\
		4 & 33 & 28 & 16\\
		\hline
	\end{tabular}
	\caption{A summary of the minimum number of one- and two-qubit gates required in RL-ansatz to diagonalize two-, three- and four-qubit systems.}
	\label{tab:resource-count}
\end{table}

The summary of our results is provided in Table~\ref{tab:resource-count}. One can notice that there is a relation between the number of \texttt{CNOT}s and the dimension of the state that we want to diagonalize. The number of \texttt{CNOT}s grows exponentially with the number of qubits. As for the two-qubit case, we find all the eigenvalues with $10^{-10}$ error with just two \texttt{CNOT}s. Whereas for three-qubits, we are able to find the first 6 eigenvalues with an error below $10^{-8}$ but the smallest two eigenvalues we find with $1.73\times10^{-7}$ error with 8 \texttt{CNOT}s. Finally, for four-qubit, we find the first 6 eigenvalues with an error below $10^{-8}$ and the remaining eigenvalues with an error in the range $10^{-4}\leq\Delta\leq 10^{-6} $ with 16 \texttt{CNOT}s. This observation suggests that for a full-rank quantum state of $N\geq 3$, we require at least as many \texttt{CNOT}s as the rank of the quantum state to get a good approximation of the largest eigenvalues. It should be noted that to find the first 5 largest eigenvalues with error $10^{-5}$, the ansatz proposed by the RL-agent is of depth 18 and a total of 30 gates, among which 12 are \texttt{CNOT} gates and the remaining are rotations. This significantly improves the depth, and the gate count in the diagonalizing ansatz compared to the results in~\cite{cerezo2022variational} and \cite{larose2019variational}.

\subsection{Performance of random search}\label{sec:random-serch-performance}
\begin{figure}[H]
	\begin{subfigure}[b]{0.65\textwidth}
		\centering
		\includegraphics{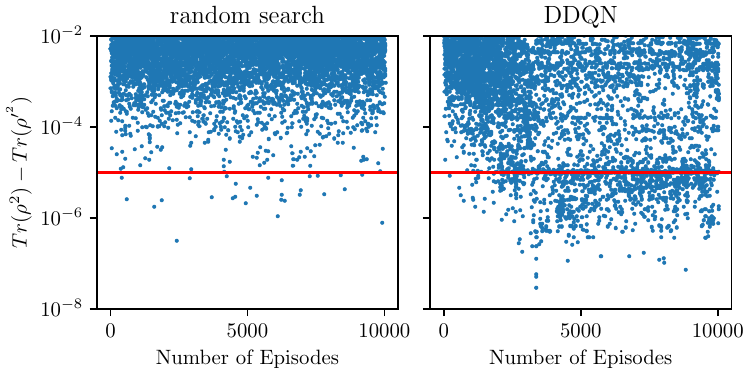}
		\caption{}
		\label{fig:random_search_vs_agent_2_qubit}
	\end{subfigure}
	\begin{subfigure}[b]{0.65\textwidth}
		\centering	\includegraphics{ 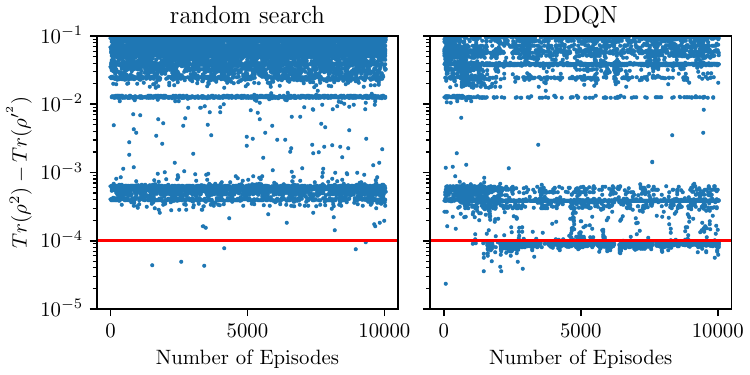}
		\caption{}
		\label{fig:random_search_vs_agent_3_qubit}
	\end{subfigure}
	\caption{The RL-agent can give us more frequent solutions, whereas the random search can hardly solve the problem. Comparison of accuracy obtained using random search and RL-based method. Illustration of $10^4$ episodes to solve the full rank random quantum state of (\subref{fig:random_search_vs_agent_2_qubit}) two-qubits and (\subref{fig:random_search_vs_agent_3_qubit}) three-qubits. The red line denotes the pre-defined tolerance for the approximation of the cost function.}
	\label{fig:random_search_vs_agent}
\end{figure}

To demonstrate the hardness of the variational diagonalization task, we utilize random search to find an efficient ansatz in this section. Unlike the previous examples where an RL-agent selects an action based on a policy, here, the action at each step is chosen randomly from a uniform distribution. 

In~\figref{fig:random_search_vs_agent} (in the first column), we show the results for random search to diagonalize a two- and three-qubit quantum state. It can be seen that the number of successful episodes (the episodes that pass the predefined tolerance of cost function) drastically reduces as we scale the number of qubits in the state. At the same time, the RL-agent (in the second column) provides us with a more consistent outcome.

For GPU/CPU specifications and the computational time of each episode of RL we refer the readers to the details in Table~\ref{tab:training-time-record}.

\section{Takeaways}
The goal of this chapter was to investigate the VQSD algorithm, to identify its weak points, and to provide a novel method that can significantly enhance this technique. To summarize, our result indicated that the following aspects are crucial for the efficiency of the VQSD technique.
\begin{itemize}
	\item \textbf{Choice of best fixed-structure ansatz} In the {Sec.~\ref{sec:peformance_of_lhea}} we briefly investigate the state-of-the-art VQSD algorithm. For the sake of comparison, we choose 6 different optimization methods and 3 different generic structures of hardware efficient ansatz (HEA). Our results first show that The \texttt{RYRZRY CNOT} of~\figref{fig:two_qubit_rot_decomp_layered_ansatz} overall outperforms the other structures of HEA in terms of accuracy and performs well with the scaling in the number of qubits in the diagonalizing state for all the optimizers.
	
	\item \textbf{Time vs. accuracy trade-off in classical optimizer} Once again in the {Sec.~\ref{sec:peformance_of_lhea}}, using the LHEA, and \texttt{RYRZRY CNOT} as ansatz we consider \texttt{COBYLA}, \texttt{L-BFGS-B}, \texttt{POWELL}, \texttt{Nelder-Mead}, \texttt{SLSQP} and \texttt{BFGS} as classical optimizer. And through~\figref{fig:optimizer_benchmark_vqsd} we see that \texttt{POWELL} gives us the best accuracy towards the diagonalizing unitary but it has a significant time overhead. Whereas \texttt{COBYLA} takes 100 times less time compared to \texttt{POWELL}, and it fails to return a good approximation to the diagonalizing unitary. Hence, there is a time vs. accuracy overhead in the choice of classical optimizers.
	Due to the impressive time efficiency, we utilize  \texttt{COBYLA} optimizer in RL-VQSD; because if we have $\mathcal{A}_N$ number of actions per episode and $N$ is the number of episodes, then we need $\mathcal{A}_N\times N$ query of the classical optimizer. Hence, to minimize the time spent on each query of the classical optimizer, it is ideal to use an optimizer that takes less time, pointing towards \texttt{COBYLA}. 
	
	\item \textbf{Tensor-based binary encoding scheme for quantum circuit} In \textbf{Section~\ref{sec:binary-encoding-scheme}} we present a qubit efficient encoding scheme for quantum circuits that scales polynomially with the number of qubits. Each depth of the circuit is encoded in a block of dimension $\left((N+3)\times N\right)$. We show that the encoding scheme provides more degree of freedom in qubit connectivity compared to previously proposed schemes, and the depth-based nature of the encoding ensures that it is more efficient. Each depth is filled up by the action scheme briefly described in \textbf{Section~\ref{sec:rl-action}}. For further clearance, please see the example in the same section.
	
	\item \textbf{RL-VQSD outperforms existing algorithms} Through the numerical simulations in \textbf{Section~\ref{sec:2-qubit-diagonalization}}, \textbf{\ref{sec:3-qubit-diagonalization}} and \textbf{\ref{sec:4-qubit-diagonalization}} we show that RL-VQSD outperforms the state-of-the-art VQSD methods and provides us with a diagonalizing unitary with a minimal number of parameters, depth and with very high accuracy. This claim becomes more prominent when we show that in~\figref{fig:LHEA-3-qubit-heisen-model}, the state-of-the-art VQSD method fails to reach the accuracy reached by RL-VQSD with the same (or even more) number of gates for the same classical optimizer. Later in~\figref{fig:3_qubit_heisen_diff_tolerance}, we show that using RL-VQSD, the number of gates and the depth of the diagonalizing unitary scales linearly with increasing accuracy.
	
	\item 
	\textbf{Random search is inefficient for diagonalizing task} In \textbf{Section~\ref{sec:random-serch-performance}} we replace the RL-agent by random search where the actions per step are chosen from a uniformly random distribution. Our investigation shows that (1) The random search fails to achieve a better accuracy compared to the RL-agent and (2) the number of solutions after the same number of episodes decreases drastically for the random search. This not only gives us an idea about the hardness of the diagonalizing problem but also provides insight into the efficiency of the novel RL-VQSD algorithm.
\end{itemize}


\chapter{Ansatze synthesis using curriculum reinforcement learning for variational quantum eigensolver}\label{ch:crlvqe}

This chapter presents a novel curriculum-based reinforcement learning (CRL) based quantum architecture search algorithm tailored to address the challenges inherent in deploying variational quantum algorithms in realistic noisy scenarios. This approach integrates three key elements: (i) the tensor-based ansatz encoding scheme presented in chapter~\ref{ch:vqsd_using_rl}, an \textit{illegal action} scheme to constrain the search in the action space, enabling efficient exploration of potential circuits, an episode-halting scheme guiding the agent towards discovering shorter circuits, and a variant of the simultaneous perturbation stochastic approximation algorithm, fostering faster convergence for optimization. Through a series of numerical experiments focusing on quantum chemistry problems, we showcase our methods' performance compared to existing quantum architecture search (QAS) algorithms in noiseless and noisy environments. Through the investigation, we show that our algorithm provides a more efficient ansatz than state-of-the-art algorithms. The influence of noise on the architecture search of ansatz is poorly understood. This chapter addresses this issue by showing that our CRL-based QAS algorithm can efficiently solve quantum chemistry problems under realistic quantum noises and constrained connectivity.

\section{Introduction}
The quantum phase estimation (QPE)~\cite{kitaev1995quantum,abrams1999quantum,abrams1997simulation} is a quantum algorithm introduced to extract the eigenvalues of a unitary operator utilizing the inverse quantum Fourier transform (IQFT) and phase kickback. It is shown that QPE can achieve exponential speedup in obtaining the eigeninformation of unitaries as long as the trial state is appropriately prepared. The promise of achieving quantum advantage is evident with QPE if a big enough fault-tolerant quantum computer is available. But its subroutine with IQFT requires a lot of resources, in terms of gates and qubits, for even relatively small quantum systems. Although recent developments in QPE are explored to minimize depth~\cite{ni2023low} and computational resources~\cite{kang2022optimized}, exploring the true potential of QPE is beyond the capabilities of present NISQ devices.

Keeping the hardware constraints in mind a hybrid quantum-classical algorithm, variational quantum eigensolver (VQE) is introduced~\cite{peruzzo2014variational,mcclean2016theory,romero2018strategies}. At its core, VQE utilizes both quantum devices and classical optimization techniques to find the ground state energy of a molecular Hamiltonian. Finding the ground state is a crucial problem in quantum chemistry and is essential for predicting chemical properties and reactions by understanding the electronic structure of molecules. This has applications from material science~\cite{lordi2021advances} to engineering~\cite{cao2019quantum}. In VQE a trial wave function or an ansatz using a parametrized quantum circuit (PQC), $U(\Vec{\theta})$, is prepared as follows
\begin{equation}
	\ket{\psi(\Vec{\theta})} = U(\Vec{\theta})\ket{\psi_0}.
\end{equation}
$\ket{\psi_0}$ is the initial state usually chosen as $\ket{0\ldots0}$. The $U(\Vec{\theta})$ can be decomposed into a series of parametrized and non-parametrized using~\eqref{eq:ansatz-main-equation}. If the electronic Hamiltonian of a molecule is defined as $H_\text{mol}$, then VQE aims to minimize the cost
\begin{equation}
	C(\Vec{\theta}) = \langle\psi(\Vec{\theta})|H_{\text{mol}}\ket{\psi(\Vec{\theta})},\label{eq:rl-vqe-cost}
\end{equation}
using a classical optimizer. The variational principle guarantees that $E_\text{ground} \leq C(\Vec{\theta})$ where $E_\text{ground}$ is the ground energy of $H_{\text{mol}}$. Meanwhile, $C(\Vec{\theta})$ is the energy expectation value of $H_{\text{mol}}$. The electronic Hamiltonian $H_{\text{mol}}$ is constructed (for an overview of the construction of molecular Hamiltonian see~\cite{meyer2002molecular, howard1970molecular}) with fermions and in order to evaluate the energy one can use indistinguishable fermions to distinguishable qubit mapping using the Jordan-Wigner~\cite{jordan1993paulische}, Parity~\cite{bravyi2017tapering} and Bravyi-Kiteav~\cite{bravyi2002fermionic,seeley2012bravyi} encodings. In recent years a Bravyi-Kitaev superfast~\cite{setia2018bravyi} and a qubit-efficient encoding~\cite{shee2022qubit} for Hamiltonian is introduced. For a comparison QPE requires $\mathcal{O}(1)$ repetitions with circuit depth scaling in precision $\mathcal{O}(\frac{1}{\epsilon})$ whereas VQE requires $\mathcal{O}(\frac{1}{\epsilon^2})$ shots with circuit depth scaling in precision $\mathcal{O}(1)$~\cite{wang2019accelerated}.

The performance of VQE can be influenced by the structure of the ansatz~\cite{grimsley2019adaptive,tang2021qubit,liu2022layer} due to the fact that the accuracy of the energy depends on the state manifold accessible by the PQC. So, finding new methods to construct PQC can lead to breakthroughs in VQAs for chemistry problems. In the {Sec.~\ref{sec:ansatz_introduced}}, we briefly discuss the various ansatz constructions depending on how much knowledge of the Hamiltonian we have at our disposal. The number of gates and depth increases exponentially in a problem-inspired PQC as we scale up the size of the molecule. Meanwhile, the hardware-efficient and problem-agnostic ansatz reflect trainability issues. To address these challenges, new methods have been introduced that draw on the insight and techniques of machine learning~\cite{benedetti2019parameterized,he2023gnn,romero2021variational}. In recent times, numerous explored reinforcement learning methods to find an efficient PQC for VQE~\cite{ostaszewski2021reinforcement,du2022quantum,fosel2021quantum} problem. 

In this chapter, we introduce a novel approach towards efficient ansatz construction using RL, which exploits a Double Deep-Q-Network~\cite{van2016deep} (see pseudocode~\textbf{\ref{alg:DDQN-pseucode}} for the pseudocode). Notably, our approach outperforms the existing state-of-the-art VQE algorithm in the noiseless scenario. We also benchmark the performance of our RL-based approach under quantum noise and show that even under hardware connectivity constraints and decoherence noise, the introduced approach shows impressive performance.

\subsection{Previous works}
As for the groundwork, we are going to mainly focus on the research~\cite{ostaszewski2021reinforcement} where the authors introduce an RL method for quantum circuit construction, utilizing the Double Deep-Q network (DDQN) with an $\epsilon$-greedy policy and an ADAM optimizer. They primarily focus on finding the ground state of the four- and six-qubit \texttt{LiH} molecule with bond distances 1.2, 2.2 and 3.4, respectively. In order to achieve a solution, they make use of a reward function of the following form.
\begin{figure}[th!]
	\centering
	\includegraphics{ 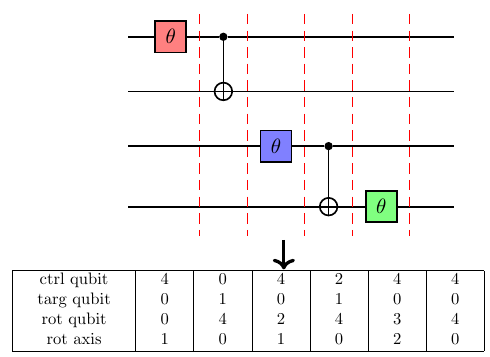}
	\caption{Example of state representation. In this example, the maximum length of the circuit L is set to six for four-qubit. Since we count qubits from 0, the lack of a particular gate at each layer is represented by the maximum number of qubits, i.e., in this case, it is 4, which can be seen in the last column where the \texttt{ctrl} and the \texttt{rot} qubit are both set to 4. Where \fcolorbox{black}{red!60}{\rule{0pt}{6pt}\rule{6pt}{0pt}}=\texttt{RY}, \fcolorbox{black}{green!60}{\rule{0pt}{6pt}\rule{6pt}{0pt}}=\texttt{RZ} and \fcolorbox{black}{blue!60}{\rule{0pt}{6pt}\rule{6pt}{0pt}}=\texttt{RX} gate.}
	\label{fig:old_encoding}
\end{figure}
\begin{equation}
	R
	= \begin{cases}
		5 &\text{if $C_t< \zeta$,}\\
		-5 &\text{if $t\geq L$ and $C_t\geq \zeta$,}\\
		\textrm{max}\left( \frac{C_{t-1}-C_t}{C_{t-1}-E_\text{min}},-1 \right) &\text{otherwise}.
	\end{cases}
	\label{eq:old_reward}
\end{equation}

The $C_t$ is calculated at each time step $t$ according to the~\eqref{eq:rl-vqe-cost}. Here, the main goal of the RL-agent is to achieve the $E_\text{min}$ within a predefined threshold $\epsilon$, where $L$ is the maximum number of circuit layers. To encode the quantum circuit the authors introduce an ordered list of layers that are composed of single quantum gates. As \texttt{CNOT}, \texttt{RX}, \texttt{RY} and \texttt{RZ} are considered as the building blocks for quantum circuits, the environment state is represented by a list that fully describes the circuit in terms of the \texttt{CNOT} and one qubit gates. The parameterized rotations are encoded using two integers: the first number indicates the registered qubit, and the second is the axis of rotation. Meanwhile, \texttt{CNOT} gate is also represented using two integers to indicate the position of control and target qubit.
To get rid of the continuous parameter of rotation angles, the estimated energy by the circuit is appended to the state representation. In~\figref{fig:old_encoding}, we illustrate the circuit encoding scheme the authors utilize to prepare the state. In both four- and six-qubit \texttt{LiH} problem using a {global strategy}\footnote{A global strategy corresponds to optimizing all the angles of the PQC after application of each action in terms of quantum gates to the quantum circuit.} the authors outperform the hardware efficient and the UCCSD ansatz in terms of depth and minimum gate count.

Meanwhile, in article~\cite{du2022quantum}, the authors introduce a quantum architecture search method to improve the learning performance of VQAs by enhancing trainability. To do so, the work considers a pool of all possible ansatz, say $\mathcal{P_\mathcal{A}}$ to build the ansatz for the VQA, where
\begin{equation}
	|\mathcal{P_\mathcal{A}}| = f(\mathcal{G}^{N\times L})
\end{equation}
where $\mathcal{G}$ is the set of different types of quantum gates, $N$ is the number of qubits and $L$ is the maximum depth of the circuit. To incorporate the realistic noisy scenario, we consider a quantum noise channel $\mathcal{C}_{a_i}$ for the $a_i$-th ansatz. Now, if the problem is defined through a Hamiltonian $H$, then the VQA objective is redefined as
\begin{equation}                (\vec{\theta}^*,a_i^*)=\text{arg}\underset{\vec{\theta},a_i\in\mathcal{P_\mathcal{A}}}{\text{min}}\mathcal{L}(\vec{\theta},a_i,H,\mathcal{C}_{a_i}),\label{eq:qas_loss_func}
\end{equation}
The authors make use of \textit{supernet} and \textit{weight sharing strategy} to get a good estimation of the~\eqref{eq:qas_loss_func} in a runtime comparable with the runtime of VQAs and minimal memory usage. On the one hand, the weight-sharing strategy helps to correlate the parameters among different analyses, helping reduce the parameter space during optimization. Meanwhile, the supernet is utilized as an indicator for the ansatz in the pool $\mathcal{P_\mathcal{A}}$ and parametrizes each ansatz in the pool.

Utilizing this approach, the authors tackle the ground state finding problem of the four-qubit $\texttt{H}_2$ molecule, showing that the energy converges to the true energy in a few iterations. However, the performance of this method is comparable to the conventional VQE in the noiseless scenario. In the case of a noisy scenario, the authors consider running in real superconducting quantum hardware, i.e., $\text{Ibmq}\_\text{ourense}$, and they show that their method gives a better approximation to the ground state compared to a hardware-efficient ansatz-driven VQE. The estimated ground energy of the method with $W=1$ and $W=5$ achieves $-0.93$ and $-1.05$Ha\footnote{1 Ha (Hartree) is 27.211 electron volt.}, respectively. Where $W$ is the number of supernets, the energies are better than the conventional VQE algorithm with a Hardware Efficient Ansatz (HEA), which achieves an energy of $-0.4$Ha.

A very recent development based on differentiable quantum architecture search (DQAS) to automate the design of PQCs is introduced in ref.~\cite{wu2023quantumdarts}. Before this work in ref.~\cite{zhang2022differentiable}, the authors used the differentiable search for QAS using Monte Carlo sampling to estimate the gradient by sampling multiple circuits at each epoch. This helps to get the approximation of the continuous distribution of quantum circuit architecture weights. But to achieve higher efficiency in the work ref.~\cite{wu2023quantumdarts} the authors use the Gumbel-Softmax~\cite{gumbel1948statistical,bengio2013estimating, jang2016categorical} technique to sample quantum circuits instead of Monte Carlo. Right after the sampling, the circuit architecture weights are updated by the gradient descent method. In the paper, the authors propose \textit{micro} and \textit{macro} search methods where the \textit{micro} search focuses on searching for the sub-circuits of an architecture, and these sub-structures are later stacked to form the whole circuit. Meanwhile, macro search directly searches for the whole circuit not focusing on the sub-structures. Using this differentiable search QAS method and $\{\texttt{RX},\texttt{RY},\texttt{RZ},\texttt{P}, \texttt{CNOT}\}$ as the candidate gate set, where $\texttt{P}$ is a phase-shift gate, they find the ground state of the $\texttt{H}_2$, $\texttt{LiH}$ and $\texttt{H}_2\texttt{O}$ problem of configuration given in {Tab.~\ref{tab:list_of_mol}}.
\begin{table}
	\small
	\begin{tabular}{c|c|c|c}
		Molecule & Fermion to qubit mapping & Configuration & Number of qubits \\
		\hline
		$H_2$     & Jordan-Wigner            & \makecell{$H$ ($0,0,-0.35$);\\ $H$ ($0,0,0.35$)}                               & 4                \\
		\hline
		$LiH$      & Parity                   & \makecell{$Li$ ($0,0,0$);\\ $H$ ($0,0,2.2$)}                                     & 4                \\
		\cline{2-4} 
		& Jordan-Wigner            & \makecell{$Li$ ($0,0,0$);\\ $H$ ($0,0,3.4$)}                                     & 6                \\
		\hline
		$H_2O$    & Jordan-Wigner            & \makecell{$H$ ($-0.021,-0.002,0$);\\ $O$ ($0.835,0.452,0$);\\ $H$ ($1.477,-0.273,0$)} & 8\\
		\hline
	\end{tabular}
	\caption{List of molecules considered for noisy and noiseless simulation.}
	\label{tab:list_of_mol}
\end{table}

\section{Groundwork}
We divide this section into subsections where we (1) briefly compare the tensor-based quantum circuit encoding that we use as the RL-state with previously proposed encoding schemes for the VQE task. (2) Next, we introduce a simple mechanism, namely \textit{illegal actions}, which helps narrow down the search space significantly, and finally (3) to facilitate the agent’s ability to discover more compact ansatz in
early successful episodes, we introduce a technique called \textit{random halting}. Before briefing the subroutines, we outline the RL-agent and environment specifications in the upcoming section.

\section{Agent and environment specification}
We utilize the double Deep-Q Network algorithm in the noiseless experiments with $\texttt{H}_2$, $\texttt{LiH}$ four-qubit. 
Meanwhile, the noisy simulations and the noiseless simulation of harder molecules such as six-qubit $\texttt{LiH}$ and eight-qubit $\texttt{H}_2\texttt{O}$, we make use of the Double Deep-Q Network step algorithm, where we utilize differing step sizes in the $n$-step trajectory roll-out update~\cite{sutton2018reinforcement}.
In these settings, we set the discount a factor set to $\gamma = 0.88$, and the probability of a random action being selected is set by an $\epsilon$-greedy policy, with $\epsilon$ decayed in each step by a factor of $0.99995$ from its initial value $\epsilon = 1$, down to a minimal value $\epsilon = 0.05$.
The memory replay buffer size is set to $20000$, and the target network in the DQN training procedure is updated after every $500$ action. After each training episode, we included a testing phase where the probability of random action is set to $\epsilon = 0$, and the experiences obtained during the testing phase are not included in the memory replay buffer. 
In the curriculum learning approach, the threshold is changed greedily after $500$ episodes for two-, three-, and four-qubit problems, whereas the threshold is changed after every $2000$ episode for six- and eight-qubit problems with an amortization radius of $0.0001$. After $50$ successfully solved episodes, the amortization radius is decreased by $0.00001$. The initial threshold value is set to $\varepsilon = 0.005$. Simulations of quantum circuits were performed using the Qulacs library~\cite{suzuki2021qulacs}. 
The hyperparameters were selected through coarse grain search. The employed network is a fully connected network with $5$ hidden layers with $1000$ neurons each for the four-qubit case, $2000$ neurons each for the six-qubit case, and $5000$ neurons each for eight-qubit. The maximum number of gates is set to $40$ for four-qubit, $70$ for six-qubit and $250$ for eight-qubit.

In the case of noiseless scenario we used the global strategy with COBYLA optimizer whereas for the noiseless case we use the same strategy with the introduced multistage Adam-SPSA optimizer with Pauli transfer matrix (PTM)~\cite{caro2022learning} formalism.  

\subsection{The tensor-based vs integer encoding}\label{sec:encoding_comparison}

Recalling the tensor-based encoding (TBE) presented in the {Sec.~\ref{sec:binary-encoding-scheme}} where we encode the quantum circuit depth-wise. Each depth is encoded into a 2-D grid of size $[T\times \left( (N+3)\times N\right)]$ where $T$ is predefined as the maximum depth of the circuit and $N$ is the number of qubits. We refer to the {Sec.~\ref{sec:binary-encoding-scheme}} for an elaboration of the encoding scheme.  

\begin{figure}[H]
	\centering
	\includegraphics{ 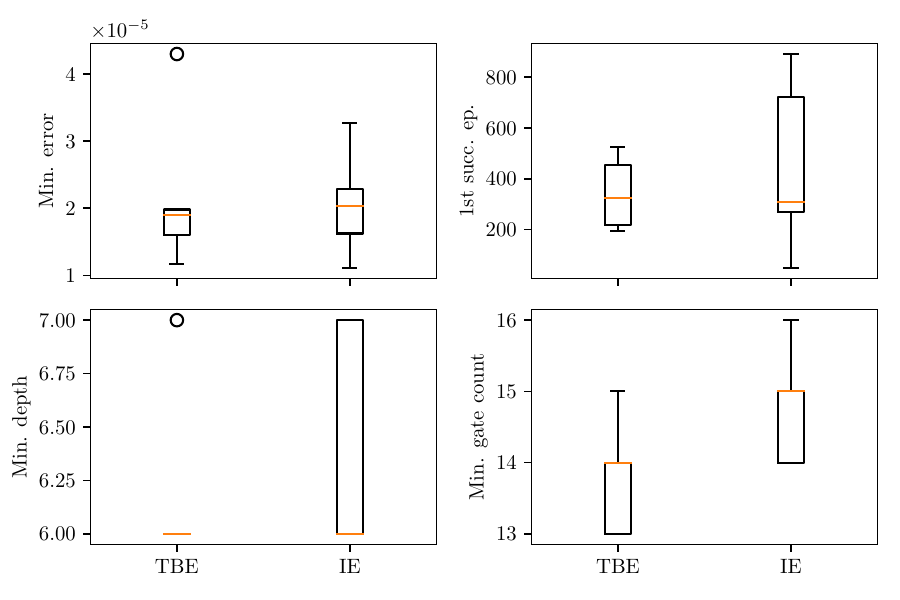}
	\caption{{The tensor-based encoding (TBE) outperforms the integer encoding (IE) and returns a lower depth and smaller number of gates circuit with lower error}. To conduct this experiment, we consider \texttt{LiH} molecules with a bond distance of 3.4 with parity encoding. The results are averaged over 5 seeds, and for each seed, we initialize the double deep-Q network with different values. For comparison, we consider the minimum error (Min. error), the first successful episode (1st succ. ep.), the minimum depth (Min. depth) and the minimum number of gates (Min. gate count) for both the TBE and IE. Additionally, the TBE makes the agent more stable compared to IE by preventing the spread of the deviation from the median.}
	\label{fig:encoding_comparison_LiH_4q}
\end{figure}

In one of the very first works in~\cite{ostaszewski2021reinforcement}, the authors introduce an integer-based encoding scheme where each block of the RL-state carries information about each gate applied. In order to show improvement over this state-of-the-art encoding scheme, here we compare the performance of the integer encoding (IE) with the encoding presented in this thesis, namely TBE. In a nutshell, the investigation shows that TBE outperforms IE in all aspects. To do this experiment, we consider the four-qubit LiH molecule with parity (fermion to qubit) encoding. For a detailed geometry of the configuration of LiH, see {Tab.~\ref{tab:list_of_mol}}.

In the~\figref{fig:encoding_comparison_LiH_4q} we illustrate the results and as a measure of performance, we consider the minimum error in energy (\textit{min error}), the first successful episode (\textit{1st succ. ep.}), the minimum depth of the ansatz (\textit{min. depth}), and the minimum number of gates (\textit{min. gate count}). The first thing to notice in~\figref{fig:encoding_comparison_LiH_4q} is that the TBE is more stable in IE, which can be seen through the spread of the quartiles. 

Another remarkable aspect of the TBE is that it gives the minimum error in the ground state energy lower than the IE with a smaller number of gates. This is ideal for the NISQ era to mitigate the negative impact of gate errors and decoherence effectively. It is important to ensure the circuits are both gate-efficient and have minimal depth.
\subsection{Illegal actions: The reduction of search space}\label{sec:ill-actions}
The QAS algorithms present a challenging combinatorial problem, which is characterized by an extensive search space. In order to narrow down this search space proves advantageous for the RL-agent. This not only helps the agent to discover a quantum circuit with diverse structures but also improves the learning by the agent.
\begin{figure}[H]
	\centering
	\includegraphics{ 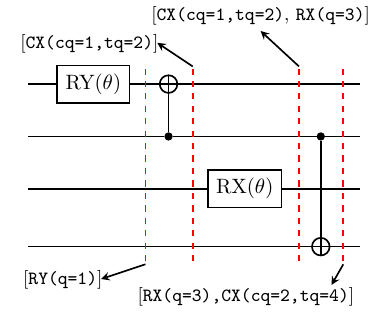}
	\caption{{The illegal action scheme on four-qubits}. Here, \texttt{cq} and \texttt{tq} represent the control and the target qubit. This approach encourages the RL-agent to avoid selecting the same action in two consecutive steps, effectively reducing the action space.}
	\label{fig:illegal_action_figure}
\end{figure}

Therefore, we present a straightforward approach referred to as \textit{illegal actions} to greatly reduce the search space. At the core of the technique, it leverages the inherent property of quantum gates being unitary, which results in the cancellation of two similar gates when applied to the same qubit. We employ this mechanism to reduce the search space for the RL-agent. In~\figref{fig:illegal_action_figure} illustrate the \texttt{illegal action} mechanism on a four-qubit system, and an elaborated discussion of the technique is provided in {Appendix~\ref{apndix:illegal_action_elaboration}}. Meanwhile, in the {Appendix~\ref{app:ill_action_code}}, we provide a code that is used to implement the {illegal actions} technique.

Although {illegal action} mechanism is straightforward to implement for any ansatz construction, one can consider more complex pruning of quantum circuits as provided in~\cite{fosel2021quantum}.

\subsection{Investigation of reward function}\label{sec:reward_comparison}
A well-defined reward function can accelerate the rate of convergence of an RL-agent towards the target. There are many ways to formulate a reward based on the task under consideration. In previous work~\cite{ostaszewski2021reinforcement}, for solving chemistry problems using Deep neural networks, a reward function of type~\ref{eq:old_reward} is considered, which is quite sparse in nature. 

In this section, we keenly investigate and compare the reward presented through~\eqref{eq:old_reward} and the reward that is used in {Chapter~\ref{ch:vqsd_using_rl}} of the form
\begin{equation}
	R(\mathcal{R}) = \left\{\begin{array}{ll}
		+\mathcal{R} & \text{for } C_t(\vec{\theta})<\zeta+10^{-5}\\
		-\textrm{log}\left(C_t(\vec{\theta})-\zeta\right) & \text{for } C_t(\vec{\theta})>\zeta
	\end{array}\right.
\end{equation}
where the $\mathcal{R}$ is a positive large number. For the sake of investigation, we compare the performance of the reward function~\ref{eq:old_reward} with $R(\mathcal{R}=0)$, $R(\mathcal{R}=50)$, $R(\mathcal{R}=100)$, $R(\mathcal{R}=500)$ and $R(\mathcal{R}=1000)$.
\begin{figure}[H]
	\centering
	\includegraphics[scale=0.5]{ 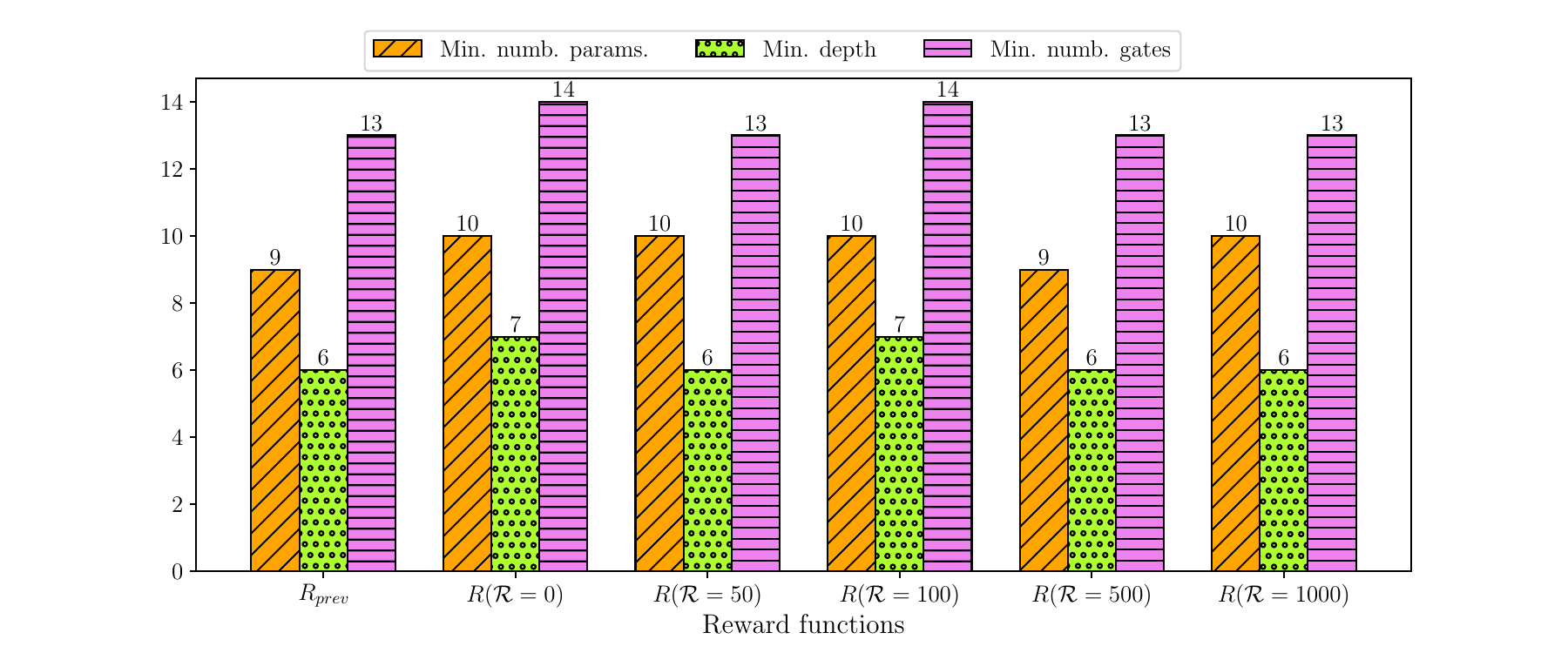}
	\caption{{Comparison of different reward functions where using $R_\textbf{prev.}$ and $R(\mathcal{R}=500)$ we get the minimum number of parameters (Min. numb. params.), minimum depth (Min. depth) and number of gates (Min. numb. gates) for four-qubits \texttt{LiH} molecule}. We evaluate the models based on 5 different seeds where in each seed the neural network is initialized with different input. 
		It should be noted that both the rewards i.e. $R_\textbf{prev.}$ and $R(\mathcal{R}=500)$ give us the same averaged error ($9.99\times10^{-4}$ for $R_\textbf{prev.}$  and $1.06\times10^{-3}$ for $R(\mathcal{R}=500)$) over the $5$ seeds.} 
\label{fig:reward-comparison}
\end{figure}

It can be seen from~\figref{fig:reward-comparison} that the performance of the RL-agent with $R_\textrm{prev.}$ is similar to the performance with $R(\mathcal{R}=500)$ and the reward~\eqref{eq:log_reward} with $\mathcal{R}=500$ outperforms the other variants. An in-depth investigation of the $R_\textrm{prev.}$ and $R(\mathcal{R}=500)$ unveils that the average number of successful episodes over $5$ different seeds is $16567$ for $R_\textrm{prev.}$ and for $R(\mathcal{R}=500)$ it is just $2301$ but the later reward function helps us to achieve the first successful episode faster. For $R_\textrm{prev.}$, the first successful episode on average appears at episode $343$ whereas, for $R(\mathcal{R}=500)$, it is at $181$-th episode.

In this chapter, we primarily focus on finding a very compact ansatz. Hence it is significant to have a higher number of successful episodes because it will yield a greater array of ansatz options for our investigation and to pick the best one among them. That's why we decided on utilizing the $R_\textrm{prev.}$ instead of $R(\mathcal{R}=500)$.

In the upcoming section, we introduce a straightforward technique aimed at enabling the RL-agent to accommodate shorter-length episodes in the first few successful episodes.

\subsection{Random halting: quickly discovering compact ansatz}\label{sec:random_halting_performance}
In the case of the previous works, such as in~\cite{ostaszewski2021reinforcement}, a full-length episode can be decomposed into a constant number of time steps $T_s$. Each time when noise is applied to a quantum circuit, a completely positive trace preserving (CPTP) channel is applied to the circuit which not only reduces the performance of the circuit to achieve a task but increases the computation time by many times compared to the noiseless scenario.
\begin{figure}[H]
\centering
\includegraphics[scale=0.7]{ 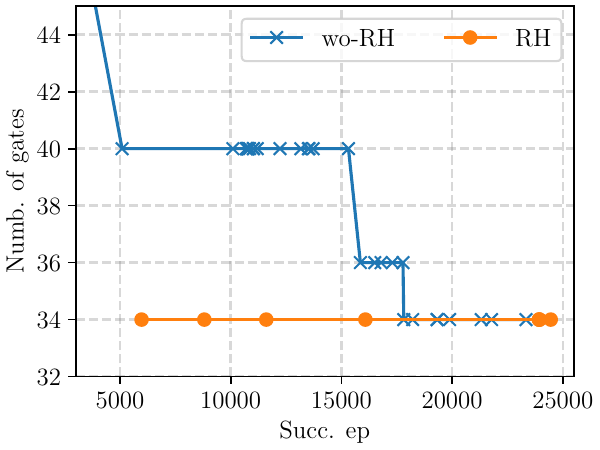}
\caption{{The \textit{random halting} (RH) gives a more compact ansatz compared to without RH settings in very early successful episodes (Succ. ep) with six-qubit \texttt{LiH} molecule}. It can be seen with RH in around $5,000$ episodes. The minimum number of gates (Numb. of gates) we require to solve the six-qubit \texttt{LiH} problem reaches $34$. Meanwhile, reaching the same number of gates without RH settings requires around $17,000$ episodes. Hence, RH helps the RL-agent learn $3$ times faster, the optimal quantum circuit compared to without RH settings.}
\label{fig:random_halting_effect}
\end{figure}

Hence, in a realistic scenario, in the midst of quantum noise, the RL-agent begins by proposing a lengthy ansatz in terms of the number of gates and depth that achieves chemical accuracy in the initial few successful episodes. However, over thousands of subsequent episodes, it gradually shifts towards a shorter ansatz. This situation is suboptimal due to the significantly extended duration of noisy simulation.

To address this challenge, we propose a method referred to as \textit{random halting} (RH), where the value of $T_s$ is no longer a constant parameter. Instead, it varies from one episode to another according to a particular probability distribution that is dependent on the number of qubits. To be more precise, we sample the episode-specific number of step $s$, denoted as $T_s$, from the following negative-binomial distribution:
\begin{equation}
T_s\sim\binom{n_f+n_s-1}{n_f}p^{n_f}(1-p)^{n_s},\label{eq:negative_binom_dist}
\end{equation}
In this context, $n_s$ is the count of successes. Meanwhile, $n_f$ denotes the count of failures. The sum of successes and failures determines the total number of trials, represented as $n_f + n_s$, and $p$ signifies the probability associated with each success.

The primary motivation for incorporating RH into the algorithm is to empower the RL agent to accommodate shorter episode lengths. This, in turn, enhances the agent's capability to uncover more concise ansatz in the early stages of successful episodes, even if it occasionally delays achieving the first successful episode. We observe in~\figref{fig:random_halting_effect} that within approximately $5,000$ episodes, the minimum number of gates needed to solve the six-qubit LiH problem (see {Tab.~\ref{tab:list_of_mol}} for details of the molecule geometry) decreases to 34. In contrast, reaching the same gate count without RH settings necessitates approximately $\texttt{3}\times$ more episodes.

\subsection{Multistage ADAM-SPSA algorithm}
In the case of VQE, the budget for measurement samples is restricted. To exhibit robustness against finite sampling noise, several versions of simultaneous perturbation stochastic approximation (SPSA) are utilized~\cite{cade2020strategies,bonet2023performance}. Among these variants, multi-stage SPSA adjusts the decaying parameters while tuning the permitted measurement sample budget between stages. Moreover, incorporating a moment adaptation subroutine from classical machine learning, like Adam~\cite{kingma2014adam}, alongside standard gradient descent helps us enhance the robustness and accelerates the convergence of the algorithm.
\begin{figure}[H]
\centering
\includegraphics[scale=0.1]{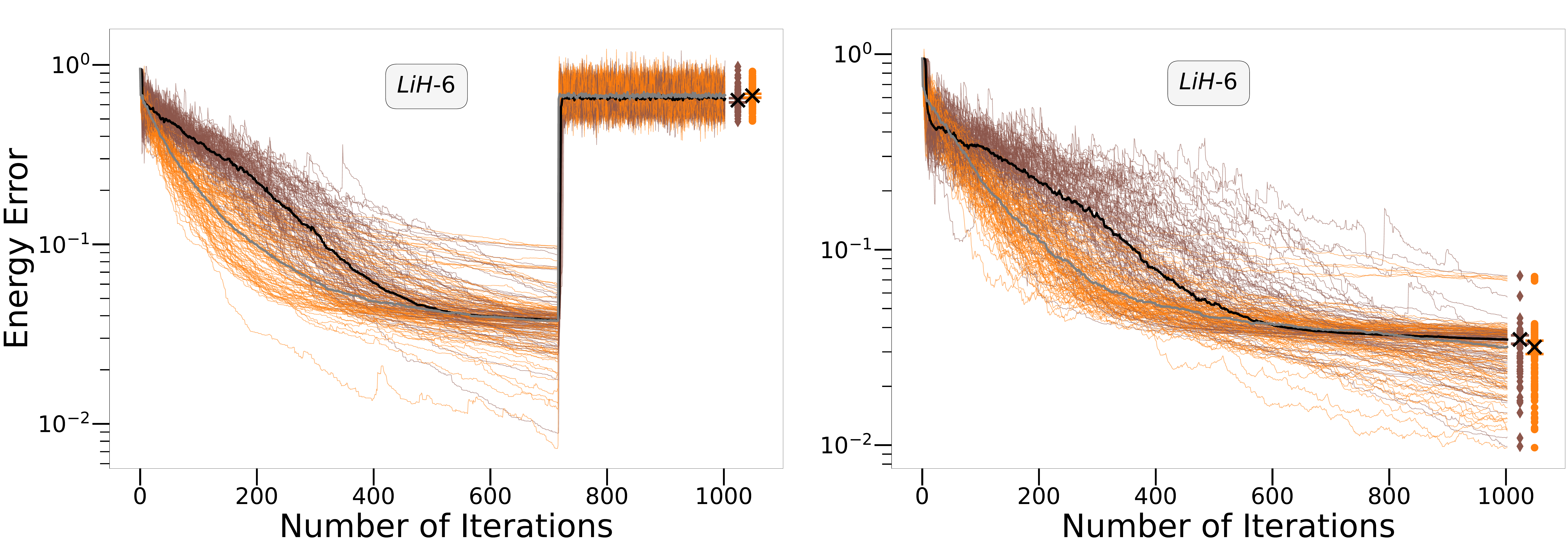}
\caption{Optimization traces of the 3-stage sampling strategy of SPSA (brown and black) and Adam-SPSA (orange and grey) on the six-qubit \texttt{LiH} (\texttt{LiH}-6) molecules using the hyperparameters outlined in Appendix~\ref{algo:adamspsa}.
	The individual traces are represented by thin lines, while the thick line on top indicates the median of $100$ independent runs. The left and right panels showcase the resetting and continuous evolution of SPSA (Adam-SPSA) hyperparameters, respectively.}
\label{fig:3stage_performance}
\end{figure}
As we discussed in the previous section, noise increases the computational time per episode, and it is very crucial to enhance the robustness and faster convergence rates of VQAs under realistic scenarios. For this reason, we leverage a 3-stage Adam-SPSA (whose pseudocode is provided in the Appendix~\ref{algo:adamspsa}). For a rigorous illustration of the 3-stage Adam-SPSA we investigate six-qubit \texttt{LiH} (\texttt{LiH}-6) molecule where the hyperparameters of the algorithm are set according to the Tab.~\ref{tab:spsa_hyperparams} in the Appendix~\ref{algo:adamspsa}. We see that in $3$-stage Adam-SPSA, unlike the vanilla SPSA without Adam momentum, the convergence towards the minima is qualitatively much faster which is qualitatively shown in~\figref{fig:3stage_performance}.

Utilizing these insights from our analysis of various SPSA variants, we employ $1$- and $3$-stage Adam-SPSA in our noisy experiments. 
This helps cut down the total number of function evaluations by half, thereby doubling the speed of our RL training. 
This improvement at the algorithm level helped us simulate noisy systems that suffer from computational complexity and large run times.

\subsection{Pauli-transfer matrix formalism on GPU}
Restating the fact that QAS demands a significant number of noisy function assessments unless a training-free approach is adopted. However, executing the steps poses enormous challenges within state-of-the-art simulation framework. The noisy simulation process not only encounter difficulty due to the curse of dimensionality related to dense matrix operations but also due to the exponential increase in the number of noise channels and their corresponding Kraus operators.

To address this challenge, a Pauli-transfer matrix (PTM) formalism is utilized, enabling the precomputation of noise channel fusion with respective gates offline. This eliminates the need for recalculations at each step. Alongside PTM formalism we integrate GPU computation along with just-in-time (JIT) compiled functions in JAX~\cite{jax2018github}, resulting in up to a $6\times$ enhancement in RL-agent training efficiency while simulating noisy quantum circuits.

\section{Curriculum reinforcement learning}
The moving threshold technique (see Fig.~\ref{fig:amortization}) is a feedback-driven curriculum learning method introduced in \cite{ostaszewski2021reinforcement}. 
{During the learning process, the agent pursues a parameter $\xi_2$ that marks the lowest energy known by the agent so far and updates a threshold parameter with respect to this parameter based on some rules. In the beginning, the $\xi_2$ parameter is set to a hyperparameter $\xi_1$. If the agent finds an energy value lower than the current one, it updates $\xi_2$ to this new energy value. Another hyperparameter ``fake minimum energy" $\mu$, a proxy to the lower bound of attainable ground state energy, is set as a target for the agent.\footnote{One can set the target of the agent to such a value for VQE because, from Rayleigh's variational principle, the agent theoretically can never attain energy below the true ground state energy.}
We compute this proxy by taking the summation of absolute values of Pauli string coefficients stemming from the Hamiltonian.

In the absence of amortization, the algorithm shifts the threshold to $|\mu - \xi_2|$ for the new $\xi_2$. In the presence of amortization, however, it adds a parameter to that threshold as $|\mu - \xi_2| + \delta$, where $\delta$ is the amortization hyperparameter. In the meantime, the agent continues its exploration with subsequent actions and episodes and records the number of successful actions. Here, there are two rules at play. The first rule greedily shifts the threshold to $|\mu - \xi_2|$ after $G$ episodes. Here $G$ is a hyperparameter as well. The second rule slowly decreases the threshold parameter each time there is a successful episode by subtracting a factor of $\delta / \kappa$. Here $\kappa$ is the radius of shifts, also a hyperparameter. Upon setting the threshold to $|\mu - \xi_2|$,  if the agent fails to improve the energy value in consecutive episodes, the threshold is increased back to $|\mu - \xi_2| + \delta$, as demonstrated in~\figref{fig:amortization}. This way, the agent is given an opportunity to trace its steps back if it was stuck in a local minimum.}
\begin{figure}[H]
\centering
\includegraphics[width=.4\textwidth]{ 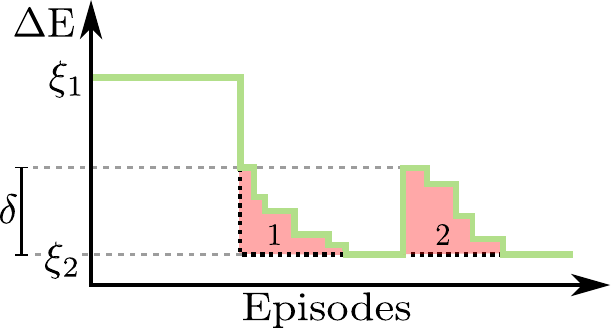}
\caption{\small{Demonstration of the feedback-driven (green) process, depicting two amortization occurrences (pink), $\delta$. The initial occurrence adjusts the threshold from $\xi_1$ to $\xi_2$, signifying the improvement. The subsequent event occurs when the agent fails to surpass $\xi_2$ or the gain is marginal. This prompts a sudden threshold increase due to amortization reset. Note that the final threshold, after the second amortization, may be less than $\xi_2$.
} } 
\label{fig:amortization}
\end{figure}
Notably, this method does not require any prior knowledge regarding the true value of the ground state energy and does not impose any specific constraints on the initial threshold value, unlike existing QAS methods in the literature. 

\section{Results}

In this section, we in detail present the results of finding the ground state of $\texttt{H}_2$, $\texttt{LiH}$, and $\texttt{H}_2\texttt{O}$ molecules of two-, three-, four- and eight-qubit. The structure of the molecules is provided in Tab.~\ref{tab:list_of_mol}. We initiate the section by simulating the molecules in a noiseless scenario and comparing its results with the state-of-the-art QAS algorithms. Through the rigorous comparison, we show that our CRLVQE algorithm outperforms the state-of-the-art and the existing learning-based QAS algorithm in tackling the same optimization task. Later on, we run our algorithm to find the ground in a realistic noisy scenario obtained from IBM devices such as \texttt{ibmq\_mumbai} and \texttt{ibmq\_ourense}.

\subsection{Noiseless case}
This section primarily focuses on the noiseless simulation of the molecules listed in the {Tab.~\ref{tab:list_of_mol}}. We compare the performance of our algorithm to the existing algorithms such as \textit{RL-VQE} introduced in~\cite{ostaszewski2021reinforcement} and \textit{quantumDARTS} described in~\cite{wu2023quantumdarts} while showing that our algorithm outperforms them.

To obtain the results, we consider the six-qubit $\texttt{LiH}$ and right-qubit $\texttt{H}_2\texttt{O}$ molecule. The detailed configurations are in the {Tab.~\ref{tab:list_of_mol}}. The results are summarized in {Tab.~\ref{tab:noiseless_rl_algorithm_compatison}}. It can be seen from the table that our algorithm outperforms the UCCSD, RL-VQE and the quantumDARTS ansatz and provides a quantum circuit with a smaller number of gates ($N_G$) and parameters ($N_P$) for six-qubit $\texttt{LiH}$ but the UCCSD ansatz provides smaller error in the ground state with the trade-off having $18$ times more gates, which is really costly.

\begin{figure}[h!]
	\centering
	\includegraphics[width=0.7\linewidth]{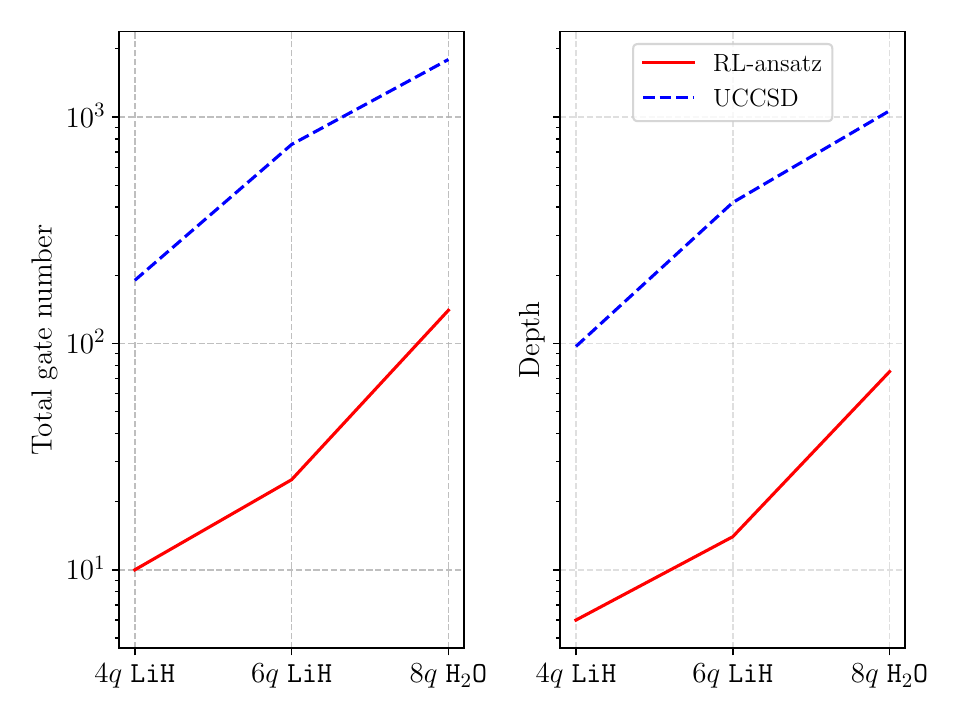}
	\caption{UCCSD ansatz requires approximately $20$ times more number of gates and circuit depth to find the ground state of chemical Hamiltonians compared to the CRLVQE proposed ansatz. Here the $x$-ticks define the molecules and respective number of qubits.}
	\label{fig:complexity_analysis}
\end{figure}
On the other hand, for eight-qubit $\texttt{H}_2\texttt{O}$, we compare our algorithm's performance with the quantumDARTS algorithm and show that we achieve $1.68$ times lower error in ground energy with $79$ fewer quantum gates and with $4.31$ times less parameters.

{%
\paragraph{Complexity of anstaz}
To provide a perspective from the complexity theory, in this section we compare the scaling of a decomposed UCCSD and the RL ansatz with the increase in number of qubits. The decomposition is conducted using \texttt{qiskit}'s \texttt{decompose} module. We observe that the gate count and the total number of parameters in the UCCSD ansatz scale as $1.45\times N^{3.45}\approx\mathcal{O}(N^{3.45})$ under the Jordan-Wigner transformation, where $N$ represents the number of qubits (i.e., spin orbitals). In contrast, the gate count in CRLVQE scales as $0.26\times N^3\approx\mathcal{O}(N^3)$\footnote{The curve fitting data was estimated using the \texttt{polyfit} module in \texttt{python}.}. Although the improvement over the number of gates and depth of the parameterized quantum circuits is not significant, CRLVQE demonstrates the potential to address quantum chemistry problems on NISQ devices with circuit containing number of gates $20.68$ times and depth up to $20$ times smaller than UCCSD as shown in \figref{fig:complexity_analysis}.
}

\subsection{Noisy case}\label{sec:noisy_simulation_molecule}
Let us now discuss the effect of quantum noise, such as shot noise and real device noise with connectivity constraints, in the proposed algorithm. To see the performance of the CRLVQE method, we find the ground state of $\texttt{H}_2$ molecule with two-, three-qubit and $\texttt{LiH}$ with four-qubit under different amplitude of shot noise. Furthermore, we take the maximum noise from the \texttt{ibmq\_mumbai} and \texttt{ibmq\_ourense} device of \texttt{IBM quantum} and uniformly apply it to all the qubits for three- and four-qubit $\texttt{H}_2$ molecule. The noise includes the single, two-qubit depolarizing noise, readout error, thermal relaxation, and single and two-qubit gate time without connectivity constraints for \texttt{ibmq\_mumbai} and with connectivity constraints for \texttt{ibmq\_ourense}.
\begin{table}[h!]
\caption{\small{Tabular representation of the maximum noise of $\texttt{ibmq\_mumbai}$ device}. Additionally, the qubit frequency and the anharmonicity are the same for maximum noise settings and are set to $4.896$ GHz and $-0.33$ GHz, respectively.}
\centering
\tiny
\begin{tabular}{@{}ccccccc@{}}
	\toprule
	Model/Noise    & \makecell{1q \\ dep.} & \makecell{2q \\ dep.} & \makecell{Read. \\ error}       & \makecell{Therm. rel. \\ noise ($\mu$s)} & \makecell{1q gate\\ time (s)} & \makecell{2q gate\\ time (s)} \\ \midrule
	Max            & $1.45\times10^{-3}$     & $2.30\times10^{-2}$ & $8.7\times10^{-2}$  & \makecell{$T_1 = 122.286$\\ $T_2 = 167.2$} & $35\times10^{-9}$ & $739.5\times10^{-8}$             \\ \bottomrule
\end{tabular}

\label{tab:ibm_mumbai_noise_table}
\end{table}

\newpage
%
%
%
\begin{table}[H]
\setlength\extrarowheight{15pt}
\centering
\small
\begin{tabular}{l|cccc|cccc}
	\hline
	\diagbox[]{Methods}{Molecule}& \multicolumn{4}{c}{six-qubit \texttt{LiH}}& \multicolumn{4}{c}{eight-qubit $\texttt{H}_2\texttt{O}$} \\ 
	\hline
	&$\varepsilon$&$N_P$&$N_D$&$N_G$  &$\varepsilon$&$N_P$&$N_D$&$N_G$ \\
	\cline{2-9}
	Ours (RH)   &$1.22\times10^{-3}$&$\cellcolor{blue!30}{15}$&$14$&$\cellcolor{blue!30}{25}$   &&&&    \\ 
	Ours (wo-RH)   &$1.98\times10^{-4}$&$25$&$22$&$42$   &$\cellcolor{blue!30}{1.84\times10^{-4}}$&$\cellcolor{blue!30}{35}$&$75$&$\cellcolor{blue!30}{140}$    \\ 
	%
	%
	UCCSD   &$\cellcolor{blue!30}{4.0\times10^{-5}}$&$224$&$347$&$464$   &&&&    \\ 
	%
	%
	RL-VQE   &CA&17&$\cellcolor{blue!30}{11}$&$27$   &&&&    \\ 
	%
	%
	quantumDARTS   &$2.9\times10^{-4}$&$80$&$54$&$132$   &$3.1\times10^{-4}$&$151$&$\cellcolor{blue!30}{64}$&$219$  \\ \hline
\end{tabular}
\caption{\small{Our algorithm outperforms the existing ansatz such as UCCSD, the ansatz proposed in RL-VQE and the quantumDARTS algorithm in terms of the number of parameters ($N_P$) and number of gates ($N_G$) for six-qubit \texttt{LiH} molecule and in terms of the error in energy ($\varepsilon$), $N_P$ and $N_G$ for eight-qubit $\texttt{H}_2\texttt{O}$ molecule}. In the table, the empty cells correspond to the results that are not relevant or unavailable for that particular algorithm. The \textit{CA} corresponds to chemical accuracy. For six-qubit \texttt{LiH}, we run our algorithm with the random halting technique and without it and present both results. On the other hand, for $\texttt{H}_2\texttt{O}$, we only conduct the simulation without random halting settings.}
\label{tab:noiseless_rl_algorithm_compatison}
\end{table}
%
%
%
%
%
In~\cite{du2022quantum}, the authors consider the four-qubit $\texttt{H}_2$ molecule under \texttt{ibmq\_ourense} device noise and constrained connectivity. Hence, we compare the performance of our algorithm with the one introduced in~\cite{du2022quantum} in {Tab.~\ref{tab:noisy_rl_algorithm_compatison}}, highlighted with blue colour) and show that our algorithm outperforms in terms of error in energy estimation ($\varepsilon$), number of parameters ($N_P$), depth ($N_D$), and number of gates ($N_G$). Using the QAS algorithm~\cite{du2022quantum}, the minimum error in energy recorded is $1.88\times10^{-2}$ with the number of parameters $10$, depth $9$ and the number of gates $16$, but using our algorithm, we achieve an error in energy $2.98\times10^{-4}$ with ansatz containing $6$ parameters, $6$ depth and with just $10$ gates.

Additionally, from the results, we conclude that the algorithm we present is susceptible to shot noise, and we solve almost every instance of the neural network for two-, three-, four-qubit $\texttt{H}_2$ and $\texttt{LiH}$ molecules. For two- and three-qubits, the error in the ground energy estimation goes well below $10^{4}$, and for the four-qubit, we get an error below $10^{-3}$ (below chemical accuracy). The parameter \texttt{SN} corresponds to the amplitude of shot noise applied for the molecule. It can be seen that even with $10^3$ (two-qubit $\texttt{H}_2$) and $10^4$ (three-qubit $\texttt{H}_2$) shots our algorithm finds an error below $10^{-4}$ with just $8$ and $5$ gates respectively.

Finally, to show the diversity of our algorithm in {Tab.~\ref{tab:noisy_rl_algorithm_compatison}} we solve the ground state of the three-qubit $\texttt{H}_2$ molecule under maximum noise of $\texttt{ibmq\_mumbai}$ as presented in {Tab.~\ref{tab:ibmq_lima_noise_table}}. We show that in all the seeds, we are able to solve the molecule with an error in energy in the order of $10^{-4}$ with a minimum number of parameters $2$, depth $7$ and a total number of gates $8$.

For GPU/CPU specifications and the computational time of each episode of RL we encourage the readers to see the details in Table~\ref{tab:training-time-record}.
\newpage

\begin{table}[H]
\caption{ \small{Our algorithm solves the two-, three-, four-qubits $\texttt{H}_2$ in all initialization of the neural network and four-qubit $\texttt{LiH}$ problem in 2 out of 3 seeds in the presence of different amplitude of shot noise. Our algorithm outperforms the QAS algorithm presented in~\cite{du2022quantum} under \texttt{ibmq\_ourense} noise and connectivity in terms of the number of parameters ($N_P$) and number of gates ($N_G$) for six-qubit \texttt{LiH} molecule and in terms of the error in energy ($\varepsilon$)}. Unlike in QAS~\cite{du2022quantum} where the algorithm could not achieve the chemical accuracy, we show that using our algorithm, we can go $10\times$ below chemical accuracy using an ansatz with $6$ parameters and $4$ rotations for four-qubit \texttt{LiH} problem. The \texttt{SN} corresponds to the number of shots that are considered for the molecule.}
\setlength\extrarowheight{2.5pt}
\centering
\small
\begin{tabular}{l|ccccc}
	\hline
	\diagbox[width=7cm]{Molecules}{Methods}& \multicolumn{5}{c}{Our algorithm} \\ 
	\hline
	&seed&$\varepsilon$&$N_P$&$N_D$&$N_G$ \\
	\cline{2-6}
	& $100$   &$3.63\times10^{-6}$&$4$&$4$&$5$     \\
	$\texttt{H}_2\; (2 \texttt{qubit}, \texttt{SN}=10^3,\texttt{RH})$ & $101$   &$1.16\times10^{-4}$&$14$&$15$&$16$     \\
	& $102$   &$9.25\times10^{-6}$&$38$&$24$&$40$      \\\cline{2-6}
	& $100$   &$2.81\times10^{-5}$&$6$&$7$&$9$     \\
	$\texttt{H}_2\; (3 \texttt{qubit}, \texttt{SN}=10^4,\texttt{RH})$ & $101$   &$4.31\times10^{-5}$&$7$&$4$&$9$    \\
	& $102$   &$7.94\times10^{-5}$&$5$&$5$&$8$      \\\cline{2-6}
	& $100$   &$3.30\times10^{-4}$&$7$&$8$&$13$     \\
	$\texttt{H}_2\; (4 \texttt{qubit}, \texttt{QAS},\texttt{RH})$ & $101$   &$\cellcolor{blue!30}{2.98\times10^{-4}}$&$\cellcolor{blue!30}{6}$&$\cellcolor{blue!30}{6}$&$\cellcolor{blue!30}{10}$    \\
	& $102$   &$3.30\times10^{-4}$&$7$&$8$&$13$    \\ \cline{2-6}
	& $100$   &$4.38\times10^{-4}$&$2$&$8$&$8$    \\
	$\texttt{H}_2\; (3 \texttt{qubit}, \texttt{ibmq\_mumbai max}, \texttt{RH})$ & $101$   &$3.38\times10^{-4}$&$3$&$7$&$8$    \\
	& $102$   &$3.94\times10^{-4}$&$2$&$7$&$8$   \\ \cline{2-6}
	& $100$   &$1.32\times10^{-3}$&$23$&$16$&$31$    \\
	$\texttt{LiH}\; (4 \texttt{qubit}, \texttt{SN}=10^6, \texttt{RH})$ & $101$   &$1.19\times10^{-3}$&$25$&$15$&$35$    \\
	& $102$   &NS&NS&NS&NS   \\
	\hline
\end{tabular}
\label{tab:noisy_rl_algorithm_compatison}
\end{table}
\section{Takeaways}
This chapter introduced a vanilla and curriculum reinforcement learning-based quantum architecture search algorithm for variational quantum algorithms. The algorithm is benchmarked for under noiseless and realistic noisy scenarios based on \texttt{IBM} hardware. The crucial takeaways from the chapter are as follows 
\begin{itemize}
\item \textbf{Tensor-based ansatz encoding provides efficient data representation for RL} In {Sec.~\ref{sec:encoding_comparison}} we introduce a depth-based binary encoding, namely TBE (tensor-based encoding) for quantum circuits that we utilize as an RL-state. In the very heart of the encoding lies a 3D grid structure where each dimension carries information about the depth, the type of the gate, and the position of the gate (i.e. on which qubit the gate is to be placed), respectively. The grid is of size $[T\times \left( (N+3)\times N\right)]$ where $T$ is a predefined number corresponding to maximum depth and $N$ is the number of qubits.

We benchmark the TBE with previously proposed integer encoding, namely IE, in the task of finding the ground state of molecules. Through~\figref{fig:encoding_comparison_LiH_4q}, we simulate a four-qubit \texttt{LiH} molecule and show that the TBE is more stable than IE, and it gives the minimum error in the ground state energy lower than the IE with a smaller number of gates. This is beneficial for the NISQ era, as it helps effectively mitigate the negative impact of gate errors and decoherence.

\item \textbf{Enhanced insight on the reward function}
In {Sec.~\ref{sec:reward_comparison}}, we extend our understanding of a dense and sparse reward based on two formulations of the reward function in finding the ground state of the four-qubit \texttt{LiH} molecule. The first kind of reward we consider is introduced in the RL-VQSD chapter by~\eqref{eq:log_reward}, namely log reward, which depends on a large positive integer $\mathcal{R}$. For the sake of understanding the performance of the reward function, we choose $\mathcal{R}=0,50,100,500,1000$ and compare it with the reward function proposed in~\cite{ostaszewski2021reinforcement} (see~\eqref{eq:old_reward}), namely $R_\text{prev}$.

The results are illustrated in~\figref{fig:reward-comparison} where we can clearly see that the log reward improves as the $\mathcal{R}$ increases up to $\mathcal{R}=500$, and after that the improvement diminishes. Interestingly, the performance of the log reward at $\mathcal{R}=500$ mimics the performance of the $R_\text{prev}$ in terms of the minimum number of gates, depth, number of parameters in ansatz and even in the accuracy of estimating the ground energy. But as the number of successful episodes with the $R_\text{prev}$ is larger than the log reward at $\mathcal{R}=500$, we utilize the $R_\text{prev}$ as the reward function for larger molecules. The main motivation behind this is that the higher the number of successful episodes, the greater the array of ansatz. This helps us investigate a wide arrangement of gates in an ansatz and pick the best one among them.

\item \textbf{Random halting helps discover compact ansatz quickly}
Throughout the {Sec.~\ref{sec:random_halting_performance}}, we elaborate on a simple technique called the \textit{random halting} (RH), which is introduced by keeping the realistic noisy scenario in mind. In the midst of quantum noise, predictably, the RL-agent starts by proposing a lengthy ansatz in early successful episodes. However, over thousands of subsequent episodes, it gradually shifts towards a shorter ansatz. This is inefficient in terms of the computational time of our algorithm when we compare it with the noiseless case. Hence, to address this challenge, we introduce this method where the number of time steps per episode is a variable and changes from one episode to another based on the probability distribution provided in~\eqref{eq:negative_binom_dist}.

In the~\figref{fig:random_halting_effect}, we illustrate the number of gates it requires to solve a six-qubit \texttt{LiH} molecule in the presence and absence of RH. We clearly noted that in the presence of RH, the RL-agent learns $3\times$ faster than the optimal quantum circuit without the RH setting. However, it should be noted that the number of successful episodes drastically decreases with RH.

\item \textbf{Solving molecules under physical noise and connectivity constrained}
In the {Sec.~\ref{sec:noisy_simulation_molecule}}, we utilize our algorithm to find the ground state of $\texttt{H}_2$ and \texttt{LiH} molecules with two-, three- and four-qubit. For our simulation, we consider shot and physical device noise. The noise is imported from the \texttt{IBM quantum} hardware \texttt{ibmq\_ourense} (we consider the maximum noise among all the qubits from the noise model is provided in ref.~\cite{du2022quantum} and uniformly applied to all qubits taking the qubit connectivity into account) and \texttt{ibmq\_mumbai} (we consider the maximum noise among all the qubits from the noise model in {Tab.~\ref{tab:ibm_mumbai_noise_table}} and uniformly apply it all qubits and does not take the qubit connectivity into account). In the case of shot noise, we show that for two- and three-qubit $\texttt{H}_2$ and four-qubit \texttt{LiH} molecule, we can solve the problem with $10^3$, $10^4$ and $10^6$ shots with 5, 9 and 31 gates respectively. Under \texttt{ibmq\_ourense} noise, we solve four-qubit $\texttt{H}_2$ in 10 gates, and it takes 8 gates to solve three-qubits $\texttt{H}_2$ with \texttt{ibmq\_mumbai} noise.

This shows that our algorithm is susceptible to shot noise and can solve \texttt{LiH} problem with ease. Meanwhile, for device noise and constrained connectivity, we can solve the four-qubit $\texttt{H}_2$ with a very small ansatz.

\item\textbf{Introduced algorithm outperforms existing QAS algorithms}
Through the {Tab.~\ref{tab:noiseless_rl_algorithm_compatison}} and {Tab.~\ref{tab:noisy_rl_algorithm_compatison}} we compare the performance of our algorithm with the RL-VQE~\cite{ostaszewski2021reinforcement}, quantumDARTS~\cite{wu2023quantumdarts} and the net based QAS~\cite{du2022quantum} algorithms. We show that in the case of the noiseless scenario, our algorithm outperforms the RL-VQE and the quantumDARTS in terms of the number of gates, accuracy in energy estimation, parameter number and depth of the ansatz. Meanwhile, for the noisy scenario, the authors in using the QAS algorithm~\cite{du2022quantum} could not find the chemical accuracy but using our algorithm we not only provide a shorted ansatz and find the ground energy.
\end{itemize}


\chapter{Reinforcement learning assisted variational certification of quantum channels}\label{ch:vqcd_application_rl_vqsd}

The goal of this chapter is to describe an application of variational quantum state diagonalization (VQSD) techniques in the area of quantum technologies. To this end, we focus on a protocol for quantum channel certification based on the variational approach. We demonstrate the building blocks of the protocol, and we demonstrate the implementation of the proposed algorithm on a near-term quantum computer. The results in the chapter are based on~\cite{kundu2022variational}, and the accompanying source code can be found in~\cite{kundu2021qiskit}. After introducing the quantum channel certification algorithm, we elaborate on how the reinforcement learning (RL)-based quantum architecture search method can be used to enhance the performance of the certification algorithm.


\section{Introduction}
One of the primary applications of quantum state diagonalization is the certification of quantum devices. However, certification tasks pose a significant challenge in quantum computing applications. The certification of the characteristics of a quantum system is similar to trying to recreate the results we can get from a regular classical simulation. However, this task is computationally complex, which aligns with the essence of quantum supremacy~\cite{arute2019quantum,boixo2018characterizing,harrow2017quantum,terhal2018quantum,lund2017quantum}.

The challenge in the certification of a quantum device primarily arises from the inherent computational advantage of quantum computers. Hence, it is better to explore the potential use of quantum computers to certify quantum devices. Here we dive into the scenario where we present a certification approach that is based on the structure of the space of quantum operations. The inherent link between states and operations in quantum mechanics i.e. the Choi-Jamio\l{}kowski isomorphism~\cite{choi1975completely,jamiolkowski1972linear} leads to novel techniques of quantum information processing that has the potential to go beyond the possibilities of classical mechanics. If the quantum device is represented through a quantum channel $\Phi$ then through Choi-Jamio\l{}kowski isomorphism we get the corresponding state as
\begin{equation}
	\rho_\Phi = \mathcal{J}(\Phi) = \left( \mathbb{I} \otimes \Phi\right)\sum_{i=1}^n \ket{i}\otimes\ket{i},
	\label{eq:choi-jamil-iso}
\end{equation}
where $\sum_{i=1}^n \ket{i}\otimes\ket{i}$ is the maximally entangled states.

The problem of distinguishing between two or more quantum devices is equivalent to defining the distance in the space of density matrices. The fascination with the physical implementation of quantum information processing has led to the introduction of a class of distance/similarity measures, as evident from the substantial work carried out in this area~\cite{duan2009perfect,ji2006identification,piani2009all,wang2006unambiguous}. Notably, with a concentrated emphasis on assessment of the practical viability of the suggested methodologies~\cite{sedlak2009unambiguous}.

One of the well-known measures of similarity between two quantum states is quantum state fidelity which is defined as follows~\cite{uhlmann1976transition}
\begin{equation}
	F(\rho,\sigma) = || \sqrt{\rho}\sqrt{\sigma} || = \text{tr}\sqrt{\sqrt{\rho}\sigma\sqrt{\rho}},
	\label{eq:fidelity-formula}
\end{equation}
it gives us the quantum counterpart of the Bhattacharyya coefficient~\cite{bhattacharyya1946measure} which measures the similarity between two probability distributions and it reduces to the scalar product for rank-1 operators.

A significant research effort has been devoted to finding methods for approximating fidelity~\cite{liang2019quantum}. Hence, for ease of calculation, the bounds for the values of fidelity using the functional are introduced~\cite{miszczak2009sub}. The bounds are defined by the sub- and super-fidelity bounds (SSFB) 
\begin{align}
	&F_\text{sub}(\rho, \sigma) = \text{tr}(\rho\sigma) + \sqrt{2\left[ \text{tr}(\rho\sigma) - \text{tr}(\rho\sigma)^2 \right]},\label{eq:sub-fb}\\
	&F_\text{sup}(\rho, \sigma) = \text{tr}(\rho\sigma) + \sqrt{\left(1-\text{tr}\rho^2\right)\left(1-\text{tr}\sigma^2\right)]},
	\label{eq:super-fb}
\end{align}
which satisfies the property
\begin{equation}
	F_\text{sub}(\rho, \sigma)\leq F(\rho, \sigma)\leq F_\text{sup}(\rho, \sigma).
\end{equation}
\begin{figure}[H]
	\centering    \includegraphics[width=0.5\textwidth]{ 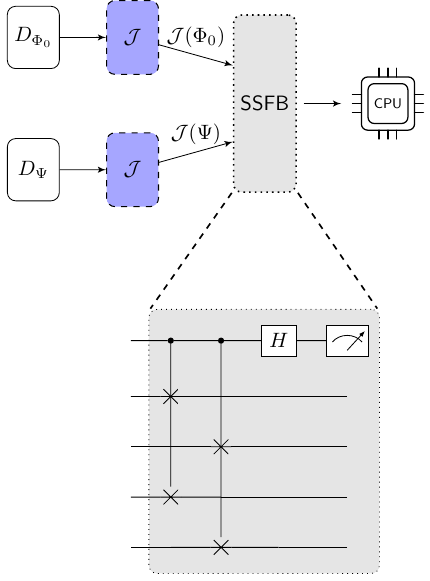}
	\caption{{The super and sub-fidelity bound-based quantum device certification procedure}. In this
		scheme, the goal is to certify a quantum device $\Psi$ against an ideal device $\Phi_0$ using the calculable bounds for
		fidelity. The input of the procedure is given in the form of physical devices, $D_{\Phi_0}$ and $D_\Psi$ for the ideal and the unknown quantum channel respectively. In the CPU part of the scheme, we do classical post-processing of the results obtained using the quantum subroutine.}
	\label{fig:ssfb-certification-process.}
\end{figure}

It should be noted that there is no exact quantum algorithm that can be used to calculate the fidelity. This is because the calculation of the fidelity requires the non-integer powers of the quantum states. In~\cite{cerezo2020variational} the authors introduce the variational quantum state fidelity (VQFE) algorithm which is a variational quantum-classical algorithm to find the bounds of fidelity. The VQFE computes upper
and lower bounds on fidelity. The bounds are based on the \textit{truncated fidelity}, which can be evaluated using~\eqref{eq:fidelity-formula} for state $\sigma$
and a state $\rho_m$. The state $\rho_m$ is called the truncated state of $\rho$ that can be obtained by projecting the state$\rho$ onto the subspace associated with its $m$-largest
eigenvalues. The bounds can be tightened monotonically with the increment in $m$, and finally, the bounds will converge to the true fidelity when $m = \text{rank}(\rho)$. The bounds on the fidelity between states $\rho$ and $\sigma$ are expressed by
\begin{equation}
	F(\rho_m, \sigma^\rho_m)\leq F(\rho, \sigma)\leq F_*(\rho_m, \sigma_m^\rho),
	\label{eq:truncated-fidelity-bound}
\end{equation}
where
\begin{equation}
	F_*(\rho_m, \sigma_m^\rho) = || \sqrt{\rho_m}\sqrt{\sigma_m^\rho} || + \sqrt{\left(1-\text{tr}\rho_m\right)\left(1-\text{tr}\sigma^\rho_m\right)]},
	\label{eq:generalized-truncated fidelity}
\end{equation}
where $\sigma_m^\rho = \Pi_m^\rho \sigma \Pi_m^\rho$ is the operator obtained as the projection of $\sigma$ onto the subspace spanned by $m$ largest eigenvectors of $\rho$. The $F_*(\rho_m, \sigma_\rho^m)$ is the \textit{truncated fidelity}. The $F_*$ is also utilized to compute quantum Fisher information~\cite{sone2021generalized}.

Following these developments, we present a novel algorithm that utilizes the super and sub-fidelity bounds and VQFE procedures as building blocks for quantum device certification. We achieve this by combining the procedures for the estimation of bounds on the fidelity with the resulting density matrix obtained by using Choi-Jamio\l{}kowski isomorphism given in~\eqref{eq:choi-jamil-iso}. In~\figref{fig:ssfb-certification-process.} we illustrate an algorithm based on the bounds given in~\eqref{eq:sub-fb} and in~\eqref{eq:super-fb}. 

The procedure takes two devices as input – the standard device ($\Psi$) with the operational capacity already confirmed, and the device for which its conformation with the standard device is to be confirmed.

One should note that in this scheme classical data processing is required only at the final step of the procedure. This step is required to compute the bounds for the fidelity based on the measurement results.

In the following, we first briefly discuss the problem statement and then describe the novel quantum device certification algorithm. Next, we briefly discuss the results of the algorithm which is followed by a brief investigation of the results. Finally, we give conclusive remarks.

\section{Groundwork}
\subsection{Problem statement}
Let's consider a scenario where a quantum start-up has successfully developed a quantum device that can address critical optimization problems or can generate valuable states essential for quantum communication protocols. In this context, it becomes crucial to provide some testing procedures that will reassure the buyers about the device's actual functionality. Hence, the main object of the buyer would be to verify the quantum device, if it performs as advertised by the seller.

In the general context of differentiating between quantum channels, it is customary to assume that we have for our disposal a set of $N$ quantum devices that are denoted by the quantum channels $\Psi_1, \Psi_1,\ldots, \Psi_N$. Each device operates as a black box, which directly indicates that we are not aware of the Kraus representation of the channels. In such a situation, it is impossible to determine whether the input devices can be perfectly distinguished~\cite{duan2009perfect}.

However, in our situation, the task is straightforward. All we intend to do is to convince the buyer that the device we would like to sell emulates the operation of an ideal device, denoted by $D_{\Phi_0}$ which in turn characterized by a quantum channel $\Phi_0$, operating on the space of $n$ qubits. Moreover, we have the second device, $D_\Psi$, which is claimed to be indistinguishable compared with $D_{\Phi_0}$.

Furthermore, as the start-up specializes in quantum technology, the board decided that the certification procedures should also benefit from the quantum advantage. Such a decision yields two benefits. Firstly, it supports the claims concerning the ubiquitous applications of quantum computing. Secondly, it provides an opportunity to develop a unique certification service that can be offered to other quantum start-ups~\cite{mohseni2017commercialize}.

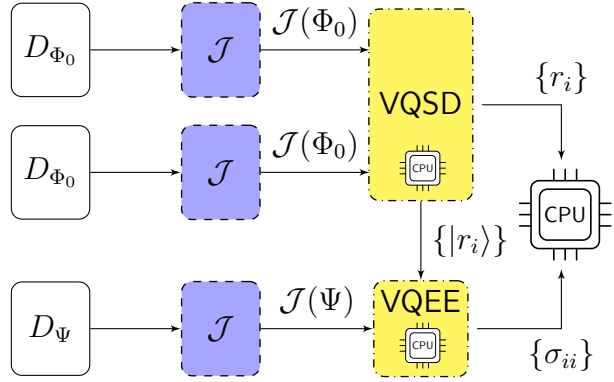
\begin{figure}
	\centering
	\begin{tikzpicture}[auto, node distance=2cm,>=latex',scale=1.12]
		
		\node at (0,  1.75) [qdev, name=qdev1] {$D_{\Phi_0}$};
		\node at (0,  0.30) [qdev, name=qdev2] {$D_{\Phi_0}$};
		\node at (0, -1.55) [qdev, name=qdev3] {$D_{\Psi}$};
		
		\node at (2,  1.75) [jam1, name=jamPhi01] {$\mathcal{J}$};
		\node at (2,  0.30) [jam1, name=jamPhi02] {$\mathcal{J}$};
		\node at (2, -1.55) [jam1, name=jamPsi] {$\mathcal{J}$};
		
		\node at (4.35, 1.10) [vqfe, name=vqsd] {\sf VQSD};
		\node at (4.35, 0.35) [cpu,scale=0.5] {};
		
		\node at (4.35, -1.55) [vqfe-short, name=vqsd-part] {};
		\node at (4.35, -1.25) [] {\sf VQEE};
		\node at (4.35, -1.75) [cpu,scale=0.5] {};
		
		\node at (6, -0.20)[cpu, name=cpu-pp] {};
		
		\draw [draw,->] (qdev1.east) -- (jamPhi01);
		\draw [draw,->] (qdev2.east) -- (jamPhi02);
		\draw [draw,->] (qdev3.east) -- (jamPsi);
		
		\draw [draw,->] (jamPhi01.east) --node[anchor=south,above]{$\Jam{\Phi_0}$} (3.75,  1.75);
		\draw [draw,->] (jamPhi02.east) --node[anchor=south,above]{$\Jam{\Phi_0}$} (3.75,  0.30);
		\draw [draw,->] (jamPsi.east) --node[anchor=south,above]{$\Jam{\Psi}$} (vqsd-part);

		\draw [draw,->] (vqsd.south)  --node[anchor=south,right]{$\{\ket{r_i}\}$} (vqsd-part.north);

		\draw [draw,->] (vqsd.east)+(0.1,0) -| node[anchor=south,above]{$\{r_i\}$}  (cpu-pp.north);

		\draw [draw,->] (vqsd-part.east)+(0.1,0) -| node[anchor=west,below]{$\{\sigma_{ii}\}$}  (cpu-pp.south);
		
	\end{tikzpicture}
	\caption{The variational quantum fidelity estimation (VQFE) certification algorithm.}
	\label{fig:vqfe-certification-scheme}
\end{figure}

\subsection{The algorithm}\label{sec:vqfe-based-vertification-salgorithm}

Apart from the process described through~\figref{fig:ssfb-certification-process.} we introduce an alternative certification approach based on variational quantum fidelity estimation~\cite{kundu2022variational}. The primary goal of the algorithm is to find the truncated fidelity of $\mathcal{J}(\Psi)$ on the basis of $m$ largest eigenvalues of $\mathcal{J}(\Phi_0)$. In~\figref{fig:vqfe-certification-scheme} we illustrate the VQFE-based certification scheme. In this scheme to certify a quantum device $\Psi$ against the ideal device $\Phi_0$, the VQFE procedure is used. The input of the procedure is given in the form of physical devices, $D_{\Phi_0}$ and $D_{\Psi}$, implementing channels $\Phi_0$ and $\Psi$ respectively. The first step is to apply Choi-Jamio{\l}kowski isomorphism to obtain dynamical matrices for the input devices. Next, two copies of the ideal device ($\Jam{\Phi_0}$) are used as an input for the variational quantum state diagonalization (VQSD) procedure, which is briefly discussed in the {Sec.~\ref{sec:vqsd-algorithm-brief}}. At the same time, $\Jam{\Psi}$ is processed by a quantum-classical algorithm to obtain its matrix elements in the eigenbasis of $\Jam{\Phi_0}$. Finally, classical processing of the obtained eigenvalues is used to calculate the approximation of the fidelity between quantum operations.
Note that in this procedure, the CPU part is utilized at several steps -- as a part of VQSD used for calculating eigenvalues $r_i$ of $\Jam{\Phi_0}$ and matrix elements $\sigma_{ij}$ of $\Jam{\Psi}$. The yellow blocks in~\figref{fig:vqfe-certification-scheme} indicate hybrid quantum-classical sub-procedures.

In a nutshell, the variational quantum fidelity-based certification procedure consists of the following steps.
\begin{itemize}
	\item First we prepare two copies of $\Jam{\Phi_0}$ to process in the VQSD algorithm and one copy of $\Jam{\Psi}$.
	
	\item The two copies of the ideal device are utilized to diagonalize $\Jam{\Phi_0}$. After this process, we get the $m$-largest eigenvalues $\{r_i\}$, which can be subsequently stored on a classical CPU, and the eigenvectors of $\Jam{\Phi_0}$, which will be useful in the upcoming step.
	
	\item We now make use of the VQFE procedure with the $\Jam{\Psi}$ and provide the eigenvectors of the ideal device obtained from the previous VQSD sub-procedure to obtain matrix elements $\sigma_{ii}=\bra{r_i}\Jam{\Psi}\ket{r_j}$ in the eigenbasis of $\Jam{\Phi_0}$, if the cost function in VQSD process is non-zero then we get $\sigma_{ii}^\prime=\bra{r_i^\prime}\Jam{\Psi}\ket{r_j^\prime}$, where $r_i^\prime$ is the inferred eigenbasis.
	
	\item The resulting matrix elements of $\Jam{\Psi}$ in the eigenbasis of $\Jam{\Phi_0}$, and eigenvalues of $\Jam{\Phi_0}$ are used to calculate truncated fidelity bounds according to Eq.~(\ref{eq:truncated-fidelity-bound}). To obtain these bounds one needs to first compute the RHS of~\eqref{eq:generalized-truncated fidelity}.
	\begin{equation}
		||\sqrt{\rho_m}\sqrt{\sigma_m^\rho}||=\Tr{\sqrt{\sum_{i,j}T_{i,j}\ket{r_i}\bra{r_j}}},
	\end{equation}
	where $T_{i,j}$ is a matrix whose dimension is dependent on the number of largest eigenvalues we can retrieve from the VQSD process. If we have $m$ largest eigenvalues then $T_{i,j}$ is if $m\times m$ which elements are computed by
	\begin{equation}
		T_{i,j} = \sqrt{r_ir_j}\bra{r_i}\Jam{\Psi}\ket{r_j}, \;\;\; \text{such that}\;\;T\geq0.
		\label{eq:T-matrix-tfb}
	\end{equation}
	Explicitly the truncated fidelity bounds are computed as follows
	\begin{align}
		& F_*(\rho_m,\sigma_m^\rho) = \sum_i\sqrt{\lambda_i}+\sqrt{\left(1-\sum_ir_i\right)\left(1-\sum_i\sigma_{ii}\right)},\nonumber\\
		& F(\rho_m,\sigma_m^\rho) = \sum_i\sqrt{\lambda_i},
		\label{eq:TFB-numericals}
	\end{align}
	with $\lambda_i$ are the eigenvalues of $T$ where $i=1,2,\ldots,m$.
\end{itemize}
\subsection{Noise models}\label{sec:noise-model-certification}
This section briefly describes a class of noise models that we utilize to investigate the variational device certification process. The noise models are constructed using the Kraus operator representation~\cite{nielsen2010quantum}. In the following we provide a brief introduction to quantum channels.

\paragraph{Depolarizing noise} For a single \textit{depolarized} qubit with probability $\gamma$, the term \textit{depolarized} means that the single qubit state is replaced by a completely mixed state i.e. $\id/2$ and with $(1-\gamma)$ probability the qubit is completely left untouched. Hence the state of the system after getting affected by the noise is
\begin{equation}
	\Delta_\gamma = (1-\gamma)\rho + \frac{\gamma}{2}\id.
\end{equation}
The circuit model of simulating depolarizing noise contains three-qubits where one of the qubits contains the input quantum state and the remaining two lines are an environment to simulate the channel.
\begin{figure}[H]
	\centering
	\begin{quantikz}[thin lines,scale=1.12]
		\lstick{$\rho$} & \swap{1} & \qw\\
		\lstick{$\frac{\id}{2}$} & \targX{} & \qw\\
		\lstick{$(1-\gamma)\rho + \frac{\gamma}{2}\id$} & \ctrl{-2} & \qw
	\end{quantikz}
	\caption{Illustration of a circuit that simulates depolarizing noise.}
	\label{fig:depol-noise-circ}
\end{figure}
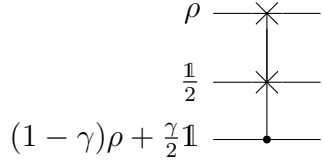
In the~\figref{fig:depol-noise-circ} the main idea behind the circuit is that the controlled qubit is the mixture of $\ket{0}$ with $(1-\gamma)$ probability and state $\ket{1}$ with $\gamma$ probability and this decides whether or not the state $\id/2$ is swapped into the first qubit.

\paragraph{Amplitude damping noise} The amplitude-damping channels lead to a decay of energy from an excited state to the ground state depending on the probability $\gamma$. Hence, the channel's action on a state is given as 
\begin{equation}
	A_\gamma = \mathcal{K}_0\rho\mathcal{K}_0^\dagger + \mathcal{K}_1\rho\mathcal{K}_1^\dagger,
\end{equation}
where 
\begin{equation}
	\mathcal{K}_0 = 
	\begin{bmatrix}
		1 & 0 \\
		0 & \sqrt{1-\gamma} 
	\end{bmatrix},\;\;
	\mathcal{K}_1 = 
	\begin{bmatrix}
		0 & \sqrt{\gamma}\\
		0 & 0
	\end{bmatrix}.
\end{equation}
In~\figref{fig:amp-damp-noise-circ} we illustrate the circuit to simulate depolarizing noise.
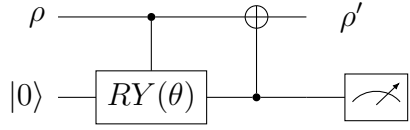
\begin{figure}[H]
	\centering
	\begin{quantikz}[thin lines,scale=1.12]
		\lstick{$\rho$}     & \ctrl{1} & \targ{}    & \qw       & \lstick{$\rho'$}\\
		\lstick{$\ket{0}$}  & \gate{RY(\theta)}     & \ctrl{-1}& \qw & \meter{}
	\end{quantikz}
	\caption{Illustration of a circuit that simulates amplitude damping noise.}
	\label{fig:amp-damp-noise-circ}
\end{figure}
The intuition behind the circuit is. Let us consider the $\rho$ state is $a\ket{0}+b\ket{1}$ if the probability corresponding to $\ket{1}$ is higher then it is more probable that the ground state of the environment will be $\ket{1}$, which activates the \texttt{CNOT} gate and makes $\rho' = a\ket{1}+b\ket{0}$. So depending on the time the channel is applied, it is possible that the information corresponding to the qubit can be nearly completely dissipated.

\paragraph{Random X noise} In the case of the random $X$ quantum noise model the effect of the noise can be described as follows
\begin{equation}
	R_\gamma = \gamma X + (1-\gamma)\id,
\end{equation}
where with probability $\gamma$ we apply $X$-gate and with $(1-\gamma)$ no gate is applied. 

It should be noted that unlike the previous two noise models, which directly affect the state of a quantum system, random X noise is a gate-based noise model where during the construction of the ansatz for variation algorithms after each successive gate with $\gamma$ probability, an $X$-gate might be applied depending on the probability distribution.

\paragraph{Real device noise}
By utilizing the \texttt{IBMQ.get\_privder}(\textit{device}), we can access the IBM Quantum real device backends, which simulate the exact noise model for that particular IBMQ device. During noisy simulation, we use the noise models of \texttt{IBMQ\_lima} and \texttt{IBMQ\_manila}. A tabular representation of various noise amplitudes is provided in {Tab.~\ref{tab:ibmq_lima_noise_table}} and {Tab.~\ref{tab:ibmq_manila_noise_table}} for \texttt{IBMQ\_lima} and \texttt{IBMQ\_manila} respectively.

\begin{table}[H]
	\centering
	\begin{tabular}{|l|l|l|l|l|l|l|}
		\hline
		Qubit & T1 (us) & T2 (us) & 1q gate error & CNOT error \\ \hline
		0 & 87.497 & 195.117 & 3.769e-4 & 0\_1:0.00657 \\ \hline
		1 & 113.244 & 115.320 & 4.574e-4 & 1\_0:0.00657 \\ \hline
		2 & 111.164 & 134.323 & 3.521e-4 & 2\_1:0.00657 \\ \hline
		3 & 98.681 & 81.855 & 2.283e-4 & 3\_4:0.0143 \\ \hline
		4 & 23.512 & 26.554 & 7.011e-4 & 4\_3:0.0143 \\ \hline
	\end{tabular}
	\caption{Parameters of various noises in \texttt{IBMQ\_lima} device.}
	\label{tab:ibmq_lima_noise_table}
\end{table}

\begin{table}[H]
	\centering
	\begin{tabular}{|l|l|l|l|l|l|l|}
		\hline
		Qubit & T1 (us) & T2 (us) & 1q gate error &  CNOT error \\ \hline
		0 & 148.164 &	56.529 &	3.113e-4  &	0\_1:0.00813\\ \hline
		1 & 308.163 &	80.769 &	2.217e-4  &	1\_2:0.00892\\ \hline
		2 & 94.381	&   21.952 &	2.149e-4  &	2\_3:0.00716\\ \hline
		3 & 141.151	&   73.101 &	2.464e-4  & 3\_4:0.00744\\ \hline
		4 & 91.733	&   41.979 &    4.804e-4  &	4\_3:0.00744
		\\ \hline
	\end{tabular}
	\caption{Representation of various noises in \texttt{IBMQ\_manila} device.}
	\label{tab:ibmq_manila_noise_table}
\end{table}

It should be noted that in the tables the \textit{1q gate error} decomposes in the error in $\id$, $SX$ (square root of $X$ gate), and the $X$ gates because the basis gates for \texttt{IBMQ\_manila} and \texttt{IBMQ\_manila} are \texttt{CNOT}, $\id$, \texttt{RZ}, \texttt{SX}, and \texttt{X} so it decomposes any gate into the following basis gates and apply noise. 

\subsection{Error quantification} We quantify the error in the VQFE-based certification process calculating the difference in the truncated fidelity and the true fidelity defined as
\begin{equation}
	\Delta F(\rho,\sigma^\rho) = F(\rho_m,\sigma_m^\rho) - F(\rho,\sigma).
	\label{eq:certification-error-quant}
\end{equation}
Throughout this chapter, we use the above quantifier mentioned in~\ref{eq:certification-error-quant} if not stated otherwise.


\section{Results}
In this section, we demonstrate the performance of the VQFE-based device certification algorithm where we consider (1) random 1-qubit quantum channels and then we scale up the system to consider (2) two-qubit quantum channels. 

\begin{figure}[H]
	\centering
	\begin{subfigure}{.5\textwidth}
		\centering
		\includegraphics[width=\linewidth]{ 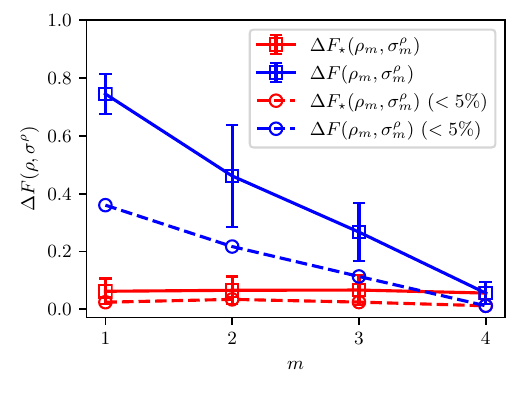}
		\caption{\texttt{CZ} as entangler.}
		\label{fig:TFB_cz}
	\end{subfigure}%
	\begin{subfigure}{.5\textwidth}
		\centering
		\includegraphics[width=\linewidth]{ 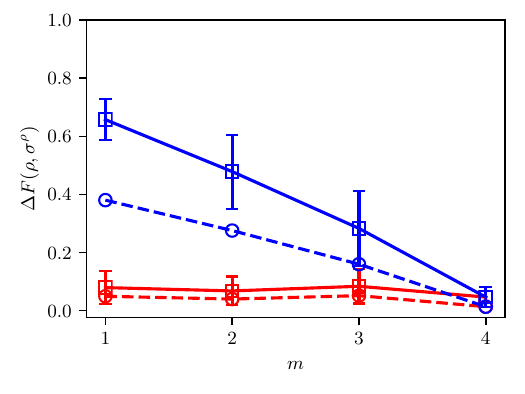}
		\caption{\texttt{CNOT} as entangler.}
		\label{fig:TFB_ CNOT}
	\end{subfigure}
	\caption{{Error in fidelity estimation with respect to $m$. The results were obtained by taking an average over $1000$ random 1-qubit quantum channels ($n = 1$) of rank~$4$, using IBM quantum computer simulator from 
			\texttt{Qiskit Aer} package}. Here $\Delta F(\rho, \sigma^\rho) = F(\rho_m, \sigma_m^\rho) - F(\rho, \sigma)$. The dashed lines depict the average over channels with less than $5\%$ error in the estimation of fidelity. The source code for the implementation can be obtained from~\cite{kundu2021qiskit}.}
	\label{fig:sim_device_certification_algo}
\end{figure}

In the case of a one-qubit channel, we do a rigorous investigation of the certification algorithm in the presence and absence of noise. While for two-qubit we investigate the noiseless scenario briefly.

\subsection{One-qubit quantum channel}

\paragraph{Noiseless scenario}
The results for the 1-qubit random quantum channels are illustrated in~\figref{fig:sim_device_certification_algo}. We sample the quantum channels from Haar distribution using the \texttt{random\_quantum\_channel} module of \texttt{qiskit.quantum\_info}.

The results are averaged over $1000$ random quantum channels. For the purpose of illustration, we explicitly showcase the case where the ansatz contains \texttt{CZ} and \texttt{CNOT} as entangler in~\figref{fig:TFB_cz} and in~\figref{fig:TFB_ CNOT} respectively. This helps us to note that for random 1-qubit quantum channels, it is better to consider \texttt{CNOT} gates in the ansatz than other entangling gates. Additionally, in both cases,  we observe that the introduced certification procedure provides a very good approximation to fidelity for low-rank quantum channels. At the same time, SSFB certification can provide a useful lower bound for fidelity between operations. However, the upper bound obtained in the SSFB case is unsuitable for providing a viable approximation of fidelity.

\begin{figure}[H]
	\centering
	\includegraphics[]{ 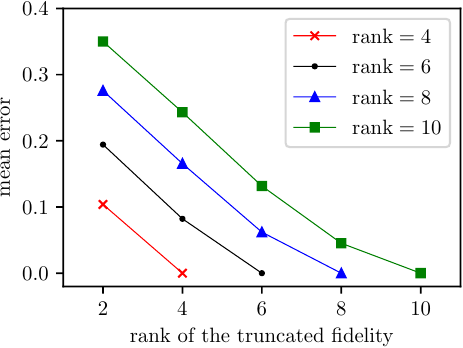}
	\caption{{Mean error for the approximation of fidelity by the truncated fidelities. Values are plotted for random quantum operations as a function of the rank of the truncated fidelity}. Each combination of color and shape corresponds to density matrices with a fixed rank. Each point was obtained by averaging the difference between the fidelity and the truncated fidelity on the sample of $10^5$ pairs of random dynamical matrices.}
	\label{fig:truncated-fid-error}
\end{figure}
In ~\figref{fig:truncated-fid-error}, where the dependency of the mean approximation error of the truncated fidelity is plotted for dynamical matrices with different ranks. As one can see, the bound obtained using the truncated fidelity can be easily tightened. Moreover, the mean for the given rank of the truncated fidelity decreases with the increasing rank of random dynamical matrices. The theoretical interpretation of this observation can be clearly seen through the~\eqref{eq:TFB-numericals} where the bounds in TFB reply on the eigenvalues of the matrix $T$. In~\eqref{eq:T-matrix-tfb} we also see that the primary building blocks of the $T$ matrix are the eigenvalues of the quantum channel that is to be diagonalized in the VQSD subroutine. Hence higher the rank of the quantum channel the more eigenvalues the VQSD subroutine can approximate which in turn gives a better approximation to the eigenvalues of the $T$ matrix, resulting in tighter truncated fidelity bound.

The exact ansatz construction that we use is a hardware efficient ansatz (HEA) of the form
\begin{equation}	
	U(\vec{\theta}) = U_\text{ent}\times\prod_{i=1}^{N}\texttt{RY}(\theta_i)^{\otimes i} \texttt{RZ}(\theta_i)^{\otimes i},\label{eq:ansatz}
\end{equation}
where $n$ is the size of the quantum channel.

\paragraph{Noisy scenario}
In this part, we illustrate through~\figref{fig:sim-noise-on-certification} the effect of depolarizing, amplitude damping, and random $X$ noise.
\begin{figure}[H]
	\centering \includegraphics[width=\textwidth]{ 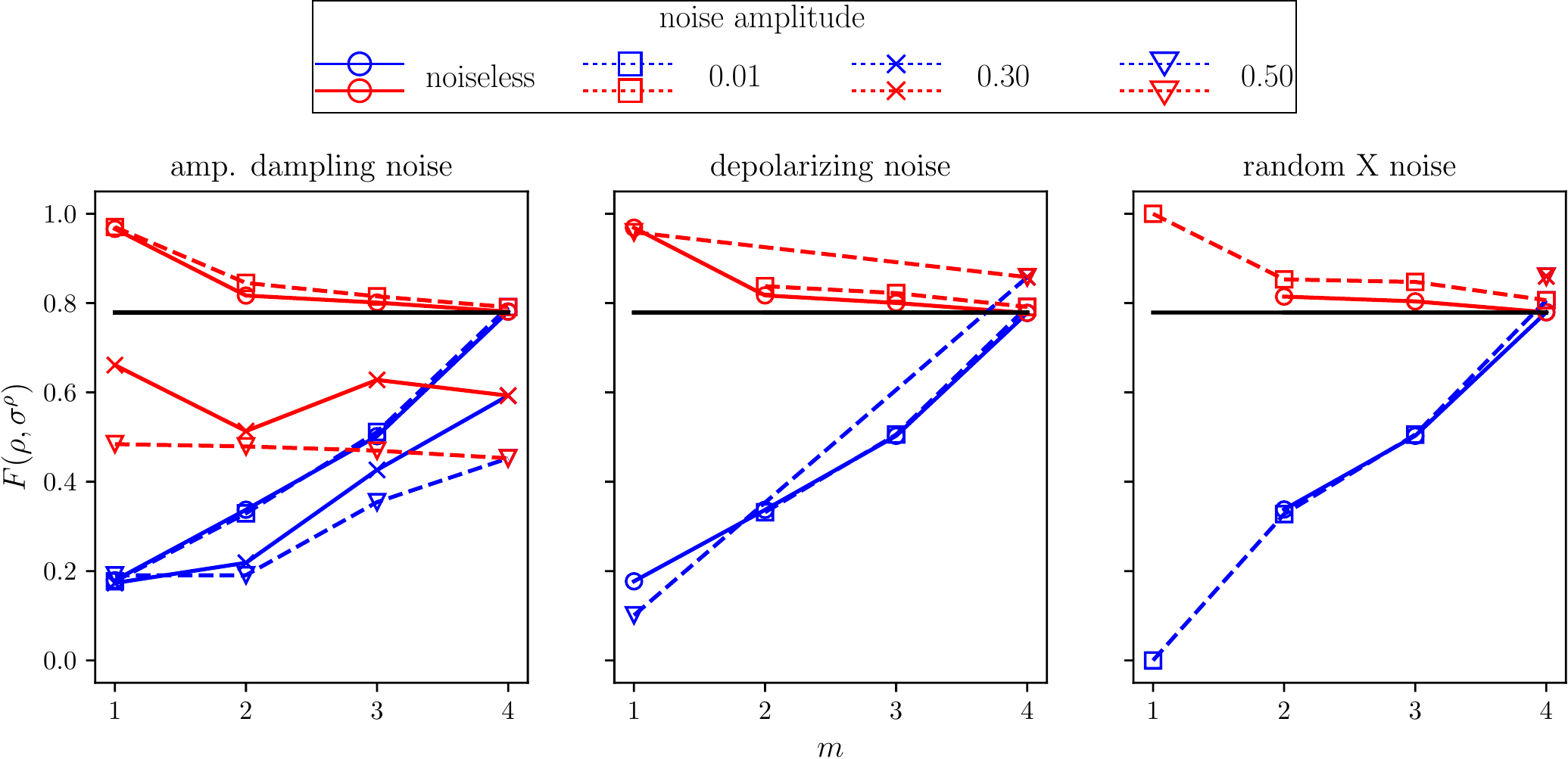}
	\caption{{Effect of noise in the estimation of truncated fidelity bound under amplitude damping, depolarizing, and random $X$ noise}. It can be seen that the effect of amplitude damping noise is more on fidelity estimation when we increase the noise above $30\%$. Additionally, for amplitude damping noise when the noise amplitude is within $10\%$, we do not observe any notable changes in the fidelity estimation, but in retrieving lower ranks, it performs poorly.}
	\label{fig:sim-noise-on-certification}
\end{figure}
A brief discussion of the three kinds of noise models is given in {Sec.~\ref{sec:noise-model-certification}}. Through our illustration, we see that the fidelity bound fails to converge to the true value as the noise amplitude increases more than $10\%$. The depolarizing and random $X$ noise elevates the fidelity at full rank compared to the true value, while due to the effect of amplitude damping noise, the estimated fidelity decreases. It is due to the fact that the amplitude damping noise reduces the value of the eigenvalues that are collected during the channel diagonalization process. As we already saw the fidelity calculation directly depends on the eigenvalues i.e. more exact the eigenvalues the better the estimation of fidelity hence a decrease in eigenvalue decreases the estimated fidelity. When the amplitude of noise increases more than $10\%$, the deviation of fidelity estimation from its true value becomes more prominent.

We can also notice that the higher the noise value the higher the error in the fidelity estimation due to failure in retrieval of eigenvalue for lower rank. This is more prominent in the case of random X noise where we only retrieve the fidelity eigenvalue corresponding to the highest rank i.e. rank 4.

In Fig.(\ref{fig:real-device-simulate}), we depict the results for fidelity estimation after running the algorithm in \texttt{Qiskit Aer} with real device backends provided by IBM. In the presence of noise, we expect a degradation of the performance of the algorithm which is evident from the figure. The fidelity achieved by using \texttt{CNOT} and \texttt{CZ} for the entangling gate is similar at the 3rd layer but the \texttt{CZ} gate helps to obtain a very close to optimal fidelity (which is less than the true fidelity) in just one layer. Whereas the ansatz with \texttt{CNOT} keeps on improving at each step and finally achieves a similar fidelity as the ansatz with \texttt{CZ} at the 3rd layer. This helps us infer that although \texttt{CNOT} gate two-qubit better learnability, hence beneficial for optimization in a noiseless scenario. But in a noisy case, it is profitable to use \texttt{CZ} to achieve higher accuracy with very few gates and depth. 

As the noise model for both devices is similar hence we observe similar characteristics in the variation of fidelity with respect to layers for the two kinds of noise. Additionally, as the amplitude-damping noise is more prominent in the real device we see a decrease in the fidelity value.
As the noise model for both devices is similar hence we observe similar characteristics in the variation of fidelity with respect to layers for the two kinds of noise. Additionally, as the amplitude-damping noise is more prominent in the real device we see a decrease in the fidelity value.
\begin{figure}[H]
	\centering
	\begin{subfigure}{.5\textwidth}
		\centering
		\includegraphics[width=\linewidth]{ 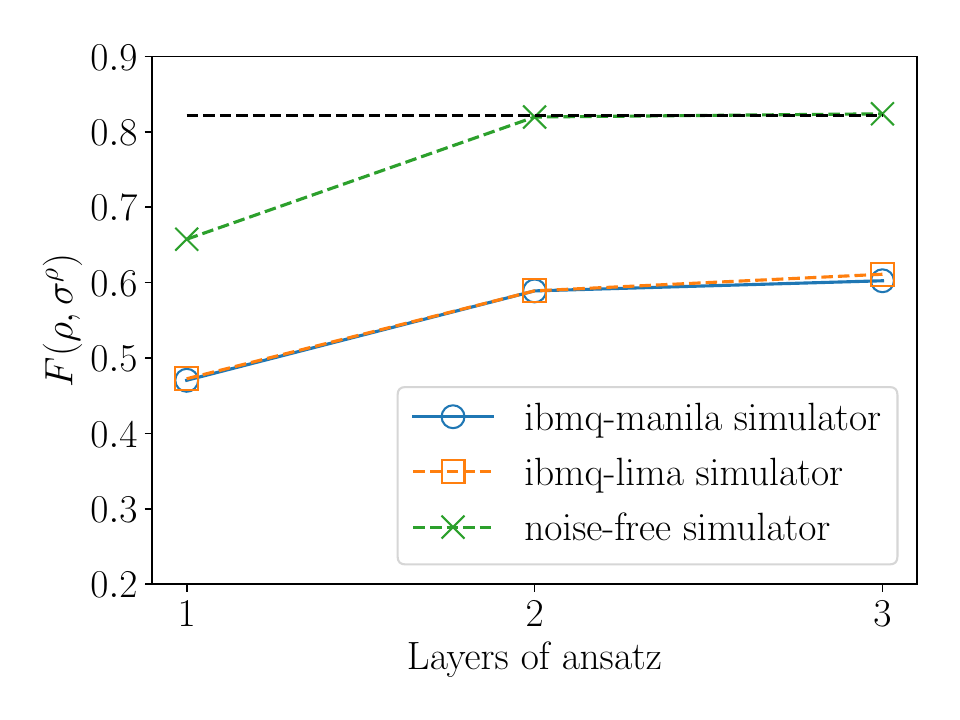}
		\caption{\texttt{CZ} as entangler.}
		\label{fig:qbit1_devicereal_layers3_ansatz_3}
	\end{subfigure}%
	\begin{subfigure}{.5\textwidth}
		\centering
		\includegraphics[width=\linewidth]{ 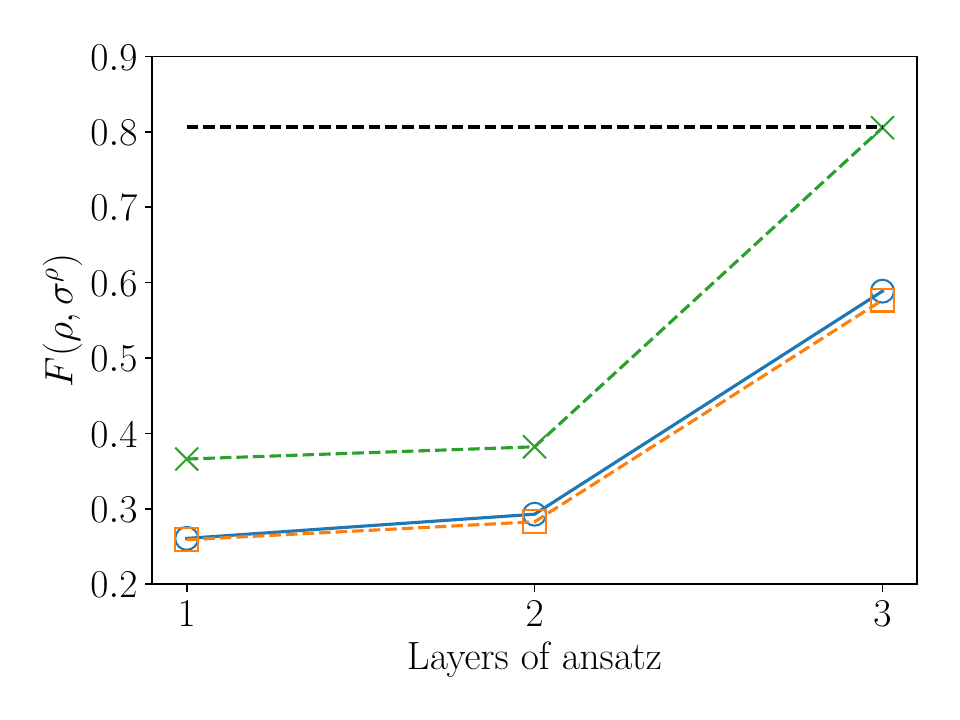}
		\caption{\texttt{CNOT} as entangler.}
		\label{fig:qbit1_devicereal_layers3_ansatz_1}
	\end{subfigure}
	\caption{{The convergence of the truncated fidelity for a single random one-qubit (rank $4$) channel in IBM's noise-free and \texttt{AerSimulator} from Qiskit package with real device backends}. The black dashed line depicts the true value of fidelity. See~\cite{kundu2021qiskit} for the implementation details.}
	\label{fig:real-device-simulate}
\end{figure}

\subsection{Two-qubit quantum channel}

In this section, we show the performance of the VQFE-based device certification with the scaling of the quantum channel. The result is illustrated in~\figref{fig:2-qubit-certification-performance} where we see that the average error in the estimation of the fidelity for one-qubit quantum channels is around 15\% whereas 100 channels are below 5\% error, which is lower compared to the one-qubit case where we saw 500 channels
goes below 5\% error.
\begin{figure}[H]
	\centering   
	\includegraphics[width=80mm]{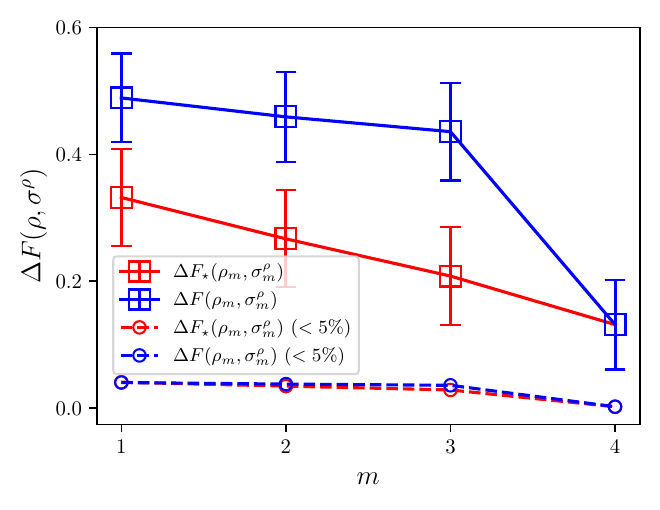}
	\caption{{The average error in fidelity estimation with respect to $m$, obtained by averaging over $1000$ random two-qubit quantum channels ($n = 2$) of rank~$4$}, using IBM quantum computer simulator from \texttt{Qiskit Aer} package.}
	\label{fig:2-qubit-certification-performance}
\end{figure}

\section{RL assisted variational channel certification}
In the previous chapter in {Sec.~\ref{sec:RL-VQSD-results}}, we elaborately discuss how we can harness a reinforcement learning agent to efficiently diagonalize a quantum state. From our elaborated investigation, it was evident that a carefully chosen encoding scheme for the quantum circuit, a dense reward function, and an $\epsilon$-greedy policy for the agent can help us achieve a diagonalizing unitary with very low depth and with a small number of gates. Additionally, with the efficient circuit, we were able to achieve an accuracy better than the existing algorithm.

Using the Jamio\l{}kowski-Choi isomorphism in~\eqref{eq:choi-jamil-iso} we can extend the applicability of RL-VQSD to RL-based quantum device certification. Finding the exact eigendecomposition of a quantum channel is an indispensable part of the VQFE-based certification process. A near-exact approximation of the eigenvalues and eigenvectors of the channel led us to the efficient certification of quantum devices. As we saw the RL can enhance the performance of the diagonalization and give us a better approximation to the eigendecomposition. This in turn enhances the VQFE-based device certification process. 

Meanwhile, till now the ansatz we were using in the VQFE-based certification process were all fixed structures LHEA that contain at least 4-8 parameters per layer and its depth grows rapidly as we increase the layers of the ansatz. This has two issues (1) A large number of parameters can cause trainability issues as we scale up the size of the quantum channel and (2) due to high depth and gate various kinds of quantum noise can impact the accuracy of the certification process. An RL-based certification will reduce the number of parameters in the quantum circuit by minimizing the number of gates and the depth of the circuit. On the other hand, the minimized ansatz proposed by an RL-agent will be more noise-resilient than the fixed structure LHEA.
\section{Conclusion}
In this chapter, as an application of the RL-VQSD method, discussed in the previous chapter, we present a novel approach to certify quantum devices based on the variational quantum fidelity estimation method. The introduced algorithm is primarily based on the combination of Jamio\l{}kowski-Choi isomorphism and variational quantum state diagonalization method. The algorithm is suitable for real-life cases when one needs to provide a convincing argument supporting their claim about a quantum device, keeping in mind the certified quantum device operates on a small number of qubits. This is exactly the case for the NISQ era.

The intrinsic benefit of the VQFE-based channel certification comes from the principle of variational paradigm~\cite{moll2018quantum}. The variational structure of quantum circuits provides us with the advantage of harnessing the strengths of a given quantum architecture. Which benefits avoiding some classes of quantum noises~\cite{mcclean2016theory}, making it more appealing for the NISQ devices.

This is more apparent during the noisy simulations of the VQFE-based certification where we show that amplitude-damping, depolarizing, and random X noise with $1\%$ intensity do not affect the performance of the algorithm. We also simulate our protocol on a real quantum device and show that the performance of the certification process decreases drastically compared to the noiseless scenario.

\section{Takeways}

This chapter introduced a protocol for quantum channel discrimination based on the variational quantum algorithm for fidelity estimation. 
\begin{itemize}
	\item \textbf{Choi-Jamio\l{}kowski isomorphism driven quantum device certification}
	In this chapter, we utilize the variational quantum state diagonalization (that is discussed in the previous chapter) with the Choi-Jamio\l{}kowski isomorphism as described in~\eqref{eq:choi-jamil-iso} to certify quantum channels. Where the approximation of the similarity measure, i.e. fidelity (see~\eqref{eq:fidelity-formula}), plays a very important role. During the construction of the certification scheme, we see that the state-of-the-art fidelity bounds in~\eqref{eq:super-fb} are replaced by the truncated fidelity bounds as given in~\eqref{eq:truncated-fidelity-bound} and in~\eqref{eq:generalized-truncated fidelity}. Hence, for two distinct quantum channels, if the truncated fidelity bounds converge to $1$, we can say that the channels are the same. The algorithm is elaborated in the {Sec.~\ref{sec:vqfe-based-vertification-salgorithm}}.
	
	\item \textbf{Better the approximation of VQSD, the better certification}
	In the~\figref{fig:truncated-fid-error}, we illustrate the dependency of the approximation error in the certification process with respect to the rank of dynamical matrices of quantum channels. As it can be seen from the illustration for 4, 6, 8, and 10 rank dynamical matrices the mean error in truncated fidelity decreases gradually and becomes tighter as we increase the rank in the dynamical matrix. This indicates that during the channel diagonalization process, if we can retrieve all the eigenvalues and eigenvectors of the channel, then the certification efficiency maximizes.
	
	\item \textbf{Little effect of noise and scaling}
	In the~\figref{fig:sim_device_certification_algo} and \textbf{\ref{fig:2-qubit-certification-performance}} we present the performance of the VQFE-based certification process for one- and two-qubit quantum channels averaged over 1000 random quantum channels. It can be seen that the error in fidelity does not get affected by a lot as we scale up the size of the quantum channel. But the number of states with less than $5\%$ error decreases from 480 for one-qubit to 99 in the case of two-qubit.
	
	On the other hand in~\figref{fig:sim-noise-on-certification} we show that depolarizing, amplitude-damping, and random X noise do not affect the performance of the certification scheme if the noise value is below $10\%$ but due to the decoherence error and connectivity constraints, the performance of the certification process underwhelms compared to a single type of noise scenario.
	
	\item \textbf{RL assisted channel certification}
	In the concluding section of the chapter, we briefly describe how a reinforcement learning agent can be utilized to enhance the performance of the VQFE-based certification process. To do this we propose to replace the fixed-structured quantum circuit given in~\eqref{eq:ansatz} by a variable ansatz, where each element (i.e. quantum gate and we refer to {Sec.~\ref{sec:rl-action}} for details) is decided by a policy and a dense reward function (see {Sec.~\ref{sec:rl-action}} for details). This promises to give us a short-depth ansatz containing comparatively very few one and two-qubit gates.
	
	Due to the compactness of the RL-agent proposed ansatz we guarantee to observe better performance of the certification scheme in the noisy scenario and also while running the scheme on the real quantum device.
\end{itemize}

\chapter{Discussions}\label{ch:discussions}

Current quantum hardware is limited in terms of the number of qubits and connectivity among them. Due to very small decoherence time, the qubits are noise-prone and their performance degrades rapidly as we evolve a quantum state through a unitary. Hence it is important to find small quantum circuits that can solve a problem with low error. To address this, in this thesis, we discuss a general framework that automates the search for efficient quantum circuits for variational quantum algorithms (VQAs) based on reinforcement learning (RL) techniques. A lot of research is dedicated to providing a definitive structure of the variational quantum circuit in VQAs. Due to the mystery behind the structure of these variations circuits, they are famously known as \textit{ansatz}. The problem of automating the search for an efficient ansatz is known as the quantum architecture (QAS) problem.

Before discussing the framework, we give a brief introduction to quantum states, gates, and quantum channels in the {Appendix~\ref{app:intro-quantum-computing}} which is followed by a brief discussion of variational quantum algorithms (VQAs) in {Chapter~\ref{sec:intro_vqa}}. The description of VQAs includes a brief introduction to two of the most crucial subroutines of VQAs (1) the construction of cost function and how its landscape varies with the number of qubits in~{Sec.~\ref{sec:cost_function_introduced}}, which is followed by a brief discussion of (2) the problem inspired and problem agnostic ways of constructing an ansatz in {Sec.~\ref{sec:ansatz_introduced}}. Afterwards, in {Sec.~\ref{sec:reinforcement_learning_introdced}} we provide an introduction to reinforcement learning. Keeping the vastness of RL we keep our discussion compact and introduce the concepts relevant to the thesis. Finally, in the {Sec.~\ref{fig:rlqas_ingredients}} we introduce the necessary ingredients for building the RL-assisted quantum architecture search framework that we use in the following chapters of the thesis. 

In the framework, the \textit{environment} is defined by the quantum-classical optimization loop of VQAs, the \textit{RL-state} is represented through the ansatz structure, the \textit{reward function} is itself a function of the cost function (that encodes the problem) and the \textit{action space} is composed of quantum gates. Utilizing this framework we consider two problems based on their relevancy in quantum physics, quantum chemistry, and condensed matter physics. These two problems are variational quantum state diagonalization (VQSD) and variational quantum eigensolver (VQE).

As the first application of the framework, in the {Chapter~\ref{ch:vqsd_using_rl}} we consider the reinforcement learning assisted variational quantum state diagonalization (VQSD) which we term as RL-VQSD~\cite{kundu2024enhancing} in a noiseless scenario. The algorithm focuses on identifying the unitary rotation under which a given quantum state becomes diagonal in the computational basis. We first benchmark the performance of linear hardware efficient ansatz (LHEA) in diagonalizing a two- and three-qubit quantum state. Later we use the RL-based diagonalization method to find the eigenvalues of two-, three-, and four-qubit quantum states. In the case of two-qubit states, we sample arbitrary quantum states from the Haar distribution. Whereas, for three-, and four-qubit cases we choose the reduced ground state of a six-, and eight-qubit Heisenberg model respectively, and diagonalize it. Finally, to demonstrate the hardness of the problem, we run a random-agent method where the RL-agent does not choose the action at a step from the cumulative reward collected from the previous steps but randomly.

In the next application, in {Chapter~\ref{ch:crlvqe}}, we use our framework to construct a curriculum reinforcement learning-based variational quantum eigensolver (CRLVQE) based on ref~\cite{rl-vqe-paper} algorithm where we find the ground state of molecules in a noiseless and noisy scenario. In the noiseless scenario, we consider a \texttt{LiH} molecule of four- and six-qubit and an eight-qubit $\texttt{H}_2\texttt{O}$. Next, we compare the performance of the CRLVQE with the previously proposed quantum architecture search (QAS) algorithms such as RL-VQE in ref.~\cite{ostaszewski2021reinforcement} and quantumDARTS in ref~\cite{wu2023quantumdarts}. The majority of research in QAS is focused on a noiseless scenario and the impact of noise on the QAS remains inadequately explored. So in this chapter, we extend the results in realistic noisy scenarios as well where the noise models are based on IBM quantum devices such as \texttt{ibmq\_mumbai} (with all-to-all connectivity) and \texttt{ibmq\_ourense} (with constrained connectivity). Under these noise models, we solve the two-, three-, and four-qubit $\texttt{H}_2$ molecule and the four-qubit \texttt{LiH} molecule.

While solving these two problems we consider the following settings:

\begin{enumerate}
	\item \textbf{The RL methods:} While dealing with VQSD the RL method we use is a \textit{vanilla curriculum} approach where the user needs to provide a predefined threshold. In the ideal case for the VQSD, the cost function should approach $0$, but in the realistic scenario, it is important to define a predefined threshold ($\zeta$) arbitrarily and completely depends on the number of qubits.
	But while dealing with ground state finding problems using VQE, the exact value of the ground state of molecules is not known; hence, the idea of fixing a predefined threshold is out of scope. That is why we utilize a \textit{feedback-driven curriculum RL} (namely CRL) in {Sec.~\ref{ch:crlvqe}}, which is independent of the prior knowledge regarding the true ground state and does not impose any specific constraints on the initial threshold value.
	
	\item \textbf{The RL-state representation:} The RL-state is defined through a tensor-based binary (TBE) encoding scheme elaborated in \textbf{Section~\ref{sec:binary-encoding-scheme}} in dealing with both the VQSD and VQE. In \textbf{Section~\ref{sec:encoding_comparison}}, we also show that for VQE problems, the TBE outperforms the previously proposed encodings for similar problems~\cite{ostaszewski2021reinforcement}.
	
	\item \textbf{The RL-action space encoding:} The action space is encoded in a one-hot-encoding manner where each action two-qubit a quantum gate defined by a list of four elements. The complete encoding scheme is elaborated in {Sec.~\ref{sec:rl-action}} with an example. The encoding of the action space remains invariant in the VQSD and VQE. Meanwhile,
	
	\item \textbf{The reward function construction:} For the VQSD algorithm the reward function is defined densely as given in \eqref{eq:log_reward}. Whereas, for VQE we utilize a sparse reward function \eqref{eq:old_reward}. In the {Sec.~\ref{sec:reward_comparison}} we show that for the VQE problem both the reward functions provide similar results in terms of ansatz depth, parameters, and accuracy but as the number of successful episodes using \eqref{eq:old_reward} is higher than \eqref{eq:log_reward} the former is more reliable in finding smaller ansatz.
	
	\item \textbf{The action space pruning:} Throughout the thesis we call the action space pruning method as \textit{illegal actions}. When adding a unitary to a qubit at a certain step $s$, if we append the same unitary to the same qubit in step $s+1$, these two operations negate, and the cumulative result is an identity matrix or an idling operation. To restrict redundant such operations and enhance the RL-agent’s exploration efficiency while dealing with VQE we utilize the \textit{illegal actions} technique, elaborated in {Sec.~\ref{sec:ill-actions}} with an example on four-qubits.
	
	\item \textbf{The random halting technique:} This technique is specifically designed to deal with the noise in quantum devices. As it is discussed in {Sec.~\ref{fig:random_halting_effect}}, whenever noise is applied to a quantum circuit, a CPTP channel is applied to the circuit, which not only reduces the performance of the circuit to achieve a task but increases the computation time by many times compared to the noiseless scenario. Hence to keep up with this, we make the total number of steps in an episode a variable by sampling it from a negative binomial distribution. This method is particularly used while finding the ground state of molecules using the VQE algorithm. In~\figref{fig:random_halting_effect} we show that the \textit{random halting} searches the optimal ansatz 3 times faster. 
\end{enumerate}

Finally, in the {Chapter~\ref{ch:vqcd_application_rl_vqsd}} we discuss a novel quantum channel certification~\cite{kundu2022variational} process based on the VQSD algorithm. The algorithm takes two devices as input – the standard device with the
operational capacity already confirmed, and the device for which its conformation
with the standard device to be confirmed. We benchmark the variational quantum channel algorithm in the noiseless scenario for one-, and two-qubit Haar random quantum channels. Next, we investigate how the algorithm is influenced under amplitude damping, depolarizing, and random \texttt{X} noise models. Later on, we benchmark the performance of the performance of the algorithm under real device noise imported from \texttt{ibmq\_manila} and \texttt{ibmq\_lima}. Finally, we discuss as a future work the possibility of constructing an RL-based quantum channel certification process.

\chapter{Conclusions}\label{ch:conclusions}
In this thesis, we introduce a simple yet effective infrastructure for a reinforcement learning-based quantum architecture search algorithm. To show the performance of the thesis, we propose an RL-VQSD~\cite{kundu2024enhancing} and CRLVQE~\cite{anonymous2023curriculum} algorithm that is built upon the framework introduced in this thesis. We solve these problems in the absence and the presence of noise and gather valuable insights. The results are summarized through the following points:

\begin{enumerate}
	\item \textbf{In noiseless scenario:} For both the RL-VQSD and the CRLVQE, the RL-agent provides us with an ansatz (that we call RL-ansatz), which contains a smaller number of parameters and depth while achieving lower error in cost function evaluation compared to state-of-the-art methods. For example, in the {Chapter~\ref{ch:vqsd_using_rl}} we compare the performance of the LHEA ansatz proposed in state-of-the-art VQSD algorithm with the RL-anatz provided by the agent in RL-VQSD. We show that the RL-ansatz is not only of smaller depth and number of gates, but it also helps us achieve a lower error in eigenvalue estimation which is represented through the \figref{fig:2_qubit_eig_conv_HEA_comparison}. Meanwhile, in the {Chater~\ref{ch:crlvqe}}, we compare the performance of our algorithm with previously proposed learning-based methods for quantum chemistry problems such as the RL-VQE~\cite{ostaszewski2021reinforcement}, quantumDARTS~\cite{wu2023quantumdarts} and the net-based QAS~\cite{du2022quantum} algorithms. Our results in {Tab.~\ref{tab:noiseless_rl_algorithm_compatison}} show that the CRLVQE algorithm outperforms the RL-VQE and the quantumDARTS and provides an ansatz with a smaller number of gates, depth, and parameters also keeping a very high accuracy in ground state energy estimation. {From the perspective of complexity theory, the gate count and the total number of parameters in the UCCSD ansatz scale as $1.45\times N^{3.45}\approx\mathcal{O}(N^{3.45})$ under the Jordan-Wigner transformation, where $N$ represents the number of qubits (i.e., spin orbitals).
	In contrast, the gate count in CRLVQE scales as $0.26\times N^3\approx\mathcal{O}(N^3)$\footnote{The curve fitting data was estimated using the \texttt{polyfit} module in \texttt{python}.}. Although the improvement over the number of gates and depth of the parameterized quantum circuits is not significant, CRLVQE demonstrates the potential to address quantum chemistry problems on NISQ devices with circuit containing number of gates $20.68$ times and depth up to $20$ times smaller than UCCSD. These promising results validate \textbf{Hypothesis~\ref{hyp:1st_hypothesis}}.}
	
	On the other hand,
	\item \textbf{In noise scenario:} The performance of the CRLVQE algorithm is investigated under real device noise. Where the noises are modeled from IBM devices such as \texttt{ibmq\_mumbai} (with all-to-all connectivity) and \texttt{ibmq\_ourense} (with constrained connectivity). Through {Tab.~\ref{tab:noisy_rl_algorithm_compatison}}, We show that under the framework of the CRLVQE in the presence of such noise and constrained connectivity, the RL-agent solves various quantum chemistry problems while utilizing a novel action space pruning (which we call \textit{illegal actions}) and variable episode length (calling it \textit{random halting}) techniques with curriculum reinforcement learning. This proves the
	\textbf{Hypothesis~\ref{hyp:2nd_hypothesis}} introduced in the introduction section.
\end{enumerate}

In the final chapter of the thesis, we discuss a variational quantum channel certification algorithm that uses the VQSD algorithm as a subroutine. The results on the \figref{fig:sim_device_certification_algo} and \figref{fig:2-qubit-certification-performance} benchmark the performance of the algorithm for one- and two-qubit quantum channels. Whereas in \figref{fig:sim-noise-on-certification} and \figref{fig:real-device-simulate} we show the performance of the algorithm under various simulated noise models. In the following, we discuss broadly first the strengths and advantages of the discussed framework which is followed by an elaboration of the limitations and accompanying future work.

\section{Strengths and Advantages}
In this section, we outline the merits of the introduced framework, specifically emphasizing the benefits of utilizing the existing infrastructure as a foundation for future research.

\paragraph{A straightforward framework of QAS for VQAs}
In this thesis, we introduce a very simple and efficiently implementable framework for quantum architecture search in VQAs. The strength of the framework lies in the fact that it can perform efficiently in the absence and presence of real device noise. The primary advantage of the framework is that it can be readily adapted to address a wide range of VQAs in deploying quantum architecture search methods.

\paragraph{Binary depth-based encoding scheme for ansatz}
The quantum circuit encoding scheme that is presented in the scope of a subroutine of RL scales quadratically with the number of qubits. In this scheme, each depth of the circuit is encoded in a block of dimension $\left((N+3)\times N\right)$, where the presence or absence of a gate is represented by either $1$ or $0$ respectively. The strength of this approach becomes evident when contrasted with the previously suggested integer gate-based encoding~\cite{ostaszewski2021reinforcement}, surpassing the performance of the prior encoding scheme. 
The advantage of the presented encoding is the fact that it can be easily modified (based on connectivity among qubits and the limitation of the connectivity in two-qubit gates) and adapted to any quantum architecture search method.

\paragraph{Action-space pruning}
The \textit{illegal actions} is an add-on technique presented in the thesis and is a very simple yet effective way to narrow the action space for the RL-ansatz. Which optimizes the search in the action space and provides a significant advantage in minimizing the time per episode while dealing with our framework. The primary strength of the technique lies in its ability to be easily toggled on or off, and seamlessly adapted to any quantum architecture search methods without difficulty.

\paragraph{Accelerating the discovery of efficient ansatz}
Noise in quantum devices increases the simulation longer. To cope with this increasing computational time, in this thesis, we introduce \textit{random halting} scheme and we show that using this method we can search an efficient ansatz three times faster, showcasing its strength.
The advantage of the technique lies in the fact it can be easily adaptable to any quantum architecture search method and real quantum hardware.

\paragraph{Curriculum reinforcement learning}
In VQAs, it is not possible to have prior knowledge of the solution. In the vanilla curriculum of RL, it is required to define prior a threshold value for the cost function. In the case of some VQAs, the threshold can be arbitrary (as can be seen in the RL-VQSD) but for quantum chemistry problems, it is necessary to set the threshold in such a way that the ansatz proposed by the agent can provide the ground state energy. In this thesis, we utilize curriculum reinforcement learning. The strength and advantage of this RL is that, without having prior knowledge of the ground energy, the algorithm iteratively shifts the threshold value after each episode and leads the agent to find the ground state energy.

\section{Limitations and Future Work}

\paragraph{Progress in RL methods}
Despite the promise of RL methods, they encounter challenges concerning sample efficiency, stability, and sensitivity regarding the learning ability of the agent. It is crucial to acknowledge these aspects in the scope of the constantly evolving RL field and mitigate such limitations in future work.

\paragraph{Computational requirements}
The computational demands of training the agent are substantial, primarily in the presence of real device noise. This imposes challenges in both evaluating quantum circuits on a quantum computer and training the algorithm on classical devices. A careful investigation is needed to find more efficient computational strategies.

\paragraph{Scalability to complex problems}
As a follow-up to the discussion regarding the computational challenges, it is crucial to address the scalability of the proposed framework to larger and highly correlated quantum chemistry problems, and in case of diagonalizing many-body Hamiltonians or diverse noise models.
The training of the RL-agent is done from scratch, which promotes the need for exploration to enhance scalability.

\paragraph{Real quantum hardware validation}
The thesis is constrained by the simulation of real device noise and its connectivity which lacks the lack of validation on real quantum hardware, primarily due to existing cost limitations. This can be seen as a motivation for future research endeavors.

\paragraph{Broadening the application scenario}
The application scenario of the proposed framework in the thesis is limited to just three possible problems that have an impact on quantum physics, chemistry, condensed matter physics, and quantum information theory.
A straightforward future work in this line can be broadening the scope of this framework to additional applicable scenarios. For example, in variational quantum linear solver~\cite{bravo2023variational} the authors propose a fixed structure layered hardware efficient ansatz which contains a large number of one- and two-qubit gates.
Our framework has the potential to automate the search for an efficient ansatz that is of shorter depth and gate. Following this motivation it is recently shown that using an Schr{\"od}inger-Heisenberg variational quantum algorithm~\cite{shang2023schrodinger} it is possible to scale find the expectation value of a complex chemistry or condense matter Hamiltonian with shallow ansatz. It is possible to automate the search for the Schr{\"od}inger and the Heisenberg circuits to further optimize the gates and depth of the overall ansatz.

\bibliographystyle{abbrvurl}
\bibliography{bibliography}
\addcontentsline{toc}{chapter}{Bibliography}

\begin{appendix}
\label{appendix}
\chapter{Basics of quantum computing}\label{app:intro-quantum-computing}

\section{Quantum states}
A quantum state encapsulates all the information about a quantum system, including its position, momentum, spin, and other relevant properties. Mathematically, quantum states are represented as vectors in a complex vector space known as a Hilbert space. For one qubit quantum state can be represented by a linear combination of the $\ket{0}$ and $\ket{1}$ as follows
\begin{equation}
	\ket{\psi} = \alpha\ket{0}+\beta\ket{1},
\end{equation}
where $\alpha$ and $\beta$ are complex numbers known as probability amplitudes and $\ket{0}$, $\ket{1}$ are orthonormal ($\bra{0}1\rangle=0$) are known as basis states. We can do a projective measurement on the basis and find the probability of finding the $\ket{0}$ to be $\bra{\psi}0\rangle=|\alpha|^2$ and finding the $\ket{1}$ to be $\bra{\psi}1\rangle=|\beta|^2$.
If the quantum state is a linear combination of $j$ basis vectors, then the quantum state looks as follows
\begin{equation}
	\ket{\chi} = \sum_jc_j\ket{\phi}_j.\label{eq:general_quantum_state}
\end{equation}
Another way to express a quantum state is through density matrices, where we take the outer product of the state \eqref{eq:general_quantum_state}, which results into
\begin{equation}
	\rho = \ket{\psi}\bra{\psi} = \sum p_j \ket{\phi}\bra{\phi}_j,\label{eq:general_density_matrix}
\end{equation}
where $p_j = |c_j|^2$. It is important to define states in this way because if we do not have the whole information, only a statistical mixture of states, then we can not get access to the probability amplitudes of corresponding basis states; hence, it is not possible to represent it with a single ket vector. As such, we see mixed states.

\paragraph{Pure state} If the $\rho$ can not be expressed as a mixture/convex combination of other states, it can be defined as a single ket vector, i.e. $\rho=\ket{\psi}\bra{\psi}$. The purity of a pure state, i.e. $\textrm{Tr}\rho^2=1$.

\paragraph{Mixed state} If the $\rho$ can be expressed as a statistical mixture/convex combination of other states, it is known as a mixed state. The purity of a mixed state, i.e. $\textrm{Tr}\rho^2<1$.

The purity mathematically helps us distinguish between a pure and a mixed state. It tells us that a pure state exhibits maximal coherence and lacks inherent uncertainty in its properties. However, under realistic consideration when, the environment can influence the quantum systems, leading to interactions that cause decoherence and introduce noise. Which results in a mixed state—a statistical ensemble of pure states.

\section{Quantum gates}
Quantum gates are fundamental elements in quantum computing that perform operations on qubits. These gates are represented by unitary operators in quantum mechanics, transforming the quantum state of one or more qubits. Each gate corresponds to a specific operation, such as changing the state of a qubit, entangling multiple qubits, or performing computations on quantum data. For the sake of the thesis, we only define three kinds of one-qubit parameterized quantum gates that rotate a qubit in X (\texttt{RX}), Y (\texttt{RY}) and Z (\texttt{RZ}) direction as follows
\begin{align}
	&\texttt{RX} = \begin{bmatrix} 
		\textrm{cos}(\theta/2) & -i\textrm{sin}(\theta/2)\\
		-i\textrm{sin}(\theta/2) &
		\textrm{cos}(\theta/2)
	\end{bmatrix}, 
	\;\;\; \texttt{RY} = \begin{bmatrix} 
		\textrm{cos}(\theta/2) & -\textrm{sin}(\theta/2)\\
		\textrm{sin}(\theta/2) &
		\textrm{cos}(\theta/2)
	\end{bmatrix},\\
	&\texttt{RZ} = \begin{bmatrix} 
		\textrm{exp}(-i\theta/2) & -0\\
		0&
		\textrm{exp}(i\theta/2)
	\end{bmatrix}.
\end{align}
Additionally, for a two-qubit gate, we use the \texttt{CNOT} gate, where one of the qubits works as a control and modifies the target qubit depending on the input in the control.
Without influencing the control qubit, the gate performs a Pauli X gate on the target qubit when the control qubit is in $\ket{1}$. 

\section{Quantum channels}
Quantum channels represent the mathematical formalism used to describe the evolution of quantum systems under the influence of various physical processes, including interactions with an external environment. A quantum channel is a completely positive and trace-preserving map that describes the evolution of a quantum state. It models how quantum information interacts with its surroundings, resulting in system state changes.

A quantum channel $\mathcal{E}$ can be expressed as a linear, completely positive trace-preserving (CPTP) map that acts on the density operators of the quantum states. Mathematically, the action of the quantum channel $\mathcal{E}$ on a quantum state $\rho$ is represented by
\[
\rho' = \mathcal{E}(\rho),
\]
where $\rho'$ is the density matrix after passing through the channel $\mathcal{E}$. To adhere to the complete positivity requirement, the channel $\mathcal{E}$ should leave all positive semidefinite operators invariant under its action, ensuring the preservation of the positivity of the density matrix. A fundamental property of quantum channels is their linearity, characterized by
\[
\mathcal{E}(\alpha \rho_1 + \beta \rho_2) = \alpha \mathcal{E}(\rho_1) + \beta \mathcal{E}(\rho_2),
\]
for any scalar $\lambda, \mu$ and input states $\rho_1$, and $\rho_2$. This property ensures that the transformation remains consistent and proportional to the input states.

Additionally, the trace-preserving property of the quantum channel dictates that the output state maintains a unit trace, capturing the conservation of probability: 
\[
\text{Tr}(\mathcal{E}(\rho_j)) = \text{Tr}(\rho_j) = 1,
\]
ensuring the preservation of the normalization of quantum states.

Quantum channels are pivotal in various quantum information processing tasks, including quantum communication, error correction, and computation. Understanding the properties and characteristics of quantum channels is crucial for designing efficient communication protocols, error-correcting codes, and quantum algorithms.

In this thesis, we utilize specific types of quantum channels, such as unitary (random \texttt{X}) channels and depolarizing channels. Each chapter where these channels are applied will include a brief explanation to aid conceptual clearance.

{\chapter{Overview of Quantum Architecture Search Methods}
\label{app:overview_of_architecture_search_methods}
As of the time of the final update of this thesis (October 1st, 2024), the following papers caught my attention for introducing innovative Quantum Architecture Search (QAS) methods. The list I present does not necessarily provide a comprehensive overview of these methods but can serve as a summary of various approaches to QAS, without any specific ordering.
\section{QAS with Unsupervised Representation Learning~\cite{sun2024quantum}}
\paragraph{Motivation} In the majority of QAS approaches, the algorithm combines the search space of all possible ansatz architectures with the search algorithm. A major drawback of this approach is that it generally requires evaluating a large number of quantum circuits during the search process, which is computationally demanding, limiting their application to large-scale problems. Predictor-based approaches can overcome this issue but require a large number of labeled quantum circuits, which is time-consuming. To address these challenges, the authors propose Unsupervised Graph Representation Learning (UGRL) for QAS, where the QAS process can be decoupled from UGRL itself.
\paragraph{Method} The authors utilize autoencoders as one of the approaches to UGRL due to their effectiveness in feature representation. Since autoencoders have shown promise in learning good features from graphs, the authors represent the ansatz structure as a Directed Acyclic Graph (DAG). As an encoder, they use Graph Isomorphism Networks to encode the DAG into a latent space. Subsequently, a decoder receives the sampled latent space as input to reconstruct the adjacency matrix. For the search strategy, reinforcement learning and Bayesian optimization are employed.
\section{Training-Free QAS~\cite{he2024training}}
\paragraph{Motivation} The motivation for the training-free approach stems from the fact that existing QAS methods require circuit training to assess an ansatz's performance for a specific problem, and this training process is time-consuming. To overcome the need for the training subroutine, the authors propose a training-free QAS method relying on two proxies.
\paragraph{Method} To bypass the expensive training subroutine, the authors propose two training-free proxies to rank ansatzes. Similar to the previous approach, the ansatz structures are represented as DAGs. The first proxy, called the zero-cost proxy, filters out unpromising ansatzes based on the number of paths in the DAG. The second proxy, based on the expressibility of the ansatz, assesses the performance of the ansatzes. One advantage of these proxies is that they do not involve the cost functions of Variational Quantum Algorithms (VQAs), making them easily transferable across different VQAs.
\section{A Meta-Trained Generator for QAS~\cite{he2024meta}}
\paragraph{Motivation} The primary motivation behind this research is to address the challenges associated with determining optimal circuit structures and gate configurations in QAS. Traditional methods are often computationally intensive, requiring extensive evaluations of numerous ansatz structures to identify high-performance designs. The authors propose a meta-learning framework that leverages prior knowledge from previous tasks, reducing the time and resources needed for QAS.
\paragraph{Method} The authors propose a two-step methodology involving meta-training and meta-testing. In the first step, a Variational Autoencoder (VAE) is trained on a diverse set of tasks. This training allows the VAE to learn optimal circuit structures through amortized inference. The generator can then produce high-performance quantum circuits tailored for new tasks based on the learned meta-knowledge. In the next step, a meta-predictor is developed to filter out suboptimal circuit designs from the generated candidates, ensuring that only the most promising structures are considered for further evaluation.
\section{Self-Attention Enhanced Differentiable QAS~\cite{sun2024sa}}
\paragraph{Motivation} Traditional approaches to differentiable QAS often struggle with inefficiencies due to the discrete nature of ansatz sampling and evaluation. In this paper, the authors utilize self-attention to optimize the search process by enabling a better understanding of the relationships between different components of the ansatz structure.
\paragraph{Method} Self-attention-based differentiable QAS aims to improve gradient-based optimization. Initially, a sampling process begins with architecture weights typically set to zero at the start of training. These weights are updated iteratively during training to refine the circuit design. The self-attention mechanism helps capture relationships among various components of the ansatz. Then, a stable softmax strategy is used to evaluate the probabilities for selecting operations based on their architecture weights. Finally, the architecture parameters are updated using gradient descent techniques.
\section{ZX-Calculus for QAS~\cite{ewen2024application}}
\paragraph{Motivation} The primary motivation behind this research is to address the difficulty of finding the optimal ansatz as the complexity of quantum circuits specific to a problem increases. To tackle this, the authors leverage the ZX-calculus and genetic programming.
\paragraph{Method} The ansatzes are first transformed into ZX-diagrams, which consist of Z and X spiders connected by edges. This representation allows for a more intuitive understanding of quantum computations compared to traditional circuit diagrams. A deterministic procedure for extracting ansatzes from the simplified diagrams is introduced. This process ensures that the resulting diagrams maintain the same functionality as their original counterparts while potentially reducing their size and complexity. Finally, by applying transformation rules from the ZX-calculus, the authors produce shallower circuits with more uniformly allocated gates, enhancing efficiency in QAS.
\section{Surrogate Modeling for QAS~\cite{soloviev2024trainability}}
\paragraph{Motivation} The primary motivation of this research is to address the issues of vanishing gradients and barren plateaus associated with QAS. These issues hinder the effective training of ansatzes. The authors aim to improve the efficiency of QAS by reducing the number of measurements required during the evaluation process and avoiding training in regions where barren plateaus are present.
\paragraph{Method} The paper first develops an online surrogate model to predict the performance of various ansatz structures, allowing for the elimination of poorly performing designs without extensive measurement. This approach reduces the number of evaluations needed, making the search process more efficient. Additionally, a new metric called "information content" is used, which requires only a small set of parameter measurements to estimate the gradient. This helps the algorithm avoid training ansatzes in regions where the circuit's gradient is likely to vanish.
\section{Quantum Information Enhanced QAS~\cite{sadhu2024quantum}}
\paragraph{Motivation} The primary motivation behind this research is to leverage tools from quantum information theory to enhance the performance of reinforcement learning-based QAS. The authors utilize tools such as quantum entanglement and conditional entropy to improve the performance of the RL agent by a factor of two in QAS methods.
\paragraph{Method} In the first step, the authors gain quantum information-theoretic insights into the properties of the ansatzes by evaluating the concurrence and conditional entropy of different ansatz structures. By leveraging these insights, the RL agent can prioritize certain architectural features that are likely to provide better performance. For example, it may focus on circuits that maximize entanglement or optimize fidelity. The reward function is crucial in RL, as it drives the learning process. The authors propose a reward function rooted in quantum information metrics, such as the entanglement measure, to enhance the training process of the agent.
\section{Kolmogorov-Arnold Network for QAS~\cite{kundu2024kanqas}}
\paragraph{Motivation} The main motivation behind this research is to enhance the efficiency and effectiveness of QAS by employing Kolmogorov-Arnold networks as predictors instead of traditional multilayer perceptrons. Multilayer perceptrons, such as feed-forward neural networks, face challenges in interpretability and performance due to their complexity and high parameter counts. The authors aim to demonstrate that Kolmogorov-Arnold networks can outperform multilayer perceptrons by achieving higher probabilities of success in quantum state preparation in both noiseless and noisy environments. Moreover, this approach helps design more efficient quantum circuits for applications in quantum chemistry.
\paragraph{Method} Initially, the authors integrate Kolmogorov-Arnold networks into curriculum reinforcement learning to automate the search for optimal ansatzes. This helps reduce the number of 2-qubit gates compared to multilayer perceptron-based QAS algorithms. In the next step, the authors incorporate two different action space descriptions, observing that in both cases, the Kolmogorov-Arnold network is able to learn the optimal structure of ansatzes for a task by learning the activation functions.}
\chapter{Proof of Truncated Fidelity Bound~\ref{eq:truncated-fidelity-bound}}
\label{app:truncated-bound}

\begin{prop}
		The truncated fidelity bounds are as follows:
		\begin{equation}
			F(\rho_m, \sigma^\rho_m)\leq F(\rho, \sigma)\leq F_*(\rho_m, \sigma_m^\rho).
			\label{eq:trun-bound-appndx}
		\end{equation}
\end{prop}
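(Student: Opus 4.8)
The plan is to work throughout with the root-fidelity convention of Eq.~(\ref{eq:fidelity-formula}), $F(\rho,\sigma)=\lVert\sqrt{\rho}\sqrt{\sigma}\rVert_1$ (trace norm), and to exploit that the truncation projector $\Pi\equiv\Pi_m^\rho$ onto the span of the $m$ largest eigenvectors of $\rho$ \emph{commutes} with $\rho$. Consequently $\sqrt{\rho_m}=\Pi\sqrt{\rho}=\sqrt{\rho}\,\Pi$ and $\rho=\rho_m+\bar\rho$ with $\bar\rho=\bar\Pi\rho\bar\Pi$, $\bar\Pi=\id-\Pi$. Both inequalities in~(\ref{eq:trun-bound-appndx}) will then be reduced to two elementary tools: the Hölder-type bound for Schatten norms, $\lVert AXB\rVert_1\le\lVert A\rVert_\infty\,\lVert X\rVert_1\,\lVert B\rVert_\infty$, and the additivity of the root fidelity over an orthogonal direct sum, $F(\tau_1\oplus\tau_2,\,\omega_1\oplus\omega_2)=F(\tau_1,\omega_1)+F(\tau_2,\omega_2)$, which holds because $\sqrt{\tau_1\oplus\tau_2}\sqrt{\omega_1\oplus\omega_2}$ is block diagonal and the trace norm splits over blocks.

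For the \textbf{upper bound} I would introduce the pinching channel $\mathcal{P}(X)=\Pi X\Pi+\bar\Pi X\bar\Pi$. Since $\mathcal{P}$ is CPTP, monotonicity of fidelity under quantum channels gives $F(\rho,\sigma)\le F(\mathcal{P}(\rho),\mathcal{P}(\sigma))$. Because $\rho$ commutes with $\Pi$ we have $\mathcal{P}(\rho)=\rho=\rho_m+\bar\rho$, whereas $\mathcal{P}(\sigma)=\sigma_m^\rho+\bar\sigma$ with $\bar\sigma=\bar\Pi\sigma\bar\Pi$; both arguments are now block diagonal, so additivity yields
\begin{equation}
F(\rho,\sigma)\le F(\rho_m,\sigma_m^\rho)+F(\bar\rho,\bar\sigma).
\end{equation}
I then bound the tail by Cauchy--Schwarz for Hilbert--Schmidt norms, $F(\bar\rho,\bar\sigma)=\lVert\sqrt{\bar\rho}\sqrt{\bar\sigma}\rVert_1\le\lVert\sqrt{\bar\rho}\rVert_2\lVert\sqrt{\bar\sigma}\rVert_2=\sqrt{\Tr\bar\rho}\sqrt{\Tr\bar\sigma}$, and use $\Tr\bar\rho=1-\Tr\rho_m$ and $\Tr\bar\sigma=1-\Tr\sigma_m^\rho$, which reproduces exactly the expression for $F_*(\rho_m,\sigma_m^\rho)$.

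The \textbf{lower bound} is the more delicate half, and its main obstacle is that the projector cannot be pulled through the square root, i.e. $\sqrt{\sigma_m^\rho}=\sqrt{\Pi\sigma\Pi}\neq\Pi\sqrt{\sigma}\,\Pi$. I would circumvent this with a polar decomposition: setting $A=\Pi\sqrt{\sigma}$ one has $AA^\dagger=\Pi\sigma\Pi$, so the left polar decomposition gives $A=\sqrt{\Pi\sigma\Pi}\,W$ with $W$ a partial isometry, and hence $\sqrt{\Pi\sigma\Pi}=\Pi\sqrt{\sigma}\,W^\dagger$ with $\lVert W^\dagger\rVert_\infty\le 1$. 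Using $\sqrt{\rho_m}=\Pi\sqrt{\rho}$, the commutation $\sqrt{\rho}\,\Pi=\Pi\sqrt{\rho}$, and that $\sqrt{\Pi\sigma\Pi}$ is supported on the range of $\Pi$, a short computation gives
\begin{equation}
\sqrt{\rho_m}\sqrt{\sigma_m^\rho}=\sqrt{\rho}\,\sqrt{\Pi\sigma\Pi}=\Pi\,\sqrt{\rho}\sqrt{\sigma}\,W^\dagger,
\end{equation}
and then the Hölder bound applied with the two contractions $\Pi$ and $W^\dagger$ yields
\begin{equation}
F(\rho_m,\sigma_m^\rho)=\lVert\Pi\,\sqrt{\rho}\sqrt{\sigma}\,W^\dagger\rVert_1\le\lVert\sqrt{\rho}\sqrt{\sigma}\rVert_1=F(\rho,\sigma).
\end{equation}
The only steps requiring genuine care are selecting the correct (left) polar decomposition and checking that $WW^\dagger$ acts as the identity on the range of $\sqrt{\Pi\sigma\Pi}$, which is precisely what licenses solving for $\sqrt{\Pi\sigma\Pi}$; everything else is bookkeeping. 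Combining the two bounds establishes~(\ref{eq:trun-bound-appndx}).
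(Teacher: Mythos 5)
Your proof is correct, and it takes a genuinely different route from the paper's. For the upper bound, the paper imports as a black box the monotonicity of the generalized fidelity $F_*$ under completely positive trace non-increasing maps (Tomamichel et al.), applied to the truncation map $M_m(X)=\Pi X\Pi$; you instead derive the bound from scratch via the pinching channel $\mathcal{P}$, data processing for the \emph{ordinary} fidelity, block additivity of the root fidelity over the $\Pi/\bar\Pi$ decomposition, and Cauchy--Schwarz on the tail blocks — your chain $F(\rho,\sigma)\leq F(\rho_m,\sigma_m^\rho)+F(\bar\rho,\bar\sigma)\leq F_*(\rho_m,\sigma_m^\rho)$ is in effect a self-contained proof of the instance of monotonicity the paper cites. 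For the lower bound, the paper uses strong concavity of fidelity applied to $\rho=M_m(\rho)+\bar M_m(\rho)$, drops the complement term, rescales via the homogeneity $F(\lambda\rho,\sigma)=\sqrt{\lambda}\,F(\rho,\sigma)$, and finishes with the support identity $F(\rho_m,\sigma)=F(\rho_m,\sigma_m^\rho)$; your polar-decomposition argument replaces all of that with the single operator identity $\sqrt{\rho_m}\sqrt{\sigma_m^\rho}=\Pi\sqrt{\rho}\sqrt{\sigma}\,W^\dagger$ plus H\"older contractivity of the trace norm, and your bookkeeping — including the check that $WW^\dagger$ projects onto the support of $\sqrt{\Pi\sigma\Pi}$, which is what licenses solving $\Pi\sqrt{\sigma}=\sqrt{\Pi\sigma\Pi}\,W$ for $\sqrt{\Pi\sigma\Pi}$ — is sound. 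What each buys: the paper's proof is shorter on the page but rests on two nontrivial imported results (monotonicity of $F_*$ and strong concavity), while yours is elementary given data processing for $F$, and the pinching step is more informative, exhibiting the slack in the upper bound as exactly the tail term $F(\bar\rho,\bar\sigma)$ before it is bounded. One streamlining remark: you can avoid introducing $W$ altogether by writing $F(\rho_m,\sigma_m^\rho)=\Tr\sqrt{\sqrt{\rho_m}\,\sigma_m^\rho\,\sqrt{\rho_m}}=\Tr\sqrt{\sqrt{\rho_m}\,\sigma\,\sqrt{\rho_m}}=\lVert\Pi\sqrt{\rho}\sqrt{\sigma}\rVert_1\leq\lVert\sqrt{\rho}\sqrt{\sigma}\rVert_1$, since $\sqrt{\rho_m}\,\Pi=\sqrt{\rho_m}$ lets you swap $\sigma_m^\rho$ for $\sigma$ inside the square root.
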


\begin{proof}
	The $F_*$ two-qubit the generalized fidelity and the RHS of the~\eqref{eq:trun-bound-appndx} follows from the monotonous nature of $F_*$ under the completely positive trace non-increasing maps~\cite{tomamichel2010duality,tomamichel2015quantum}. Such a map can be defined as
	\begin{equation}
		M_m(\rho) = \Pi_m^\rho\rho\Pi_m^\rho,
	\end{equation}
	where $\Pi_m^\rho$ is the projector on the subspace spanned by the eigenvectors for the $m$-largest eigenvalues. This helps us write
	\begin{equation}
		F(\rho,\sigma) = F_*(\rho,\sigma)\leq F_*(M_m(\rho),M_m(\sigma)) = F_*(\rho_m, \sigma_m^\rho) .
	\end{equation}
	Meanwhile, the lower bound i.e. $F(\rho,\sigma)\geq F(\rho_m, \sigma^\rho_m)$ can be derived using the strong concavity of fidelity~\cite{nielsen2010quantum} as follows
	\begin{align}
		F(\rho,\sigma) = F(M_m(\rho) + \Bar{M}_m(\rho),\sigma)&\geq
		\sqrt{p_m}F\left( \frac{M_m(\rho)}{p_m} ,\rho\right) + \sqrt{1-p_m}F\left( \frac{\bar{M}_m(\rho)}{1-p_m},\sigma \right)\nonumber\\
		& \geq \sqrt{p_m}F\left( \frac{\rho_m}{p_m} ,\rho\right) = F(\rho_m,\sigma) = F(\rho_m, \sigma_m^\rho),
	\end{align}
	where $p_m=\text{Tr}\rho_m$ and we have expressed
	\begin{equation}
		\rho = M_m(\rho) + \bar{M}_m(\rho)
	\end{equation}
	where $\bar{M}_m(\rho) = \bar{\Pi}_m^\rho\rho\bar{\Pi}_m^\rho$ and $\bar{\Pi}_m^\rho$ is the orthogonal compliment of $\Pi_m^\rho$.
\end{proof}

\chapter{Illegal actions elaboration}\label{apndix:illegal_action_elaboration}
Here we delve into a comprehensive discussion of the concept of illegal actions. For every $N$-qubit system, there can be a maximum of $N$ actions applied to individual qubit(s). We start with an empty list, representing an initially empty circuit, which can accommodate up to four sub-lists, each corresponding to a specific illegal action.
\begin{equation}
	A_\text{illegal} = [ \underbrace{[]}_{\text{N times}} ].
\end{equation}
If the agent selects an action represented as $[i,j,N,N]$, signifying a $\texttt{CNOT}$ operation with control on qubit $i$ and target on qubit $(i+j)~\pmod{N}$, the illegal action list is adjusted as follows:
\begin{equation}
	A_\text{illegal} = [ [i,j,N,N], \underbrace{[]}_{\text{$(N-1)$ times}} ].
\end{equation}

For the selection of the next action, we employ a selection rule that encourages the agent to opt for the action with the highest estimated action value. To prevent the agent from repeating the action $[i,j, N, N]$ in the next action, we assign a corresponding estimate of $-\infty$ to the illegal action. Therefore, during the greedy action selection process, the agent automatically excludes the illegal action from the available list of actions, pruning the action space.

In the next time step if the agent decides on an action $[N, N,l,m]$ which corresponds to a rotation towards $m$ direction and the rotation to be applied on $l$-th qubit then if $i\neq l$, and $(i+j)$~$\pmod{N}\neq l$, then the illegal actions updated as:
\begin{equation}
	A_\text{illegal} = [ [i,j,N,N], [N,N,l,m],\underbrace{[]}_{\text{$(N-2)$ times}} ],
\end{equation}
else if $i= l$ or $(i+j)$~$\pmod{N}= l$ then the update of illegal actions follows
\begin{equation}
	A_\text{illegal} = [[N,N,l,m],\underbrace{[]}_{\text{$(N-1)$ times}} ].
\end{equation}
In either case, we set the estimate corresponding to the illegal actions to $-\infty$, and the agent does not choose the action or the list of actions.

\chapter{Implementation of components of RLQAS}\label{app:codes}

\section{RL-state implementation}\label{app:rlstate_code}
Here, we give the code for the tensor-based encoding for the ansatz that is used as the observable for the RL-agent. The state is represented as a \texttt{torch tensor} of dimension $T\times((N+3)\times N)$. Where in the code block \texttt{qubits} $=N$, \texttt{num\_step} $=T$. To keep track of the gates and the operations, we use the notion of \texttt{moments} in a quantum circuit. The moment of a quantum circuit is defined by a collection of quantum gates that all act during the same abstract time slice~\cite{cirqmoment}. The RL-state gets filled up by the rule presented through \figref{fig:encoding} for each action. Meanwhile, the action space encoding is straightforward and is presented in \figref{fig:encoding-action-elaborate}. In~\ref{algo:rl_state_code}, we provide a python code example for RL-state (presented by \texttt{state}) encoding where the action space is denoted by the \texttt{action\_list} and for each action, the \texttt{state} gets updated. In order to keep track of the gates and the qubits, we make use of the notion of \texttt{moments}~\cite{cirqmoment}.
\begin{algorithm}[H]
	\caption{RL-state construction and update}
	\begin{algorithmic}[1]
		\Require $qubits$ (the number of qubits in the problem)
		\Require $num\_step$ (number of steps in an episode)
		\Require $action\_list$ (list of possible actions)
		\Ensure Updated RL-state ($state$)
		\Ensure Updated ansatz moments ($moments$)
		
		\State $state \gets \text{torch.zeros}((num\_step, qubits + 3 + 3, qubits))$
		\State $moments \gets [0] * qubits$
		
		\For {each $action$ in $action\_list$}
		\State Extract $ctrl$, $targ$, $rot\_qubit$, and $rot\_axis$ from $action$
		
		\If {$rot\_qubit < qubits$}
		\State $gate\_tensor \gets moments[rot\_qubit]$
		\ElsIf {$ctrl < qubits$}
		\State $gate\_tensor \gets \text{np.max}(moments[ctrl], moments[targ])$
		\EndIf
		
		\If {$ctrl < qubits$}
		\State $state[gate\_tensor][targ][ctrl] \gets 1$
		\ElsIf {$rot\_qubit < qubits$}
		\State $state[gate\_tensor][qubits + rot\_axis - 1][rot\_qubit] \gets 1$
		\EndIf
		
		\If {$rot\_qubit < qubits$}
		\State $moments[rot\_qubit] \gets moments[rot\_qubit] + 1$
		\ElsIf {$ctrl < qubits$}
		\State $max\_of\_two\_moments \gets \text{np.max}(moments[ctrl], moments[targ])$
		\State $moments[ctrl] \gets max\_of\_two\_moments + 1$
		\State $moments[targ] \gets max\_of\_two\_moments + 1$
		\EndIf
		\EndFor
		\label{algo:rl_state_code}
	\end{algorithmic}
\end{algorithm}

\section{Illegal actions implementation}\label{app:ill_action_code}
The following code block provides an elaborated implementation of the \textit{illegal actions} technique. As a quantum physicist, I must concede that the code is entangled to a degree beyond understanding, and I am confident that someone can render it significantly and make it more decoherent.

\begin{lstlisting}[language=Python, caption= A python code example for illegal action (presented by \texttt{illegal\_actions}) technique where the action space is denoted by the \texttt{action\_list} and for each action the \texttt{illegal\_actions} list gets updated. As a final product we get \texttt{illac\_decode}.]
	qubits = #the number of qubits in the problem
	current_action = [qubits]*4
	illegal_actions = [[]]*qubits
	action_list = #list of actions
	
	for action in action_list:
	action=current_action
	ctrl,targ = action[0],(action[0]+action[1])%qubits
	rot_qubit,rot_axis = action[2],action[3]
	
	if ctrl < qubits:
	are_you_empty=sum([sum(l) for l in illegal_actions])
	if are_you_empty!=0:
	for ill_ac_no,ill_ac in enumerate(illegal_actions):
	if len(ill_ac) != 0:
	ill_ac_targ=(ill_ac[0]+ill_ac[1])%qubits
	if ill_ac[2]==qubits:
	
	if ctrl==ill_ac[0] or ctrl==ill_ac_targ:
	illegal_actions[ill_ac_no]=[]
	for i in range(1, qubits):
	if len(illegal_actions[i])==0:
	illegal_actions[i]=action
	break
	elif targ==ill_ac[0] or targ==ill_ac_targ:
	illegal_actions[ill_ac_no]=[]
	for i in range(1, qubits):
	if len(illegal_actions[i])==0:
	illegal_actions[i]=action
	break
	else:
	for i in range(1, qubits):
	if len(illegal_actions[i])==0:
	illegal_actions[i]=action
	break
	
	else:
	if ctrl==ill_ac[2]:
	illegal_actions[ill_ac_no]=[]
	for i in range(1, qubits):
	if len(illegal_actions[i])==0:
	illegal_actions[i]=action
	break
	elif targ==ill_ac[2]:
	illegal_actions[ill_ac_no]=[]
	for i in range(1, qubits):
	if len(illegal_actions[i])==0:
	illegal_actions[i]=action
	break
	else:
	for i in range(1, qubits):
	if len(illegal_actions[i])==0:
	illegal_actions[i]=action
	break                          
	else:
	illegal_actions[0]=action
	
	if rot_qubit<qubits:
	are_you_empty=sum([sum(l) for l in illegal_actions])
	if are_you_empty!=0:
	for iac_no, iac in enumerate(illegal_actions):
	
	if len(iac)!=0:
	ill_ac_targ=(iac[0]+iac[1])%qubits
	if iac[0]==qubits:
	
	if rot_qubit==iac[2] and rot_axis!=iac[3]:
	illegal_actions[iac_no]=[]
	for i in range(1, qubits):
	if len(illegal_actions[i])==0:
	illegal_actions[i]=action
	break
	elif rot_qubit!=iac[2]:
	for i in range(1, qubits):
	if len(illegal_actions[i])==0:
	illegal_actions[i]=action
	break
	else:
	if rot_qubit==iac[0]:
	illegal_actions[iac_no]=[]
	for i in range(1, qubits):
	if len(illegal_actions[i])==0:
	illegal_actions[i]=action
	break       
	elif rot_qubit==iac_targ:
	illegal_actions[iac_no]=[]
	for i in range(1, qubits):
	if len(illegal_actions[i])==0:
	illegal_actions[i]=action
	break
	else:
	for i in range(1, qubits):
	if len(illegal_actions[i])==0:
	illegal_actions[i]=action
	break 
	else:
	illegal_actions[0]=action
	
	for indx in range(qubits):
	for jndx in range(indx+1, qubits):
	if illegal_actions[indx]==illegal_actions[jndx]:
	if jndx!=indx +1:
	illegal_actions[indx]=[]
	else:
	illegal_actions[jndx]=[]
	break
	
	for indx in range(qubits-1):
	if len(illegal_actions[indx])==0:
	illegal_actions[indx]=illegal_actions[indx+1]
	illegal_actions[indx+1]=[]
	
	illac_decode=[]
	for key, contain in dictionary_of_actions(qubits).items():
	for ill_action in illegal_actions:
	if ill_action==contain:
	illac_decode.append(key)
\end{lstlisting}

\section{3-stage Adam-SPSA pseudocode and hyperparameters setting}\label{algo:adamspsa}
In this section we provide the pseudocode for multistage Adam-SPSA optimization algorithm and the parameters that are fixed while running the algorithm.
\begin{figure}[H]
	\includegraphics{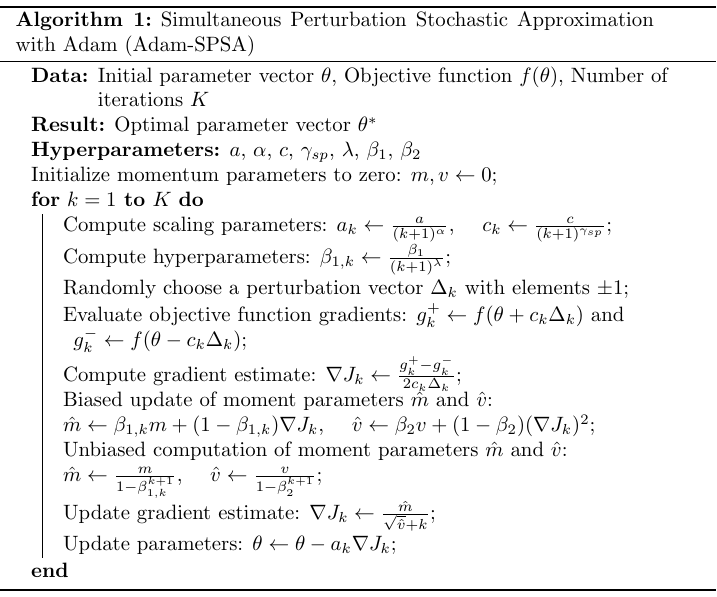}
\end{figure}

\begin{table}[H]
	\caption{\small The hyperparameters of Adam-SPSA optimizer used during the noisy simulations.
		In the noisy simulation of two-, and three-qubit problems we used $1$-stage version of the algorithm, therefore only single maximum function evaluation hyperparameters are given.
		The parameters within the curly brackets denote the maximum number of function evaluations in the $3$-stage version of the algorithm. We provide Max fevals both for $1$-stage and their $3$-stage equivalents. }
	\label{tab:spsa_hyperparams}
	\centering
	\small
	\resizebox{1\textwidth}{!}{%
		\begin{tabular}{@{}cccccccccc@{}}
			\hline
			\makecell{Molecule} & $a$ & $\alpha$ & $\beta_1$ & $\beta_2$ & $c$ & $\gamma_{sp}$ & $\lambda$ & Max fevals & Shots \\
			\hline
			\texttt{H}$_2$-2   &1.2104 &0.9531 &0.9414 & 0.9983& 0.1039 & 0.0984& 0.9277& 500& $10^{3}$   \\
			\texttt{H}$_2$-3   &0.5188& 0.9859&0.716&0.6265&0.0938  &0.0974&0.6483  &500 & $10^{4}$      \\
			\texttt{LiH}-4   & $1.2324$ &  $0.9709$  &    $0.6114$     &   $0.9326$        & $0.2215$ &  $0.1485$ &  $0.9772$    & \makecell{$1600$\\ \{$1191$, $357$, $119$\}}  \makecell{$3300$\\ \{$2383$, $715$, $238$\}} & $10^{6}$ \\
			\texttt{LiH}-6   &1.7564  &0.8365   &0.6841         &0.9048          &0.1068  &0.1549   &0.1223 &  \makecell{$2000$\\ \{$1430$, $429$, $143$\}}  & $10^{8}$ \\
			\hline
	\end{tabular}}
\end{table}

\chapter{Details on the training time}\label{apndix:time-resource}
In this section we elaborate on the time it takes per episode in RL-VQSD (see chapter~\ref{ch:vqsd_using_rl}) and CRLVQE (see chapter~\ref{ch:crlvqe}) algorithms. In the case of CRLVQE we note time in the abscence of the Random Halting (wo-RH) and in the presence of the Random Halting (RH) mechanism.
\begin{table}[h!]
	\centering
	\begin{tabular}{c|cc}
		\hline
		Problem & qubits & Avg. time per episode (in seconds) \\
		\hline
		\multirow{2}{6em}{RL-VQSD} & two & 9.21 \\
		& three & 37.74\\
		\cline{2-3}
		\multirow{4}{6em}{CRLVQE} & four (\texttt{H}$_2$, wo-RH) & 0.75   \\
		& four (\texttt{LiH}, wo-RH) & 1.56\\
		& four (\texttt{LiH}, RH) & 1.45\\
		& six (\texttt{LiH}, wo-RH) & 3.90  \\
		& six (\texttt{LiH}, RH) & 2.76  \\
		& six ($\texttt{H}_2\texttt{O}$, wo-RH) & 45.02  \\
		\hline
	\end{tabular}
	\caption{The record of the training time it takes for the RL-agent to complete the same number of episodes in diagonalizing a two, three-qubit quantum state and finding the ground state of four, six and eight qubit chemical Hamiltonians.}
	\label{tab:training-time-record}
\end{table}
The times are recorded in a CPU and GPU device with specifications described in~\ref{tab:gpu-cpu-details} for RL-VQSD and CRLVQE algorithms.
\begin{table}[h!]
	\centering
	\begin{tabular}{c|c|c}
		\hline
		Algorithm & Device & Specifications \\
		\hline
		\multirow{2}{6em}{RL-VQSD} & CPU & \texttt{Intel(R) Core(TM) i7-10700KF CPU @ 3.80GHz}  \\
		& GPU & \texttt{NVIDIA GA102 64 bits}   \\
		\cline{2-3}
		\multirow{2}{6em}{CRLVQE} &CPU & \texttt{Intel(R) Core(TM) i7-10700KF CPU @ 3.80GHz}  \\
		&\multirow{2}{2em}{GPU} & \texttt{NVIDIA A100 64 bits}   \\
		&                        & \texttt{NVIDIA GA102 64 bits}\\
		\hline
	\end{tabular}
	\caption{{The details of GPU and CPU resources utilized to record the training time for CRLVQE algorithm elaborated in chapter~\ref{ch:crlvqe}}.}
	\label{tab:gpu-cpu-details}
\end{table}

\end{appendix}


\backmatter


\end{document}